\documentclass[10pt]{article}
\usepackage{amsmath}
\usepackage{amssymb, amsthm}
\usepackage{geometry}
\usepackage{graphicx}
\usepackage{hyperref}
\usepackage{indentfirst}
\usepackage{xcolor}
\usepackage{subfigure}
\usepackage{enumerate}
\usepackage{titlesec}
\usepackage{float}
\usepackage{booktabs}
\usepackage{mathrsfs}

\title{Tensor gradient flow for rod-like liquid crystals from molecular model with closure approximation by quasi-entropy}
\author{Yongyong Cai\footnote{Laboratory of Mathematics and Complex Systems and School of Mathematical Sciences, Beijing Normal University, Beijing 100875, China. Email:yongyong.cai@bnu.edu.cn}, Jie Xu\footnote{SKLMS \& NCMIS, Institute of Computational Mathematics and Scientific/Engineering Computing (ICMSEC), Academy of Mathematics and Systems Science (AMSS), Chinese Academy of Sciences, Beijing, China. Email:xujie@lsec.cc.ac.cn}, Haixin Zhang\footnote{Laboratory of Mathematics and Complex Systems and School of Mathematical Sciences, Beijing Normal University, Beijing 100875, China. Email: haixinzhang@mail.bnu.edu.cn}}
\date{}

\newtheorem{theorem}{Theorem}[section]

\newtheorem{proposition}[theorem]{Proposition}
\newtheorem{corollary}[theorem]{Corollary}
\theoremstyle{definition}

\theoremstyle{remark}

\numberwithin{theorem}{section}
\numberwithin{equation}{section}

\newcommand{\md}{\mathrm{d}}
\newcommand{\m}{\mathbf{m}}
\newcommand{\s}{\mathbf{s}}
\newcommand{\n}{\mathbf{n}}
\newcommand{\bx}{\mathbf{x}}
\newcommand{\be}{\mathbf{e}}
\newcommand{\bq}{\mathbf{q}}
\newcommand{\bu}{\mathbf{u}}
\newcommand{\bbi}{\mathbf{i}}
\newcommand{\bbv}{\mathbf{v}}
\newcommand{\tr}{\operatorname{tr}}

\newcommand{\diag}{\operatorname{diag}}
\newcommand{\A}{\mathcal{A}}
\newcommand{\B}{\mathcal{B}}
\newcommand{\Fourcov}{{E}}
\newcommand{\R}{\mathcal{R}}

\newcommand{\e}{\mathbf{e}}
\newcommand{\equaldef}{\overset{\text{def}}{=}}

\newcommand{\Qset}{\mathscr{Q}_{\mathrm{phys}}}

\newcommand{\fM}{{M}}
\newcommand{\entorig}{{\zeta}}

\allowdisplaybreaks[4]

\begin{document}

\maketitle

\begin{abstract}
  In tensor dynamics for liquid crystals derived from molecular models, a common problem is closure approximation. For rod-like molecules, the Bingham closure has proved to outperform other methods because it inherits the gradient flow structure of the molecular model, but is difficult to achieve efficient computations maintaining the gradient flow structure.
  We propose a closure approximation by the quasi-entropy that has been successfully applied to the free energy, based on which we construct the tensor gradient flow. 
  The quasi-entropy closure has the same symmetry properties as the Bingham closure.
  The resulting tensor gradient flow is able to constrain the eigenvalues of the tensor within the physical range, guaranteeing the positive definiteness of the dissipation operator given by the higher-order tensors.
  The quasi-entropy closure is easy to implement since it can be reduced to minimizing an elementary function of three variables.
  As a result, we construct a numerical scheme preserving the eigenvalue constraints and energy dissipation, with the closure approximation decoupled from solving the scheme. 
  Numerical simulations are carried out for the interface between the isotropic and the uniaxial nematic phase, as well as the defect evolutions, where the higher-order tensors indeed make a difference.

  \textbf{Key words.} Liquid crystal dynamics, molecular-theory-based tensor model, quasi-entropy closure approximation, gradient flow, symmetry, eigenvalue constraints

  \textbf{AMS subject classifications.} 76A15, 82D30
\end{abstract}

\section{Introduction}

The structural characteristic of liquid crystals is the local anisotropy
generated by the nonuniform orientational distribution of rigid, typically
rod-like molecules. The local anisotropy has been expermimentally shown to bring about
distinctive configurations both in stationary states and in dynamical
evolutions, for which we refer to~\cite{
    deGennes_Prost_1993,FPTO2020,SmalI2018,STARK2001387,Wang_Zhang_Zhang_2021} and the references therein.

Theoretically, the dynamical models of liquid crystals are given by the
evolutionary equations of order parameters that represent the local anisotropy,
coupled with the Navier-Stokes equation for the velocity. These equations form
an energy dissipative system. Different types of order parameters lead to
models at different levels. For the most commonly observed uniaxial nematic
phase, it is straightforward to use a unit vector as the order parameter. The
corresponding dynamics of the unit vector field is the Ericksen-Leslie
model~\cite{Ericksen_1961,Leslie_1968}, which can be extended to incorporate
flexibility in local anisotropy but is still limited to
uniaxiality~\cite{Calderer2002,Ericksen1991,EL_s2022}.

If one would like to fully characterize the local anisotropy and reach into
molecular information, a natural choice is to consider the density function
w.r.t. both the position and the orientation. The resulting dynamics is a
kinetic equation, first proposed for spatially homogeneous cases in what is known as the
Doi-Onsager theory~\cite{Marrucci_1987} which has been studied
extensively~\cite{kuzuu_doi_1983,kuzuu_doi_1984}. Several attempts were later made to extend the Doi-Onsager theory to inhomogeneous
cases~\cite{2006EWNZPW,Fengjj2000,JGHYHJZPW2008,Wang_2002,WangEWN2002,Inhomogeneous_Yu_2010,Inhomogeneous_Yu_Zhang_2007}. For these models, the
presence of orientational variables in combination with spatial variables brings
significant computational challenges.

To balance the capability of describing local anisotropy with computational
difficulties, tensor models are proposed.
Tensor order parameters are able to distinguish different types of local anisotropy, so that tensor models can describe various phases and their dynamical behaviors. 
The interaction terms in the tensor models can also be derived from the microscopic level. 
Although the results obtained from tensor models of this type usually do not exactly reflect specific molecular architectures, 
they do reveal the changing trends of macroscopic phenomena as the molecular architecture varying, evidenced by several representative cases \cite{Han_Luo_Wang_Zhang_Zhang_2015,Xu_ye_zhang_2018,Xu_Zhang_2017}. 
Based on the above features, the major points of focus in tensor models are the essential mathematical properties and overall structures, rather than the precision of specific terms. 
For rod-like molecules, the order parameter is chosen as a symmetric traceless tensor $Q$.
Dynamic tensor models can either be written down phenomenologically, such as
Beris-Edwards~\cite{Beris_Edwards_1994} and Qian-Sheng models~\cite{QianSheng1998}, or
be derived from molecular
models~\cite{Han_Luo_Wang_Zhang_Zhang_2015}. The latter
type of models possesses clearer physical meanings in terms of molecular shape and
interactions, and are suitable for molecules with more complex shapes~\cite{Xu_Zhang_2017}. However,
a common problem for models of this type is that the evolutionary equations of the
tensor involve higher-order tensors. It is necessary to supplement function
relations between them and the tensor $Q$ to close the system, which is called
the \emph{closure approximation}. Various closure approximations have been
proposed, including some explicit functions~\cite{Cintra_Tucker_1995,Marrucci_1987, Hinch_Leal_1976}, and the Bingham
closure~\cite{Bingham1974,bingham2} defined through the maximum entropy state.
These closure approximations have been compared
thoroughly~\cite{Feng_Chaubal_Leal_1998}, where the Bingham closure is found to
perform better, which is attributed to the fact that the Bingham closure
maintains the energy dissipation.

To comprehend how higher-order tensors affect the energy dissipation, we
concentrate on the cases where the velocity is small. Under the reasonable
approximation that the velocity is taken to be zero, the molecular models
reduce to an evolutionary equation of the density function, while the tensor
models reduce to an evolutionary equation of $Q$. The equation of the density
function can be written as a gradient flow, which ensures its energy
dissipation. However, it is not always the case for the equation of the tensor
$Q$. It turns out that what underlies the fact that the Bingham closure maintains the
energy dissipation is that it manages to preserve the gradient flow structure while
other closure approximations do not. As we will briefly discuss in Sec. \ref{sec:dynamic_Bingham},
when the Bingham closure is adopted, it introduces a singular entropy term in
the free energy that constrains the eigenvalues of $Q$ within $(-1/3,2/3)$. The
dissipation operator is given by a fourth-order tensor that is negative
definite, also guaranteed by the Bingham closure.

Although the Bingham closure maintains the gradient flow structure at the level
of the tensor model, it defines the higher-order tensor as an implicit function
through the maximum-entropy density function (also called the Bingham distribution)
determined by $Q$. It becomes a significant obstacle in computation, since
implementation according to its definition results in high computational cost.
To tackle this problem, fast algorithms for the computation of Bingham closure
itself have been discussed in several
works~\cite{feng2026fastjacobispectralmethods,TYHJJS2021, KAPSPWA2013,Luo_Xu_Zhang_2018,MSWSSM2024,WHLKZPW2008,WSSMSDB2022}.
Nonetheless, they still cannot fill the gap towards solving the evolution
equations efficiently, especially for spatially inhomogeneous cases.

Since it turns out that whether a tensor model works well relies heavily on the
dissipation structure (but not necessarily on the specific form of the Bingham
closure as we have explained above when introducing tensor models),
it is worth seeking alternative simpler approaches for the closure
approximation without breaking such a structure.
In particular, it is better to avoid involving implicit functions through the
density function and integrals on the unit sphere. Actually, this has
successfully been done for the free energy of general liquid crystals (formed by
rigid molecules with arbitrary shape) derived from molecular theory. The
quasi-entropy, defined as the log-determinant of the covariance matrix, has
been proposed for general rigid molecules to substitute the entropy term given
by the maximum entropy density function (that is the Bingham distribution for rod-like molecules)~\cite{XU2022133308}. It is shown that the quasi-entropy preserves the
essential properties of the original entropy term: strict convexity; barrier
function to constrain the covariance matrix positive definite; rotational
invariance; consistency under symmetry reductions. The free energy with the
quasi-entropy is able to capture the underlying physics correctly in
representative cases. Specially, for rod-like molecules, the stationary
points of the bulk energy are shown to be axisymmetric, which are further
classified to describe the isotropic--uniaxial nematic phase transition. On the
other hand, the fact that the quasi-entropy is an elementary function
spontaneously reduces the computational complexity, so that preliminary
discussions have been done on the design of efficient and accurate numerical
methods~\cite{Wang_Xu_2023}, where the models do not contain higher-order
tensors.

Because the free energy is a core ingredient in the dynamic model, we are now
naturally in a position to discuss molecular-theory-based dynamic tensor models
with the free energy including the quasi-entropy. In this paper, we propose the
closure approximation using the quasi-entropy. The basic idea is similar to the
treatment of the free energy, that is, to substitute the original entropy term
with the quasi-entropy wherever it plays a role.
Sec. \ref{sec:dynamic_quasi} is dedicated to the tensor dynamics incorporating
the quasi-entropy in the free energy and the closure approximation,
which we outline below. 

As we have mentioned, the Bingham closure is done by solving the maximum
entropy density function first, then higher-order tensors are calculated. On the
other hand, the original entropy term in the free energy is also calculated by
the maximum entropy density function. But, such a formulation, with the density
function as an intermediate, is difficult for us to carry out the quasi-entropy
substitution. Thus, we first need to figure out an appropriate formulation for
how the closure approximation is directly defined through the entropy term in
the free energy. Actually, we are able to rewrite the Bingham closure as a
constrained minimization problem. For the higher-order tensors appearing in the
dynamic models, they can all be expressed by $Q$ and a fourth-order symmetric
traceless tensor $Q_4$. We consider the maximum entropy density with both $Q$
and $Q_4$ fixed, to obtain a function of these two tensors. Then, the closure
approximation is equivalently defined by minimizing this function of two tensors
with $Q$ fixed. Thus, the solution to this minimization problem expresses $Q_4$
as a function of $Q$. Using the formulation above, we substitute this function
of $Q$ and $Q_4$ with the fourth-order quasi-entropy of $Q$ and $Q_4$, to
establish the closure approximation by the quasi-entropy.

We shall show that the higher-order tensors obtained from the quasi-entropy closure
approximation maintain several essential properties of the Bingham closure, due to
the properties of the quasi-entropy. The strict convexity of the quasi-entropy
guarantees the existence and uniqueness of the closure, with the higher-order
tensor being a continuous function of $Q$. The rotational invariance of the
quasi-entropy leads to the fact that the higher-order tensor rotates together
with the rotation of $Q$. The fourth-order tensor satisfies the same symmetry
as $Q$ as a result of the consistency in symmetry reduction of the
quasi-entropy. The domain of the quasi-entropy guarantees the negative
definiteness of the dissipation operator. With these properties, the
quasi-entropy closure approximation only requires to solve a convex
minimization on an elementary function of three variables, together with tensor
rotations.

In this way, we arrive at a tensor gradient flow derived from molecular theory
with the quasi-entropy determining several core ingredients. The quasi-entropy
gives a singular term in the free energy that is able to constrain the
eigenvalues of $Q$. This enables us to include in the free energy a cubic elastic term bounded from below (involving spatial derivatives). The gradient
flow is indeed energy dissipative, which is a direct consequence of the
negative definiteness of the dissipation operator given by the quasi-entropy
closure approximation.

After the structures of the tensor gradient flow are sufficiently comprehended,
we go on to consider numerical aspects for the gradient flow in Sec. \ref{sec:numerical-method}, with
emphases on preserving these structures. Since the quasi-entropy closure
approximation is extremely easy to implement, we acquire enough flexibility to
deal with other terms. To be as simple as possible, we propose a first-order-in-time scheme constraining the eigenvalues of $Q$ within $(-1/3,2/3)$
and preserving energy dissipation unconditionally, which is also suitable for the
cubic elastic term.
We carry out numerical simulations in 2D, which, to our knowledge, has not
been done before for tensor models requiring closure approximation. Evolutions of
the isotropic--nematic interface and nematic defects are examined, where
differences from gradient flows without incorporating higher-order tensors are
evident.

Although many works on numerical simulations, both for molecular models and
tensor models, have been done before, in this paper we do not attempt to
compare with previous results. The main reason is that the problem settings
and points of focus are quite different between this work and previous works
(and also between those works themselves). To be specific, the previous works
concentrate more on the effect of fluid velocity that is not discussed in this
work. The interaction between rod-like molecules, deduced from the variational
derivative of the free energy, does not have elastic terms but is usually given
by a spatially dependent kernel function. Moreover, the systems that those works
investigate are not actually energy dissipative because the boundary
conditions they impose give rise to energy inputs. Based on the quasi-entropy
closure approximation proposed in this work, we expect in the near future to
extend it to the coupled system with the Navier-Stokes equations (i.e. not taking the
velocity to be zero as an approximation). Comparisons of numerical results will
be done in these forthcoming works. The quasi-entropy closure approximation for
other rigid molecules is also a problem of interests. We shall give further
concluding remarks on the above aspects in Sec. \ref{concl}.

\section{Preliminaries}
For a system of rod-like molecules, the orientation of a single molecule is
represented by a unit vector $ \m \in \mathbb{S}^2$, whose coordinates in the
reference right-handed frame $(\be_1,\be_2,\be_3)$ are denoted by
$m_i,\,i=1,2,3$. Denote by $\R = \m \times \nabla_{\m}$ the rotational gradient
operator on the unit sphere $\mathbb{S}^2$. When written in the coordinates, it
reads $\R _{i} = \epsilon_{ijk}m_{j}{\partial}/{\partial m_k}$, where
$\epsilon$ is the Levi-Civita symbol, and summations on repeated indices are
adopted throughout this work unless stated otherwise. The uniform unit measure
on $\mathbb{S}^2$ is denoted by $\md\m$. For functions $g_1(\m)$, $g_2(\m)$ on
$\mathbb{S}^{2}$, the integration by parts holds as $\int_{\mathbb{S}^2}g_1\R
    _{i}g_2\, \md\m = -\int_{\mathbb{S}^2}g_2\R _{i}g_1\md\m$.

We introduce some notations and basic results for tensors, largely following
\cite{Xu_2020}. An $ n $th-order tensor $ U $ is expressed in terms of the
basis generated by $\be_i$,
\begin{align*}
    U = U_{i_1i_2\dots i_n} \e_{i_1} \otimes \cdots \otimes \e_{i_n}, \quad i_1, \dots, i_n \in \{1, 2, 3\}.
\end{align*}
The dot product between two tensors of the same order is defined as the sum of the products of the corresponding components,
$
    U \cdot W = U_{i_1\dots i_n} W_{i_1\dots i_n} .
$
In particular, the components of a tensor can be given by
$
    U_{i_1\dots i_n} = U \cdot (\e_{i_1} \otimes \dots \otimes\e_{i_n}).
$
Denote the rotation $\mathfrak{t}\in SO(3)$ on a vector $\bq$ as $\mathfrak{t}\circ\bq$.
If expressed in coordinates, $\mathfrak{t}$ is represented by an orthogonal matrix with determinant one and $\mathfrak{t}\circ \bq$ is exactly given by a conventional matrix-vector product.
The rotation can also be imposed on the tensor $U$ by rotating the vectors $\be_i$ while keeping the components unchanged, i.e.,
\[
    \mathfrak t\circ U= U_{i_1\cdots i_n}\,(\mathfrak{t}\circ\e_{i_1})\otimes\cdots\otimes (\mathfrak{t}\circ\mathbf e_{i_n}).
\]
The dot product is rotationally invariant, $ U_1\cdot U_2 = (\mathfrak{t}\circ
    U_1)\cdot(\mathfrak{t}\circ U_2) $. The integral is invariant both under
rotations and inversions, i.e., $ \int_{\mathbb S^2}g(\m) \, \md\m =
    \int_{\mathbb S^2} g(\mathfrak t\circ \m) \, \md\m =\int_{\mathbb S^2}g(-\m)\,
    \md\m $. In other words, the integral is invariant under arbitrary orthogonal
transformations.

A tensor $U$ is \emph{symmetric} if $U_{i_1 \dots i_n} =
    U_{i_{\sigma(1)} \dots i_{\sigma(n)}}$ for any permutation $ \sigma $ of $ \{1,
    \dots, n\} $. For an $n$th-order tensor $ U $, we define its average over all
index permutations as
\begin{equation*}
    {\left(U_{\mathrm{sym}}\right)}_{i_1i_2\dots i_n} = \frac{1}{n!} \sum_{\sigma} U_{i_{\sigma(1)} i_{\sigma(2)} \dots i_{\sigma(n)}},
\end{equation*}
which is a symmetric tensor. We introduce the monomial notation for symmetric tensors generated by basis vectors of an orthonormal frame $(\n_1,\n_2,\n_3)$,
\begin{align*}
    \n_1^{k_1} \n_2^{k_2} \n_3^{k_3} = {\left( \underbrace{\n_1 \otimes \cdots \otimes \n_1}_{k_1} \otimes \underbrace{\n_2 \otimes \cdots \otimes \n_2}_{k_2} \otimes \underbrace{\n_3 \otimes \cdots \otimes \n_3}_{k_3} \right)}_{\mathrm{sym}}.
\end{align*}
In this way, a homogeneous polynomial of $\n_1,\n_2,\n_3$ represents a symmetric tensor.
Since the second-order identity tensor $ \bbi $ satisfies
$
    \bbi = \n_1^2 + \n_2^2 + \n_3^2,
$
we can define $\n_1^{k_1} \n_2^{k_2} \n_3^{k_3}\bbi^l$ in the same manner.

For an $n$th-order symmetric tensor $W$, its trace is defined as an $(n-2)$th-order tensor ${\tr(W)}_{i_1 \dots i_{n-2}}=W_{i_1 \dots i_{n-2} j j}$. If $\tr
    W$ is the zero tensor, $ W $ is called \emph{symmetric traceless}. For any $n$th-order symmetric tensor $U$, there exists a unique symmetric traceless tensor
${(U)}_0=U-{(\bbi\otimes W)}_{\mathrm{sym}}$ (see, for example, Proposition 3.2
in~\cite{Xu_2020}). We call ${(U)}_0$ the symmetric traceless tensor generated
by $U$. For example, $(\m^2)_0=\m^2-{\bbi}/{3}$ is the symmetric traceless
tensor generated by $\m^2$. The $n$th-order symmetric traceless tensors form a
space of dimension $2n+1$. For the second-order symmetric traceless
tensors, an orthogonal basis can be chosen as
\begin{equation}
    \label{eq:the basis of symmetric traceless tensors}
    \s_1 = \n_1^2 - \frac{\bbi}{3},\,\s_2 = \n_2^2 - \n_3^2,\,\s_3 = \n_1\n_2,\,\s_4 = \n_1\n_3,\,\s_5 = \n_2\n_3.
\end{equation}

The orientation distribution at the position $\bx$ is described by a density
function $f\left(\bx,\m\right)>0$ satisfying the normalization condition $
    \int_{\mathbb{S}^2} f\,\md \m = 1$. In tensor models, the order parameter
is chosen as the symmetric traceless tensor generated by the second moment,
defined as $ Q(\bx) = \langle (\m^2)_0\rangle$, where the notation $\langle
    \cdot \rangle$ represents averaging over $\mathbb{S}^2$ w.r.t.  the
density $ f(\bx,\m)$. When necessary, we use the notation
$\langle\cdot\rangle_{f}$ to specify over which density function $f$ the
average is taken.
It is also natural to regard a second-order tensor as a $3\times 3$ matrix, so
that notions for a matrix, such as eigenvalues and eigenvectors, can be
adopted. The definition of $Q$ implies that the eigenvalues $ \lambda(Q) $ lie
within the open interval $ \left(-{1}/{3}, {2}/{3}\right) $. We denote by
$\Qset$ the second-order symmetric traceless tensors satisfying the above
eigenvalue constraints,
\begin{align*}
    \Qset = \left \{ Q : Q_{ij} = Q_{ji}, \, Q_{ii} = 0, \, \lambda(Q) \in \left(-{1}/{3}, {2}/{3}\right) \right \} .
\end{align*}

\section{Tensor dynamics from kinetic equation with the Bingham closure}\label{sec:dynamic_Bingham}
Our starting point is the kinetic equation of the density function $f(\bx,\m,t)$,
which includes both spatial and orientational convection and diffusion terms.
The spatial diffusion term is often omitted because it can be easily recognized
as a higher-order infinitesimal under rescaling. As a result, the equation is
written in the following form,
\begin{equation}
    \label{eq:the molecular model of Smoluchowski equation}
    \frac{\partial f}{\partial t}+\nabla\cdot(\bbv f)=\R \cdot\bigl(\R f+f\R \mu_{\mathrm{r}}\bigr)+\R \cdot\bigl((\m\cdot\nabla)\bbv \times\m f\bigr),
\end{equation}
where $\bbv (\bx,t)$ is the fluid velocity, $\mu_{\mathrm{r}}=\delta F_{\mathrm{r}}/\delta f$ is the interaction potential given by the variational derivative of the interaction energy $F_{\mathrm{r}}$.
For most cases, the interaction energy $F_{\mathrm{r}}$ is a functional of $Q$.
Consequently, the interaction potential $\mu_{\mathrm{r}}$ can be expressed as
\begin{equation}
    \label{eq:mean-field potential}
    \mu_{\mathrm{r}} = \frac{\delta F_{\mathrm{r}}}{\delta f} = \frac{\delta F_{r}}{\delta Q}\cdot\frac{\delta Q}{\delta f}=V_Q\cdot{\left(\m^2\right)}_0,
\end{equation}
where $V_Q=\delta F_{\mathrm{r}}/\delta Q$ depends solely on $Q$.
The velocity $\bbv$ is either prescribed (such as steady shear flows studied extensively \cite{MAMB2005,RIENACKER1999294,Risen2002,Inhomogeneous_Yu_2010,Inhomogeneous_Yu_Zhang_2007}), or obeys a Navier-Stokes equation, which is not written down here since it is beyond the focus of this work.

The dynamic tensor model can be derived from \eqref{eq:the molecular model of
    Smoluchowski equation} by multiplying it with ${\left(\m^2\right)}_0$ and
integrating on $\mathbb{S}^2$, leading to an equation of $Q$,
\begin{equation}
    \label{eq:general equation of tensor model}
    \begin{aligned}
        \frac{\partial Q_{ij}}{\partial t} + \bbv _k\partial_{k} Q_{ij}=
         & -6Q_{ij}-\bigl(2Q_{ik}{(V_Q)}_{kj}+2Q_{jk}{(V_Q)}_{ki}+\frac{4}{3}{(V_Q)}_{ij}-4\langle\m^4\rangle_{ijkl}{(V_Q)}_{kl}\bigr)                                          \\
         & +Q_{ik}\partial_{k}\bbv _j +Q_{jk}\partial_{k}\bbv _i +\frac{1}{3}\bigl(\partial_i\bbv _{j}+\partial_j\bbv _{i}\bigr)-2\langle\m^4\rangle_{ijkl}\partial_{k}\bbv _l.
    \end{aligned}
\end{equation}
Notably, the fourth-order moment $\langle\m^4\rangle$ appears on the right-hand side. It is then necessary to supplement a \emph{closure approximation}, i.e. to express the fourth moment as a function of $Q$. It is worth pointing out that the fourth moment also appears in the Navier-Stokes equation, if it is included as part of the whole system.

It is thus clear that the closure approximation is an essential constituent of
the tensor model deduced from the kinetic equation. As we have mentioned,
various closure approximations have been proposed and the Bingham closure performs
better, which is deemed owing to the maintenance of dissipation structure. Let
us figure out below the dissipation structure and other significant properties
resulting from the Bingham closure.

Since the dissipation structure mainly comes from the diffusion term, we
consider in what follows the case of small velocity and set $\bbv=0$ that
eliminates the convection terms.

Under this assumption, only the diffusion term remains in \eqref{eq:general
    equation of tensor model},
\begin{equation}
    \label{eq:the molecular model of Smoluchowski equation with zero velocity}
    \frac{\partial f}{\partial t}=\R \cdot\bigl(\R f+f\R \mu_{\mathrm{r}}\bigr).
\end{equation}
Noticing that $\R f=f\R (1+\log \! f)$, we rewrite $\R f+f\R \mu_{\mathrm{r}}=f\R \mu$, where $\mu=\delta F/\delta f$ is the variational derivative of the free energy $F$, given by the sum of the interaction energy $F_{\mathrm{r}}$ and the entropy term,
\begin{equation}
    \label{eq:free energy of molecular}
    F=\int \, f\log f\,\md\bx\md\m+F_{\mathrm{r}}[f].
\end{equation}
When $f>0$, \eqref{eq:the molecular model of Smoluchowski equation with zero
    velocity} is a gradient flow of the energy $F$ with the dissipation operator
$\R \cdot\bigl(f\R (\cdot)\bigr)$. Therefore, assuming that the boundary terms
vanish in integration by parts, we deduce the energy dissipation law,
\begin{equation*}
    \frac{\md F}{\md t} = -\int f \left|\R \mu\right|^2\,\md\bx\md\m  \leq 0.
\end{equation*}

Now let us turn to the equation \eqref{eq:general equation of tensor model} of
the tensor $Q$ and assume $\bbv=0$. Define a fourth-order tensor $\fM$,
\begin{equation}
    \label{eq:the structure of dissipative operator M}
    \begin{aligned}
        \fM _{iji'j'}
         & =\langle\R _k{(\m^2)}_0\otimes\R _k{(\m^2)}_0\rangle_{iji'j'} = 4\mathcal{H}{\left(\langle\m^2\otimes(\bbi-\m^2)\rangle\right)}_{iji'j'}             \\
         & = -4\langle\m^4\rangle_{iji'j'} + 4\mathcal{H}{\left(Q\otimes \bbi\right)}_{iji'j'} + \frac{4}{3}\mathcal{H}{\left(\bbi\otimes\bbi\right)}_{iji'j'},
    \end{aligned}
\end{equation}
where ${\mathcal{H}(U)}_{ijkl} = (U_{ikjl}+U_{iljk}+U_{jkil}+U_{jlik})/4$.
For a second-order tensor $W$, denote in short by $MW$ the second-order tensor given by $(MW)_{ij}=M_{iji'j'}W_{i'j'}$.
The equation of the tensor then becomes
\begin{equation}
    \label{eq:tensor model with Bingham closure}
    \frac{\partial Q}{\partial t} = -6Q - \fM V_Q.
\end{equation}
The above form itself does not imply that the equation is a gradient flow.

When the Bingham closure is adopted, \eqref{eq:tensor model with Bingham
    closure} can be rewritten in an equivalent form with the gradient flow
structure, which we explain below. The Bingham closure is defined by solving
the maximum entropy state from $Q$, i.e.,
\begin{equation}\label{eq:the Bingham closure}
    \min_{f} \; \int_{\mathbb{S}^2} f(\m)\log f(\m)\,\md\m
    \qquad \text{s. t.} \qquad
    \int_{\mathbb{S}^2} {(\m^2)}_0 f(\m)\,\md\m = Q,
    \quad
    \int_{\mathbb{S}^2} f(\m)\,\md\m = 1.
\end{equation}
The maximum entropy state is sometimes referred to as the Bingham distribution, denoted by $f_Q$.
The existence and uniqueness of $f_Q$ have been established (see, e.g.,~\cite{Ball_Majumdar_2010,LSRWWZPW2015,Xu_2020,Xu_ye_zhang_2018}).
Together with symmetry properties of $f_Q$, the results are summarized in the following.

\begin{proposition}\label{pro:the existence and uniqueness of the Bingham closure}
    For any $Q\in \Qset$, \eqref{eq:the Bingham closure} admits a unique solution of the form
    \begin{equation}\label{eq:the maximum entropy state}
        f_Q(\m)= \frac{1}{Z_Q}\exp\bigl(B_Q \cdot {(\m^2)}_0\bigr),
        \qquad
        Z_Q = \int_{\mathbb{S}^2}
        \exp\bigl(B_Q \cdot {(\m^2)}_0\bigr)\,\md\m ,
    \end{equation}
    where $B_Q$ is a second-order symmetric traceless tensor. The density $f_Q$ satisfies the following properties:
    \begin{enumerate}
        \item $f_{Q}(\m)=f_{Q}(-\m)$, $f_{\mathfrak{t}\circ Q}(\m) = f_{Q}(\mathfrak{t}^{-1}\circ \m)$ for $\mathfrak{t}\in SO(3)$.
        \item Suppose $Q$ is expressed using its eigenvectors as
              \begin{equation}
                  \label{eq:the spectral decomposition of Q}
                  Q = s\s_1+b\s_2 = s\left(\n_1^2-\bbi/3\right) + b\left(\n_2^2-\n_3^2\right), \quad (\n_1,\n_2,\n_3)\in SO(3),
              \end{equation}
              where $s$ and $b$ are scalars.
              Then, $B_Q$ also has the same eigenframe $(\n_1,\n_2,\n_3)$, and $f_{Q}(\m) = f_{Q}(\m - 2(\m\cdot\n_{i})\n_{i}),\, i=1,2,3$.
              When $Q$ is uniaxial, i.e. $b=0$, it holds that $B_Q$ is also uniaxial, and $f_{Q}(\m) = \widetilde{f}_{Q}\bigl({\left(\m\cdot\n_1\right)}^2\bigr)$ is a function of $(\m\cdot\n_1)^2$.
    \end{enumerate}
\end{proposition}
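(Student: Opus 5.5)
The plan is to treat existence and uniqueness through the convex dual of \eqref{eq:the Bingham closure}, and then read off every symmetry property from uniqueness.

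\textbf{Existence and uniqueness.} Define the dual function
\[
\Phi(B)=\log Z(B)-B\cdot Q,\qquad Z(B)=\int_{\mathbb{S}^2}\exp\bigl(B\cdot(\m^2)_0\bigr)\,\md\m,
\]
on symmetric traceless $B$.
\begin{enumerate}
\item $\log Z$ is strictly convex. Its Hessian is the covariance of $(\m^2)_0$ under the exponential density, and this covariance is nondegenerate because no nonzero traceless $W$ makes $W\cdot(\m^2)_0$ constant on $\mathbb{S}^2$.
\item $\Phi$ is coercive exactly when $Q\in\Qset$. Fix a direction $B=\tau W$ with $|W|=1$ and let $\tau\to\infty$. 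Then $\tau^{-1}\log Z(\tau W)\to\max_{\m}W\cdot(\m^2)_0$, the largest eigenvalue of $W$ (shifted appropriately). The condition $\lambda(Q)\in(-1/3,2/3)$ says that $Q$ is an interior point of the convex hull of $\{(\m^2)_0:\m\in\mathbb{S}^2\}$. Hence $W\cdot Q<\max_{\m}W\cdot(\m^2)_0$, so $\Phi(\tau W)\to\infty$ in every direction.
\item By strict convexity and coercivity, $\Phi$ has a unique minimizer $B_Q$. Its Euler--Lagrange equation $\langle(\m^2)_0\rangle_{f_Q}=Q$ shows that $f_Q$ of the form \eqref{eq:the maximum entropy state} is feasible.
\item $f_Q$ is the unique primal minimizer by the Gibbs inequality. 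For any feasible $f$,
\[
\int f\log f-\int f_Q\log f_Q=\int f\log(f/f_Q)\ge 0,
\]
because $\int(f-f_Q)\log f_Q=0$ (here $\log f_Q$ is affine in $(\m^2)_0$ and the moments agree). Equality holds only when $f=f_Q$.
\end{enumerate}

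\textbf{Symmetries via uniqueness.} For any orthogonal map $g$ with $g\circ Q=Q$, the density $f_Q(g^{-1}\circ\m)$ is again feasible for $Q$, since the integral is invariant under orthogonal maps. It also has the same entropy, so uniqueness forces $f_Q(g^{-1}\circ\m)=f_Q(\m)$. This argument gives each property in turn:
\begin{itemize}
\item \emph{Inversion.} Take $g=-\mathrm{id}$. Evenness is in any case evident from the quadratic exponent.
\item \emph{Rotation.} For $\mathfrak t\in SO(3)$, the density $f_Q(\mathfrak t^{-1}\circ\m)$ has moment $\mathfrak t\circ Q$ and the same entropy. Uniqueness gives $f_{\mathfrak t\circ Q}(\m)=f_Q(\mathfrak t^{-1}\circ\m)$ and $B_{\mathfrak t\circ Q}=\mathfrak t\circ B_Q$.
\item \emph{Reflections.} The reflection $\m\mapsto\m-2(\m\cdot\n_i)\n_i$ fixes $Q$ written as in \eqref{eq:the spectral decomposition of Q}, so it gives the claimed reflection symmetries of $f_Q$.
\item \emph{Common eigenframe.} From $B_{g\circ Q}=g\circ B_Q$ for these reflections, $B_Q$ commutes with each reflection. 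Hence $B_Q$ is diagonal in $(\n_1,\n_2,\n_3)$.
\item \emph{Uniaxial case.} When $b=0$, every rotation about $\n_1$ fixes $Q$. Then $B_Q$ is invariant under all of them, so $B_Q=\tilde s(\n_1^2-\bbi/3)$. Consequently $B_Q\cdot(\m^2)_0=\tilde s\bigl((\m\cdot\n_1)^2-1/3\bigr)$, which is a function of $(\m\cdot\n_1)^2$.
\end{itemize}

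\textbf{Main obstacle.} The expected difficulty is coercivity of the dual, i.e. identifying $\Qset$ with the interior of the convex hull of $\{(\m^2)_0\}$. To prove it, diagonalize $W$ as $\diag(w_1,w_2,w_3)$ with $\sum_i w_i=0$. Then $W\cdot(\m^2)_0=\sum_i w_i m_i^2$, whose maximum over the sphere is $\max_i w_i$. Since $W$ is traceless, $W\cdot Q=\sum_i w_i(Q_{ii}+1/3)$, and the weights $Q_{ii}+1/3$ are positive and sum to one. This expresses $W\cdot Q$ as a strict convex combination of $w_1,w_2,w_3$, so $W\cdot Q<\max_i w_i$ unless all $w_i$ are equal, which forces $W=0$. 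The same bound also yields a uniform coercivity constant over the compact set $|W|=1$, by continuity.
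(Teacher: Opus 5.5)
Your argument is correct, but for most of the statement it takes a different route from the paper's.

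\textbf{What the paper does.} It does not prove existence and uniqueness; it cites the literature. It treats property~1 essentially as you do: transport $f_Q$ by $\mathfrak t$ and invoke uniqueness. For property~2 it introduces the relaxed problem \eqref{minprob_diag}, which keeps only the constraints along $\s_1,\s_2$. Its minimizer $\widehat f\propto\exp\bigl(\eta_1(\m^2)_0\cdot\s_1+\eta_2(\m^2)_0\cdot\s_2\bigr)$ is visibly invariant under the reflections $\m\mapsto\m-2(\m\cdot\n_i)\n_i$. So its off-diagonal second moments vanish, it is feasible for the full problem, and it therefore equals $f_Q$. The uniaxial case repeats this with $\eta_2=0$ and the swap $\n_2\leftrightarrow\n_3$.

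\textbf{What you do.} You prove existence and uniqueness directly from the dual function $\log Z(B)-B\cdot Q$ and the Gibbs inequality. Your reduction of coercivity to positive definiteness of $Q+\bbi/3$ is correct. You then get every symmetry at once from equivariance: $f_Q$, and by uniqueness of the dual minimizer also $B_Q$, are fixed by the stabilizer of $Q$ in $O(3)$. This is the same ``uniqueness plus symmetric constraint set'' mechanism the paper later uses for $Q_4$ in Theorem~\ref{th:q4 about n axis}.

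\textbf{What each route buys.} Yours is self-contained and fills in what the paper leaves implicit. Your Gibbs step shows that any feasible density of exponential form is the minimizer, which the paper's rotation argument relies on. Your duality argument, restricted to $\operatorname{span}\{\s_1,\s_2\}$, is exactly what would guarantee that the relaxed minimizer $\widehat f$ exists. The paper's route instead shows $B_Q\in\operatorname{span}\{\s_1,\s_2\}$ constructively, by solving a problem with fewer unknowns. That reduction is the one it later exploits computationally.

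\textbf{Cosmetic points.} In the rotation step, ``same entropy'' should be paired with the fact that $h\mapsto h(\mathfrak t\circ\cdot)$ is a bijection between the feasible sets of $\mathfrak t\circ Q$ and $Q$, or with your Gibbs step. You derived $B_{g\circ Q}=g\circ B_Q$ only for rotations but apply it to reflections; it follows by the identical argument, or from the $\pi$-rotations about $\n_i$, which act on second-order tensors as the reflections do. No shift is needed in the coercivity limit, since $W\cdot(\m^2)_0=W\cdot\m^2$ for traceless $W$.
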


\begin{proof}
    For the existence and uniqueness, we refer to the proofs given in the literature \cite{Ball_Majumdar_2010,LSRWWZPW2015,Xu_2020,Xu_ye_zhang_2018}.

    For the first property, $f_{Q}(\m)=f_{Q}(-\m)$ follows immediately from
    \eqref{eq:the maximum entropy state}. The rotational invariance of integral and
    dot product yields
    \begin{equation*}
        \mathfrak{t}\circ Q = \int_{\mathbb{S}^2} {(\mathfrak{t}\circ \m)}^2_0 f_{Q}(\m)\md\m = \int_{\mathbb{S}^2} {(\m^2)}_0 f_{Q}(\mathfrak{t}^{-1}\circ\m)\md\m.
    \end{equation*}
    Note that $f_{Q}(\mathfrak{t}^{-1}\circ\m)=\exp\big((\mathfrak{t}\circ B_Q)\cdot(\m)^2_0\big)/Z_Q$ also takes the form \eqref{eq:the maximum entropy state}.
    Together with the uniqueness, we conclude that $f_{\mathfrak{t}\circ Q}(\m) = f_{Q}(\mathfrak{t}^{-1}\circ \m)$.

    For the second property, we consider the following minimization problem,
    \begin{align}\label{minprob_diag}
        \min_{f} \; \int_{\mathbb{S}^2} f(\m)\log f(\m)\,\md\m
        \quad\text{s. t.}
        \quad
        \int_{\mathbb{S}^2} {(\m^2)_0\cdot\s_i} f(\m)\,\md\m = Q\cdot\s_i,\,i=1,2,\quad  \int_{\mathbb{S}^2} f(\m)\,\md\m = 1.
    \end{align}
    The solution takes the form,
    \begin{equation}\label{density_diag}
        \widehat{f}(\m) = \frac{1}{\widehat{Z}}\exp\bigl(\eta_{1}\left(\m^2\right)_{0} \cdot (\n_1^2-\bbi/3) + \eta_{2}\left(\m^2\right)_{0} \cdot (\n_2^2-\n_3^2)\bigr).
    \end{equation}
    Writing
    \begin{equation}\label{coor_eig}
        \m=(\m\cdot\n_1)\n_1+(\m\cdot\n_2)\n_2+(\m\cdot\n_3)\n_3,
    \end{equation}
    we have
    \begin{equation}\nonumber
        \langle \m^2\rangle_{\widehat{f}}=\langle(\m\cdot\n_i)(\m\cdot\n_j)\rangle_{\widehat{f}}\n_i\otimes\n_j.
    \end{equation}
    It follows from \eqref{density_diag} that, for the orthogonal transformation $\m\mapsto \m - 2(\m\cdot\n_{3})\n_{3}$, $\widehat{f}(\m) = \widehat{f}(\m - 2(\m\cdot\n_{3})\n_{3})=\widehat{f}((\m\cdot\n_1)\n_1+(\m\cdot\n_2)\n_2-(\m\cdot\n_3)\n_3)$.
    Therefore, when $i\ne j$, $\langle(\m\cdot\n_i)(\m\cdot\n_j)\rangle_{\widehat{f}}=0$.
    The fact that $Q$ is traceless leads to $\langle(\m^2)_0\rangle_{\widehat{f}} = Q$.
    Again, the uniqueness implies $f_Q=\widehat{f}$.
    The case where $Q$ is uniaxial can be similarly shown by eliminating the constraint on $Q\cdot\s_2$ in \eqref{minprob_diag}.
    It makes $\eta_2=0$ in the solution \eqref{density_diag}.
    As a result, $\widehat{f}$ does not change under the orthogonal transformation $\m\mapsto (\m\cdot\n_1)\n_1+(\m\cdot\n_3)\n_2+(\m\cdot\n_2)\n_3$.
    Thus, we deduce that $\langle(\m\cdot\n_2)^2\rangle_{\widehat{f}}=\langle(\m\cdot\n_3)^2\rangle_{\widehat{f}}$.
    Together with the traceless condition, we obtain $\langle(\m^2)_0\rangle_{\widehat{f}}=s(\n_1^2)_0$, and finally arrive at $f_Q=\widehat{f}=\widetilde{f}_Q\bigl((\m\cdot\n_1)^2\bigr)$ for some \(\widetilde{f}_Q\).
\end{proof}

The fourth-order tensor $\fM $ is then calculated from $f_Q$. Since $f_Q$ is
uniquely determined by $Q$, the tensor $\fM$ is expressed as a function of $Q$.
We denote by $\fM ^{\mathrm{Bin}}$ the tensor $\fM$ determined from the Bingham
closure.

Armed with the density $f_Q$, we define its entropy $\entorig$, which is a
function of $Q$,
\begin{equation}
    \label{eq:the free energy of the tensor model with Bingham closure}
    \entorig(Q) = \int_{\mathbb{S}^2} f_Q(\m)\log f_Q(\m)\md\m = B_Q \cdot Q - \log Z_Q.
\end{equation}

\begin{proposition}\label{th:the Bingham closure of gradient flow}
    The tensor model \eqref{eq:tensor model with Bingham closure} together with the Bingham closure can be written as a gradient flow,
    \begin{equation}
        \label{eq:gradient flow of Bingham closure}
        \begin{aligned}
             & \frac{\partial Q}{\partial t}=-\fM ^{\mathrm{Bin}}\mu_Q,\quad \mu_Q=\frac{\delta F}{\delta Q},\quad F = \int  \entorig(Q)\md\bx + F_{\mathrm{r}}[Q].
        \end{aligned}
    \end{equation}
    When $\lambda(Q)\in (-1/3,2/3)$, the fourth-order tensor $\fM^{\mathrm{Bin}}$ is positive definite in the sense that $W\cdot \fM^{\mathrm{Bin}} W> 0$ for any nonzero second-order symmetric traceless tensor
    $W$. Therefore, the following energy dissipation law holds,
    \begin{equation}
        \frac{\md F}{\md t}=-\int \mu_Q\cdot \fM ^{\mathrm{Bin}}\mu_Q\,\md\bx\leq 0.
    \end{equation}
\end{proposition}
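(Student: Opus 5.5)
The plan has three parts. We compute the derivative of the entropy $\entorig$ with respect to $Q$ and use it to rewrite $-6Q$ as $-\fM^{\mathrm{Bin}}$ applied to that derivative. Then we prove positive definiteness of $\fM^{\mathrm{Bin}}$ from its integral representation. The dissipation law follows by the chain rule.

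\textbf{Step 1: derivative of the entropy.} Differentiating $\entorig(Q)=B_Q\cdot Q-\log Z_Q$ with respect to $Q$, the terms involving the variation of $B_Q$ cancel. This is because $\partial_{B}\log Z = \langle (\m^2)_0\rangle_{f_Q} = Q$, which is the constraint in \eqref{eq:the Bingham closure}. Hence $\partial\entorig/\partial Q = B_Q$, understood on symmetric traceless tensors. The differentiability of $Q\mapsto B_Q$ is taken from the implicit function theorem: the Jacobian $\partial Q/\partial B$ is the covariance of $(\m^2)_0$ under $f_Q$, which is positive definite. Consequently $\mu_Q = B_Q + V_Q$, since $\delta F_{\mathrm r}/\delta Q = V_Q$.

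\textbf{Step 2: the identity $\fM^{\mathrm{Bin}}B_Q = 6Q$.} By the form \eqref{eq:the maximum entropy state}, $\R f_Q = f_Q\R\bigl(B_Q\cdot(\m^2)_0\bigr)$. For any symmetric traceless $W$ we have $W\cdot \fM W = \langle |\R (W\cdot(\m^2)_0)|^2\rangle$ by the first expression in \eqref{eq:the structure of dissipative operator M}. Polarizing this and then integrating by parts on $\mathbb{S}^2$ gives
\[
W\cdot \fM^{\mathrm{Bin}} B_Q=\int \R\bigl(W\cdot(\m^2)_0\bigr)\cdot\R f_Q\,\md\m=-\int f_Q\,\R\cdot\R\bigl(W\cdot(\m^2)_0\bigr)\md\m .
\]
Here $\R\cdot\R$ is the Laplace--Beltrami operator on $\mathbb{S}^2$. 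Since $W\cdot(\m^2)_0$ is a degree-two spherical harmonic, $\R\cdot\R\bigl(W\cdot(\m^2)_0\bigr)=-6\,W\cdot(\m^2)_0$. The right-hand side therefore equals $6W\cdot Q$. As $W$ is arbitrary, $\fM^{\mathrm{Bin}}B_Q=6Q$. Substituting into \eqref{eq:tensor model with Bingham closure} gives
\[
\partial_t Q = -\fM^{\mathrm{Bin}}(B_Q+V_Q) = -\fM^{\mathrm{Bin}}\mu_Q ,
\]
which is \eqref{eq:gradient flow of Bingham closure}.

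\textbf{Step 3: positive definiteness.} We use $W\cdot\fM^{\mathrm{Bin}}W=\int f_Q|\R g|^2\md\m$ with $g=W\cdot(\m^2)_0$. This is nonnegative, and it vanishes only if $\R g\equiv0$ on the support of $f_Q$. Since $f_Q>0$ everywhere, vanishing forces $g$ to be constant on $\mathbb{S}^2$. A nonzero degree-two harmonic cannot be constant, because it has zero mean but does not vanish identically. Hence $W=0$. The hypothesis $\lambda(Q)\in(-1/3,2/3)$ enters only through the existence of $f_Q$ with $B_Q$ finite, which is Proposition~\ref{pro:the existence and uniqueness of the Bingham closure}.

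\textbf{Step 4: dissipation law.} We compute
\[
\frac{\md F}{\md t}=\int\mu_Q\cdot\partial_tQ\,\md\bx=-\int\mu_Q\cdot\fM^{\mathrm{Bin}}\mu_Q\,\md\bx\le0,
\]
assuming the boundary terms vanish and using that $\mu_Q$ is symmetric traceless. The main obstacle is Step 2: one must verify that the several equivalent forms of $\fM$ in \eqref{eq:the structure of dissipative operator M} agree, and check the eigenvalue $-6$ under the normalization of $\R$. A second point to check is that the differentiation in Step 1 is legitimate on the traceless subspace.
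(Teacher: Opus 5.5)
Your proposal is correct and follows essentially the same route as the paper: the cancellation giving $\partial\entorig/\partial Q=B_Q$ (so $\mu_Q=B_Q+V_Q$), the identity $\fM^{\mathrm{Bin}}B_Q=6Q$ obtained from $\R f_Q=f_Q\R\bigl(B_Q\cdot(\m^2)_0\bigr)$, integration by parts and $\R\cdot\R(\m^2)_0=-6(\m^2)_0$, and positive definiteness from $f_Q>0$. Your additional justifications fill in details the paper leaves implicit: the implicit function theorem for the differentiability of $Q\mapsto B_Q$, and the argument that $\R g\equiv 0$ forces $g$ to be constant and hence $W=0$, since a degree-two harmonic has zero mean.
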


\begin{proof}
    From the definition of $F$ and $Q=(1/Z_Q)\partial Z_Q/\partial B_Q=\partial (\log Z_Q)/\partial B_Q$, we obtain
    \begin{equation*}
        \mu_Q = \frac{\partial \entorig}{\partial Q} + V_Q = B_Q+\frac{\partial B_Q}{\partial Q}\cdot Q -\frac{\partial \log Z_Q}{\partial B_Q}\cdot\frac{\partial B_Q}{\partial Q} + V_Q = B_Q + V_Q.
    \end{equation*}
    From the fact that $\R \cdot\R (\m^2)_0=-6(\m^2)_0$, we deduce from integration by parts and an identity similar to the equation of $f(\bx,\m)$, i.e. $\R f_Q=f_Q\R (\log f_Q)=f_Q\R (B_Q\cdot (\m^2)_0)$, that
    \begin{equation}
        \begin{aligned}
            6Q & =-\int_{\mathbb{S}^2}f_{Q}\R _i(\R _i(\m^2)_0)\,\md\m
            =\int_{\mathbb{S}^2}\R _if_{Q}\R _i(\m^2)_0 \,\md\m                                                               \\
               & =\int_{\mathbb{S}^2}f_{Q}\left( \R _i\left[{\left(\m^2\right)}_0\cdot B_Q\right]\right)\R _i(\m^2)_0 \,\md\m
            =\fM ^{\mathrm{Bin}} B_Q.
        \end{aligned}
    \end{equation}
    Therefore,
    \begin{equation*}
        -6Q - \fM ^{\mathrm{Bin}} V_Q = -\fM ^{\mathrm{Bin}} (B_Q + V_Q) = -\fM ^{\mathrm{Bin}} \mu_Q.
    \end{equation*}
    This shows that the tensor model~\eqref{eq:general equation of tensor model},
    closed using the Bingham closure, can be reformulated as the gradient flow~\eqref{eq:gradient flow of Bingham closure}.
    Using $\fM^{\mathrm{Bin}}=\langle\R _i{(\m^2)}_0\otimes\R _i{(\m^2)}_0\rangle$, we deduce that
    \[
        \begin{aligned}
            W\cdot \fM ^{\mathrm{Bin}}W
            = \int_{\mathbb{S}^2}
            f_Q(\m) \sum_{i=1}^3(W\cdot\R _i(\m^2)_0)^2\md\m\ge 0.
        \end{aligned}
    \]
    Since $f_Q>0$, the equality holds only if $W\cdot \R _{i}(\m^2)_0=0, \, i=1,2,3$ for any $\m\in
        \mathbb{S}^2$, yielding $W=0$. The energy dissipation law follows directly from
    the positive definiteness of $\fM^{\mathrm{Bin}}$.
\end{proof}

For the closure approximation to be successfully carried out, one requires
$\lambda(Q)\in (-1/3,2/3)$ in the equation of $Q$. This can actually be
constrained by the term $\entorig(Q)$ in the free energy, which is a barrier
function. The properties of $\entorig(Q)$ are also discussed in the literature
(cf.~\cite{Ball_Majumdar_2010,LSRWWZPW2015,XU2022133308}), which are summarized below.

\begin{proposition}\label{pro:the properties of the original entropy}
    \begin{enumerate}
        \item $\entorig(Q)$ is strictly convex on $\Qset$ with the unique minimizer $Q = 0$.
        \item $\entorig$ is rotationally invariant: $\entorig(Q) = \entorig(\mathfrak{t}\circ Q)$, for any $\mathfrak{t}\in SO(3)$.
        \item $\entorig$ is a barrier function in the sense that $\lim_{\lambda(Q)\to (-1/3)^+,{(2/3)}^-}\entorig(Q)=+\infty$.
    \end{enumerate}
\end{proposition}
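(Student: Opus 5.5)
The plan is to exploit the convex-duality structure of $\entorig$. Define the log-partition function $\Lambda(B)=\log\int_{\mathbb{S}^2}\exp\bigl(B\cdot(\m^2)_0\bigr)\md\m$ on the five-dimensional space of symmetric traceless tensors. By Proposition~\ref{pro:the existence and uniqueness of the Bingham closure}, $Q=\partial\Lambda/\partial B$ at $B=B_Q$, and \eqref{eq:the free energy of the tensor model with Bingham closure} gives $\entorig(Q)=B_Q\cdot Q-\Lambda(B_Q)$. Thus $\entorig$ is the Legendre transform $\Lambda^*$ restricted to $\Qset$.

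\textbf{Step 1: strict convexity and the minimizer.} The Hessian of $\Lambda$ is the covariance $\langle (W\cdot(\m^2)_0)^2\rangle_{f}-\langle W\cdot(\m^2)_0\rangle_f^2$. This is positive for every nonzero traceless $W$, because $W\cdot(\m^2)_0$ is a nonconstant polynomial on $\mathbb{S}^2$ and $f>0$. Hence $\Lambda$ is strictly convex and smooth, and $B\mapsto\nabla\Lambda(B)$ is a diffeomorphism onto its image, which existence and uniqueness identify with $\Qset$. Differentiating as in the proof of Proposition~\ref{th:the Bingham closure of gradient flow} gives $\partial\entorig/\partial Q=B_Q$. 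Then $\partial^2\entorig/\partial Q^2=(\nabla^2\Lambda)^{-1}$ is positive definite, so $\entorig$ is strictly convex on the convex set $\Qset$. A convex function's critical point is its unique minimizer. The critical point solves $B_Q=0$, i.e.\ $f$ is uniform, which gives $Q=\langle(\m^2)_0\rangle_{\mathrm{unif}}=0$. Alternatively, Jensen's inequality directly gives $\entorig(Q)\ge 0=\entorig(0)$.

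\textbf{Step 2: rotational invariance.} By property 1 of Proposition~\ref{pro:the existence and uniqueness of the Bingham closure}, $f_{\mathfrak t\circ Q}(\m)=f_Q(\mathfrak t^{-1}\circ\m)$. Invariance of the integral under rotation then gives $\int f_{\mathfrak t\circ Q}\log f_{\mathfrak t\circ Q}\,\md\m=\int f_Q\log f_Q\,\md\m$.

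\textbf{Step 3: barrier property.} I expect this to be the main obstacle. By Step 2 and property 2 of Proposition~\ref{pro:the existence and uniqueness of the Bingham closure}, we may take $Q$ and $B_Q$ diagonal in a common frame. Write $\langle(\m\cdot\n_i)^2\rangle=\lambda_i+1/3$, so some $\lambda_i+1/3\to0$ or $\to1$. The latter forces the other two to tend to $0$, so it suffices to treat $\langle(\m\cdot\n_3)^2\rangle_{f_Q}=\epsilon\to0$.

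I would avoid asymptotics of $Z_Q$ and use a direct entropy lower bound instead: for any density $f$ with $\langle(\m\cdot\n_3)^2\rangle_f=\epsilon$, show $\int f\log f\ge c\log(1/\epsilon)-C$. To do this, split the sphere into the band $\{|\m\cdot\n_3|\le\sqrt{\epsilon/\delta}\}$ and its complement. By Chebyshev's inequality, the band carries mass at least $1-\delta$. The band has uniform measure $O(\sqrt{\epsilon/\delta})$. The relative-entropy (log-sum) inequality then gives
\[
\int f\log f\ \ge\ (1-\delta)\log\frac{1-\delta}{|{\rm band}|}-\frac{\log 2}{e}\cdots\ \ge\ \tfrac12(1-\delta)\log(1/\epsilon)-C\to+\infty .
\]
Since $f_Q$ is admissible, this bounds $\entorig(Q)$ below and gives the claim. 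The delicate point is only the bookkeeping in the log-sum inequality on the complement set, where the entropy contribution is bounded below by $-1/e$ times the measure.
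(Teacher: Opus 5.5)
Your proposal is correct. Item 2 is the same argument as the paper's, but items 1 and 3 take different routes.

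For strict convexity the paper works on the primal side. Since $\entorig(Q)$ is the minimum of $\int f\log f\,\md\m$ over densities with $\langle(\m^2)_0\rangle_f=Q$, the mixture $tf_{Q_1}+(1-t)f_{Q_2}$ is admissible for $tQ_1+(1-t)Q_2$. Strict convexity of $x\log x$, together with $f_{Q_1}\neq f_{Q_2}$, then gives the strict inequality at once, with no regularity of $Q\mapsto B_Q$ needed. Your Legendre-dual argument requires the extra fact that $\nabla\Lambda$ is a diffeomorphism onto $\Qset$, which you justify correctly from the positive covariance and the inverse function theorem. In return it gives smoothness of $\entorig$, the identity $\partial\entorig/\partial Q=B_Q$ used in Proposition~\ref{th:the Bingham closure of gradient flow}, and the explicit Hessian as the inverse covariance of $(\m^2)_0$ under $f_Q$.

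For the barrier property the paper only cites the literature, whereas your Chebyshev/log-sum argument is self-contained and works. Under the normalized measure, $\m\cdot\n$ is uniform on $[-1,1]$, so for $\epsilon\le\delta$ the band $A$ has measure exactly $\sqrt{\epsilon/\delta}$. With $p=\int_A f\,\md\m\ge 1-\delta$ one has $\int f\log f\,\md\m\ge p\log p+p\log(1/|A|)-1/e\ge(1-\delta)\log(1/|A|)-2/e$. Hence $\entorig(Q)\ge\tfrac{1-\delta}{2}\log\bigl(1/(\lambda_{\min}(Q)+1/3)\bigr)-C_\delta$.

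Two simplifications: the $\log 2$ in your display is unnecessary, and you do not need $B_Q$ to be diagonal. You only need $\langle(\m\cdot\n_i)^2\rangle_{f_Q}=\lambda_i+1/3$ for an eigenvector $\n_i$ of $Q$.

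Your route buys an explicit logarithmic blow-up rate, which the paper does not state, at the cost of half a page in place of a citation.
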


\begin{proof}
    The strict convexity follows from that of $f\log f$, and $Q=0$ corresponds to the case that $f_Q$ is a constant. The rotation invariance is straightforward from that of $f_Q$. The third property calls for quite a few efforts, for which we refer to \cite{Ball_Majumdar_2010,LSRWWZPW2015}.
\end{proof}

The fourth-order tensor $\fM ^{\mathrm{Bin}}$ also satisfies a few symmetry properties resulting from those of $f_Q$.

\begin{proposition}\label{pro:the property of the fourth-order tensor Q4}
    $\fM ^{\mathrm{Bin}}$ rotates along with $Q$ as
    $ \fM ^{\mathrm{Bin}}\!\left(\mathfrak{t}\circ Q\right) = \mathfrak{t}\circ \fM ^{\mathrm{Bin}}(Q)$ for $\mathfrak{t}\in SO(3)$.
    Moreover, when $Q$ is written by its eigenframe as \eqref{eq:the spectral decomposition of Q}, $\fM ^{\mathrm{Bin}}$ takes the form
    \begin{equation}\label{eq:eigen-M}
        \widetilde{M}_{ii}\,\s_i\otimes\s_i + \widetilde{M}_{12}\,(\s_1\otimes\s_2+\s_2\otimes\s_1),
    \end{equation}
    where $ {\{ \s_i \}}_{i=1}^5$ are defined in~\eqref{eq:the basis of symmetric traceless tensors}. When $Q$ is uniaxial ($b=0$ in \eqref{eq:the spectral decomposition of Q}), $\fM ^{\mathrm{Bin}}$ takes the form
    \begin{equation}
        \label{eq:uniaxial-MBin-n1}
        \begin{aligned}
            \alpha_1 \n_1^4 + \alpha_2\mathcal{H}(\n_1^2\otimes\bbi)+\alpha_3\n_1^2\bbi+\alpha_4\mathcal{H}(\bbi\otimes\bbi)+\alpha_5{\bbi}^2.
        \end{aligned}
    \end{equation}
\end{proposition}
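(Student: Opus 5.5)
The plan is to work directly from the representation $\fM^{\mathrm{Bin}}(Q)=\langle\R_k(\m^2)_0\otimes\R_k(\m^2)_0\rangle_{f_Q}$ in \eqref{eq:the structure of dissipative operator M}, or equivalently from $-4\langle\m^4\rangle_{f_Q}+4\mathcal{H}(Q\otimes\bbi)+\frac43\mathcal{H}(\bbi\otimes\bbi)$. The symmetry properties of $f_Q$ in Proposition~\ref{pro:the existence and uniqueness of the Bingham closure} then transfer to the tensor. Since $\mathcal{H}(Q\otimes\bbi)$ and $\mathcal{H}(\bbi\otimes\bbi)$ obviously rotate with $Q$ and have the required forms, it suffices to treat $\langle\m^4\rangle_{f_Q}$.

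\textbf{Rotation.} By property 1 of Proposition~\ref{pro:the existence and uniqueness of the Bingham closure},
\[
\langle\m^4\rangle_{f_{\mathfrak t\circ Q}}=\int_{\mathbb S^2}\m^4 f_Q(\mathfrak t^{-1}\circ\m)\,\md\m .
\]
The substitution $\m=\mathfrak t\circ\m'$ and rotational invariance of the measure turn this into
\[
\int_{\mathbb S^2}(\mathfrak t\circ\m')^4 f_Q(\m')\,\md\m'=\mathfrak t\circ\langle\m^4\rangle_{f_Q}.
\]
Hence the rotation property follows.

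\textbf{Biaxial form.} Write $\fM^{\mathrm{Bin}}$, viewed as a symmetric bilinear form on symmetric traceless tensors, in the basis $\{\s_i\}$, so that $\fM^{\mathrm{Bin}}=\sum_{i,j}\widetilde M_{ij}\,\s_i\otimes\s_j$ up to normalization of the basis. The entries are $\s_i\cdot\fM^{\mathrm{Bin}}\s_j=\int f_Q\sum_k(\s_i\cdot\R_k(\m^2)_0)(\s_j\cdot\R_k(\m^2)_0)\,\md\m$. The tool is the three reflections $\m\mapsto\m-2(\m\cdot\n_l)\n_l$, under which $f_Q$ is invariant. Under the reflection in $\n_l$, $\s_1$ and $\s_2$ are even, while each of $\s_3,\s_4,\s_5$ changes sign under exactly two of the three reflections. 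For example, $\s_3=\n_1\n_2$ is odd under the reflections in $\n_1$ and $\n_2$.

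I would check that the integrand for $\s_i,\s_j$ transforms by the product of the parities of $\s_i$ and $\s_j$. For a reflection $\mathfrak r$ (an improper orthogonal map), $\R$ transforms as a pseudovector: $\R(g\circ\mathfrak r)=\pm\,\mathfrak r(\R g)\circ\mathfrak r$. The sign cancels because the integrand is quadratic in $\R$, and the sum over $k$ is invariant under the orthogonal change. The integrand is therefore multiplied by $\epsilon_i\epsilon_j$, where $\epsilon$ is the parity of $\s\cdot(\m^2)_0$. Whenever $\s_i,\s_j$ have different parity patterns the integral vanishes. The patterns are: $\s_1,\s_2$ trivial, and $\s_3,\s_4,\s_5$ three distinct nontrivial ones. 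This leaves exactly the diagonal entries and the $(1,2)$ block, which gives \eqref{eq:eigen-M}.

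I expect the main obstacle to be this careful sign bookkeeping for $\R$ under reflections. It is cleaner to avoid it by using the $\langle\m^4\rangle$ representation, where odd monomials $(\m\cdot\n_1)^{a}(\m\cdot\n_2)^{b}(\m\cdot\n_3)^{c}$ simply average to zero. This shows that $\langle\m^4\rangle$ contains only $\n_1^{2a}\n_2^{2b}\n_3^{2c}$ terms, and likewise for $\mathcal H(Q\otimes\bbi)$. One then projects onto the $\s_i$.

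\textbf{Uniaxial form.} Here $f_Q=\widetilde f_Q((\m\cdot\n_1)^2)$ is additionally invariant under all rotations about $\n_1$. By the rotation property just proved, $\fM^{\mathrm{Bin}}$ is then a fourth-order tensor, with the index symmetries of $\mathcal H$, that is invariant under $O(2)$ acting around $\n_1$. Such tensors are spanned by $\n_1^4$, $\mathcal H(\n_1^2\otimes\bbi)$, $\n_1^2\bbi$, $\mathcal H(\bbi\otimes\bbi)$ and $\bbi^2$. Concretely, $\langle\m^4\rangle$ is symmetric and axially invariant, so it is a combination of $\n_1^4$, $\n_1^2\bbi$ and $\bbi^2$. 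This follows from writing $\langle\m^4\rangle$ in terms of $\n_1$ and the transverse projection $\bbi-\n_1^2$: invariance in the plane forces transverse moments to be isotropic, $\langle(\m\cdot\n_2)^4\rangle=3\langle(\m\cdot\n_2)^2(\m\cdot\n_3)^2\rangle$, and odd terms vanish. Meanwhile $Q=s(\n_1^2-\bbi/3)$ gives $\mathcal H(Q\otimes\bbi)$ as a combination of $\mathcal H(\n_1^2\otimes\bbi)$ and $\mathcal H(\bbi\otimes\bbi)$. Summing these yields \eqref{eq:uniaxial-MBin-n1}.
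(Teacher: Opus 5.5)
Your proposal is correct and follows essentially the paper's route. The rotation property comes from $f_{\mathfrak t\circ Q}(\m)=f_Q(\mathfrak t^{-1}\circ\m)$. The biaxial form comes from the three reflection symmetries of $f_Q$: your fallback via vanishing odd monomials in $\langle\m^4\rangle$ is exactly the paper's argument, and your parity argument on $\R_k\bigl(\s_i\cdot(\m^2)_0\bigr)$ is also valid, since the pseudovector sign cancels in the quadratic integrand. The uniaxial form comes from axial invariance of $f_Q$ forcing $\langle\m^4\rangle\in\operatorname{span}\{\n_1^4,\n_1^2\bbi,\bbi^2\}$, combined with $Q=s(\n_1^2-\bbi/3)$. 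You leave implicit that $\fM^{\mathrm{Bin}}$ lies in the span of $\s_i\otimes\s_j$, but this is immediate from $\langle\R_k(\m^2)_0\otimes\R_k(\m^2)_0\rangle$, since each $\R_k(\m^2)_0$ is symmetric traceless.
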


\begin{proof}
    $ \fM ^{\mathrm{Bin}}\!\left(\mathfrak{t}\circ Q\right) = \mathfrak{t}\circ \fM ^{\mathrm{Bin}}(Q)$ is a straightforward result of $f_{\mathfrak{t}\circ Q}(\m)=f_Q(\mathfrak{t}^{-1}\circ\m)$.

    From the definition \eqref{eq:the structure of dissipative operator M} of $\fM
        ^{\mathrm{Bin}}$, it is not difficult to verify that
    \[
        \fM ^{\mathrm{Bin}}_{iji{'}i{'}}=\fM ^{\mathrm{Bin}}_{iii{'}j{'}}=0,
        \qquad
        \fM ^{\mathrm{Bin}}_{iji{'}j{'}}=\fM ^{\mathrm{Bin}}_{jii{'}j{'}}
        =\fM ^{\mathrm{Bin}}_{ijj{'}i{'}}=\fM ^{\mathrm{Bin}}_{i{'}j{'}ij}.
    \]
    Therefore, $\fM ^{\mathrm{Bin}}$ can be expressed as
    \begin{equation}\label{eq:basis expansion of M}
        \fM ^{\mathrm{Bin}}
        = \widetilde{M}_{ij}\,\s_i\otimes\s_j,\qquad \widetilde{M}_{ij}=\widetilde{M}_{ji},
        \qquad i,j=1,\dots,5.
    \end{equation}
    Substituting \eqref{coor_eig} into $\fM ^{\mathrm{Bin}}=\langle\mathcal{H}(\m^2\otimes(\bbi-\m^2))\rangle_{f_Q}$ and using $f_Q(\m)=f_Q\big(\m-2(\m\cdot\n_i)\n_i\big)$ for $i=1,2,3$, we conclude that each $\n_i$ must appear even times in $\s_i\otimes\s_j$, which implies \eqref{eq:eigen-M}.

    For the uniaxial case $b=0$, recall that $f_{Q}(\m) =
        \widetilde{f}_{Q}((\m\cdot\n_1)^2)$, which implies that
    $f_Q(\m)=f_Q\big((\m\cdot\n_1)\n_1+\sqrt{1-{(\m\cdot\n_1)}^2}(\n_2\cos\varphi+\n_3\sin\varphi)\big)$
    for any $\varphi\in[0,2\pi)$. Thus, we have
    \begin{equation}
        \begin{aligned}
            \langle\m^4\rangle
             & = \int_{\mathbb{S}^2} \widetilde{f}_{Q}((\m\cdot\n_1)^2)\frac{1}{2\pi}\int_{0}^{2\pi} \bigl((\m\cdot\n_1)\n_1+\sqrt{1-{(\m\cdot\n_1)}^2}(\n_2\cos\varphi+\n_3\sin\varphi)\bigr)^4 \md\varphi \md\m.
        \end{aligned}
    \end{equation}
    It can be written as a linear combination of $\int_{0}^{2\pi}\n_1^{k}\left(\n_2\cos\varphi+\n_3\sin\varphi\right)^{4-k}\md\varphi$ for $0\le k\le 4$.
    Straightforward calculations lead to
    \begin{equation}
        \begin{aligned}
            \langle\m^4\rangle
             & = \beta_1 \n_1^4 + \beta_2 \left(\n_1^2\n_2^2 + \n_1^2\n_3^2\right) + \beta_3 \left( \n_2^4 + \n_3^4 + 2\n_2^2\n_3^2 \right) \\
             & = (\beta_1 - \beta_2 + \beta_3) \n_1^4 + (\beta_2 - 2\beta_3) \n_1^2\bbi + \beta_3 \bbi^2,
        \end{aligned}
    \end{equation}
    where we have utilized $\bbi=\n_1^2+\n_2^2+\n_3^2$. Taking the above equality and $Q=s(\n_1^2-\bbi/3)$ into the definition \eqref{eq:the structure of dissipative operator M}, we obtain the form~\eqref{eq:uniaxial-MBin-n1}.
\end{proof}

To summarize, the Bingham closure is carried out by solving the maximum entropy
state $f_Q$ from $Q$, followed by calculating higher-order tensors using $f_Q$.
The gradient flow structure is recognized by rewriting $6Q=\fM
    ^{\mathrm{Bin}}(\partial \zeta/\partial Q)$. The singular term $\zeta(Q)$
constrains $\lambda(Q)\in (-1/3,2/3)$, so that $f_Q$ can be uniquely
solved. Furthermore, $\fM ^{\mathrm{Bin}}$ is positive definite as a result of
$f_Q>0$, and enjoys several symmetry properties inherited from $f_Q$.

On the other hand, it is important that these structures and properties are maintained in computations, which brings about difficulties. 
If we carry out the Bingham closure through $f_Q$, we need to handle implicit functions involving integrals on $\mathbb{S}^2$.
Naive implementations of these functions lead to high computational cost~\cite{JGHYHJZPW2008,Luo_Xu_Zhang_2018}. 
To this end, several fast algorithms have been developed for the Bingham closure using various approximation formulae \cite{TYHJJS2021,KAPSPWA2013,Luo_Xu_Zhang_2018,WHLKZPW2008,WSSMSDB2022} or neural networks if high accuracy is not needed \cite{feng2026fastjacobispectralmethods,MSWSSM2024,SBMNN2026,shi2026molecularmodeltensormodel}, including the mappings between $Q$ and $B_Q$~\cite{feng2026fastjacobispectralmethods,WHLKZPW2008}, as well as the computation of $\zeta(Q)$, $Z_Q$~\cite{Luo_Xu_Zhang_2018} and the fourth-order tensor $\fM ^{\mathrm{Bin}}$ from $Q$ or $B_Q$~\cite{Grooso_Massimiliano_Closure_fast_algramias,TYHJJS2021,KAPSPWA2013,WSSMSDB2022}. 
However, there is still a large gap towards efficient numerical methods that preserve the desired structures and properties. 
One may choose to change the independent variables to $B_Q$ from which $Q$ and $\fM ^{\mathrm{Bin}}$ can be calculated~\cite{TYHJJS2021}. 
This gives rise to much more complicated terms for the functional derivatives of the interaction energy $F_{\mathrm{r}}$. 
An alternative way is to utilize the form \eqref{eq:tensor model with Bingham closure} that only requires to compute $\fM ^{\mathrm{Bin}}$ from $Q$.
But it literally disregards the gradient flow structure and the eigenvalue constraints, so that they are difficult to be guaranteed in the schemes. 
If schemes are built based on \eqref{eq:gradient flow of Bingham closure} with fast algorithms from $Q$ to $B_Q$, the main obstacle lies in the singularities for $\lambda(Q)$ close to the boundary of $\Qset$, which restricts sufficient accuracy, and in turn impair the convexity of $\zeta(Q)$ up to certain precision required for the solvability of the schemes.

The implementations of the Bingham closure bring limitations in the numerical simulations for tensor dynamics derived from the molecular models. 
Indeed, numerical simulations have been carried out only for 1D, or 2D with the orientation restrained on the unit circle $\mathbb{S}^1$~\cite{JGHYHJZPW2008,Inhomogeneous_Yu_Zhang_2007}. 
The difficulties become worse if rigid molecules of complex shapes are considered, where more order parameters and higher-order tensors are present~\cite{xu_2020_general}.

As we have mentioned at the beginning of this article, for tensor models the emphasis is the essential properties and structures, instead of the precisions of specific terms, especially the Bingham closure on which many works have put efforts. 
This can also be recognized from the interaction energy $F_{\mathrm{r}}$, usually given by a few terms in the gradient expansion that are only able to roughly depict the molecular information. 
Thus, given that the essential properties and structures are maintained, it is desirable if we are able to seek alternative ways to treat the terms involving $f_Q$ (including $B_Q$, $\zeta(Q)$, and $\fM$). 

Actually, in the free energy, the quasi-entropy has been proposed as a substitute for $\zeta(Q)$.
In the following section, we introduce the quasi-entropy and discuss how to use it in the closure approximation, so that we construct a tensor dynamics based on the quasi-entropy.

\section{Tensor dynamics with the quasi-entropy}\label{sec:dynamic_quasi}

The quasi-entropy \cite{XU2022133308} is a class of elementary functions of tensors averaged on $SO(3)$. 
It is proposed for general choice of tensor order parameters, to act as the entropy term in the free energy. 
Generally, the quasi-entropy is defined as the log-determinant of the covariance matrix for the tensors up to certain even order. 
It is shown that the quasi-entropy possesses the essential properties of the entropy defined from the maximum entropy state, including strict convexity, enforcing positive definiteness of the covariance matrix, invariance under rotations, and consistency under symmetry reduction. 
Moreover, for various shapes of molecules, the free energy incorporating the quasi-entropy is able to capture the underlying physics correctly. 
We shall see shortly that the above properties are also crucial for the closure approximation.

Below, we start from briefing the results for the case of one tensor $Q$.
Then, we discuss the closure approximation using the quasi-entropy, which is built on a reformulation of the closure approximation as a minimization problem w.r.t. a fourth-order tensor. 

\subsection{Quasi-entropy in the free energy}

Recall that for the free energy given in \eqref{eq:gradient flow of Bingham closure}, the entropy term $\zeta(Q)$ is defined in \eqref{eq:the free energy of the tensor model with Bingham closure}.
For the tensor $Q$, we use the second-order quasi-entropy $\Xi_2(Q)$ to substitute $\zeta(Q)$, which reads 
\begin{align}
    \label{eq:second-order quasi-entropy}
    \Xi_2(Q)=-\log\det\left(Q+\frac{\bbi}{3}\right)-2\log\det\left(\frac{\bbi}{3}-\frac{Q}{2}\right).
\end{align}
The domain of $\Xi_2$ consists of all $Q$ such that $Q+\bbi/3$ and $\bbi/3-Q/2$ are positive definite, which is exactly $\Qset$.
The quasi-entropy $\Xi_2$ possesses the essential properties of $\zeta$.
\begin{proposition}\label{propsition:quasi-entropy xi2}
    \begin{enumerate}
        \item $\Xi_2(Q)$ is strictly convex on $\Qset$ with the unique minimizer $Q=0$.
        \item $\Xi_2$ is rotationally invariant:
              $\Xi_2(Q)=\Xi_2\left(\mathfrak{t}\circ Q\right)$, for any \(\mathfrak{t}\in SO(3)\).
        \item $\Xi_2$ is a barrier function constraining $Q\in \Qset$ by $\lim_{\lambda(Q)\to (-1/3)^+,{(2/3)}^-}\entorig(Q)=+\infty$.
    \end{enumerate}
\end{proposition}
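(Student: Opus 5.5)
The plan is to reduce all three properties to the eigenvalues of $Q$, using that $-\log\det$ is strictly convex on positive definite matrices. Write $\lambda_1,\lambda_2,\lambda_3$ for the eigenvalues of $Q$, with $\sum\lambda_i=0$. Then
\[
\Xi_2(Q)=-\sum_{i=1}^3\Bigl[\log\bigl(\lambda_i+\tfrac13\bigr)+2\log\bigl(\tfrac13-\tfrac{\lambda_i}{2}\bigr)\Bigr].
\]
The expression is well defined exactly when every $\lambda_i\in(-1/3,2/3)$, which identifies the domain as $\Qset$.

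\textbf{Rotational invariance.} This comes first, since it is immediate. A rotation acts on $Q$ as $\mathfrak{t}Q\mathfrak{t}^{T}$ in matrix form. Hence
\[
\det\bigl(\mathfrak{t}\circ Q+\bbi/3\bigr)=\det\bigl(\mathfrak{t}(Q+\bbi/3)\mathfrak{t}^T\bigr)=\det(Q+\bbi/3),
\]
and the same holds for $\bbi/3-Q/2$.

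\textbf{Strict convexity.} I would use the classical fact that $A\mapsto-\log\det A$ is strictly convex on symmetric positive definite matrices. The check is to restrict to a line $A+tH$ and compute the second derivative $\tr\bigl((A^{-1}H)^2\bigr)=\|A^{-1/2}HA^{-1/2}\|_F^2$, which is positive whenever $H\neq0$. Both maps $Q\mapsto Q+\bbi/3$ and $Q\mapsto\bbi/3-Q/2$ are affine and injective. So each term of $\Xi_2$ is strictly convex on the convex set $\Qset$, and therefore so is the sum.

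\textbf{Unique minimizer.} By strict convexity it suffices to show that $0$ is a critical point within the traceless subspace. The gradient is
\[
\partial\Xi_2/\partial Q=-(Q+\bbi/3)^{-1}+(\bbi/3-Q/2)^{-1}.
\]
At $Q=0$ this equals $-3\bbi+3\bbi=0$, so the derivative is zero in every direction. (Had it been a nonzero multiple of $\bbi$, it would still vanish on traceless directions.)

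\textbf{Barrier property.} The statement writes $\entorig$ here, but it should read $\Xi_2$, and I would prove it for $\Xi_2$. Suppose some eigenvalue $\lambda_i\to(-1/3)^+$. Then $-\log(\lambda_i+1/3)\to+\infty$, or if some $\lambda_i\to(2/3)^-$, then $-2\log(1/3-\lambda_i/2)\to+\infty$. The only delicate step is to make sure the remaining terms stay bounded below, which I expect to be the main (mild) obstacle. The eigenvalues are confined to $(-1/3,2/3)$, so every argument $\lambda_j+1/3$ and $1/3-\lambda_j/2$ lies in $(0,1)$ (the first is below $1$ and the second below $1/2$). Hence every term $-\log(\cdot)$ is nonnegative, and a single divergent term forces $\Xi_2\to+\infty$.
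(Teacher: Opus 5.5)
Your proof is correct and complete: the eigenvalue form, strict convexity of $-\log\det$ composed with injective affine maps, conjugation invariance of the determinant, the computation $-3\bbi+3\bbi=0$ at $Q=0$, and the positivity of every $-\log(\cdot)$ term on $\Qset$ together establish all three items. The paper gives no argument of its own here and defers to Proposition 2.1 of \cite{Wang_Xu_2023} and Theorem 4.8 of \cite{XU2022133308}, whose proofs likely rest on these same standard ingredients; you are also right that $\entorig$ in item 3 is a typo for $\Xi_2$.
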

For the proof, see Proposition 2.1 in \cite{Wang_Xu_2023} (also Theorem 4.8 in~\cite{XU2022133308} for general cases). 

To further illustrate the feature of $\Xi_2$, we consider the bulk energy containing a quadratic term,
\begin{equation}\label{eq:bulk energy}
  \Xi_2(Q)-\frac{1}{2}c_{02}\lvert Q\rvert^2.
\end{equation}
When $\zeta(Q)$ takes the place of $\Xi_2(Q)$, we recover exactly the Maier-Saupe energy.
It is known that the stationary points of the Maier-Saupe energy are uniaxial.
The same results hold when using the quasi-entropy. 
\begin{proposition}\label{pro:stationary proposition is uniaxial or isotropic}
    The stationary points of \eqref{eq:bulk energy} take the form $Q=s(\n^2-\bbi/3)$ where $\n$ is a unit vector.
\end{proposition}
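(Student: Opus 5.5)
Since both $\Xi_2$ and $\lvert Q\rvert^2$ depend only on the eigenvalues of $Q$, I will reduce the problem to the eigenvalues and then study a scalar equation. By Proposition~\ref{propsition:quasi-entropy xi2}, $\Xi_2$ is rotationally invariant, and $\lvert Q\rvert^2=\tr Q^2$ is as well. I write $Q=\sum_i\lambda_i\n_i^2$ with $\sum_i\lambda_i=0$ and $\lambda_i\in(-1/3,2/3)$. Then $\det(Q+\bbi/3)=\prod_i(\lambda_i+1/3)$ and $\det(\bbi/3-Q/2)=\prod_i(1/3-\lambda_i/2)$, so
\begin{equation*}
  \Xi_2(Q)=-\sum_{i=1}^3\Bigl[\log\bigl(\lambda_i+\tfrac13\bigr)+2\log\bigl(\tfrac13-\tfrac{\lambda_i}{2}\bigr)\Bigr].
\end{equation*}

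Next I compute the stationarity condition directly in matrix form. Using $\partial_Q\log\det A=A^{-1}$ and projecting onto traceless symmetric tensors, stationarity of \eqref{eq:bulk energy} reads
\begin{equation*}
  -\Bigl(Q+\frac{\bbi}{3}\Bigr)^{-1}+\Bigl(\frac{\bbi}{3}-\frac{Q}{2}\Bigr)^{-1}-c_{02}Q=\nu\,\bbi
\end{equation*}
for some Lagrange multiplier $\nu$. All terms on the left are functions of $Q$, so this equation is diagonal in the eigenframe of $Q$. It therefore reduces to the same scalar equation for every eigenvalue:
\begin{equation*}
  g(\lambda_i):=-\frac{1}{\lambda_i+1/3}+\frac{1}{1/3-\lambda_i/2}-c_{02}\lambda_i=\nu,\qquad i=1,2,3.
\end{equation*}

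The key step is to show that this equation cannot have three distinct solutions $\lambda_1,\lambda_2,\lambda_3$ in $(-1/3,2/3)$. If at most two of the eigenvalues are distinct, then two coincide, say $\lambda_2=\lambda_3=-s/3$ and $\lambda_1=2s/3$. This gives exactly $Q=s(\n_1^2-\bbi/3)$, the stated uniaxial form, with $s=0$ covering the isotropic case.

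To rule out three distinct roots, I would use convexity. Define $h(\lambda)=-1/(\lambda+1/3)+1/(1/3-\lambda/2)$. Its second derivative is
\begin{equation*}
  h''(\lambda)=-\frac{2}{(\lambda+1/3)^3}+\frac{1}{2(1/3-\lambda/2)^3}.
\end{equation*}
Since $(\lambda+1/3)^3$ decreases in magnitude as $\lambda$ decreases and $(1/3-\lambda/2)^3$ decreases as $\lambda$ increases, $h''$ is strictly increasing on the interval. It vanishes at exactly one point $\lambda^*$, determined by $4(1/3-\lambda/2)^3=(\lambda+1/3)^3$. Subtracting the linear term does not change the second derivative, so $g''=h''$. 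Hence $g$ is concave on $(-1/3,\lambda^*]$ and convex on $[\lambda^*,2/3)$. Such a function can still take a value three times, so convexity alone does not finish the argument.

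This is where I expect the main obstacle, and I will use the trace constraint $\lambda_1+\lambda_2+\lambda_3=0$ to close the gap. First, I would compute $\lambda^*$ explicitly: $(\lambda+1/3)=4^{1/3}(1/3-\lambda/2)$. Then I would argue as follows. Three distinct roots of $g=\nu$ must be ordered as $\lambda_1<\lambda_2<\lambda_3$ with $\lambda_1<\lambda^*<\lambda_3$, and by Rolle's theorem $g'$ must vanish at two points straddling $\lambda^*$. I would then try to show that the zero-sum condition is incompatible with this configuration.

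If that fails, I would fall back on a symmetric-function argument. Subtract the equations pairwise and divide by $\lambda_i-\lambda_j$:
\begin{equation*}
  \frac{1}{(\lambda_i+1/3)(\lambda_j+1/3)}+\frac{1/2}{(1/3-\lambda_i/2)(1/3-\lambda_j/2)}=c_{02}.
\end{equation*}
Subtracting two such relations, for $(i,j)=(1,2)$ and $(1,3)$, and dividing by $\lambda_2-\lambda_3$ yields an identity involving only $\lambda_1$ and the pair $(\lambda_2,\lambda_3)$. Substituting $\lambda_1=-(\lambda_2+\lambda_3)$, I would show that this identity has strictly signed terms: the first part is negative and the second positive, so its sign must be controlled using $\sum_i\lambda_i=0$. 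That sign control would give the contradiction.

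This is also the route by which the analogous uniaxiality result is proven for the Maier--Saupe energy, and I would follow Proposition 2.2 of \cite{Wang_Xu_2023} for the final sign estimate.
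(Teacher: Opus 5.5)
Your reduction is sound. In the eigenframe, stationarity becomes $g(\lambda_i)=\nu$ for $i=1,2,3$, and a repeated eigenvalue gives exactly $Q=s(\n^2-\bbi/3)$. Your second, divided-difference route is also the right one. Carried out, it gives, for three distinct roots,
\[
-\frac{1}{\prod_{i}(\lambda_i+\frac13)}+\frac{1}{4\prod_{i}(\frac13-\frac{\lambda_i}{2})}=0 .
\]
The gap is that you stop exactly here. You assert that ``sign control'' using $\sum_i\lambda_i=0$ yields a contradiction, but you defer it to a citation, and you leave your primary Rolle/inflection-point route open. That inequality is the entire content of the proposition. The paper states the result without proof, relying on \cite{XU2022133308}, so your argument has to stand on its own.

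The inequality cannot come from the concave--convex shape of $g$. Without the trace constraint, the identity above (equivalently $\prod_i\frac{2/3-\lambda_i}{\lambda_i+1/3}=2$) has solutions with three distinct $\lambda_i$, for instance three values near $\lambda^*$. The corresponding three collinear points on the graph of $h$ give genuine triple roots of $g=\nu$ with $c_{02}>0$. So the Rolle argument can never close the proof; the trace condition must enter quantitatively.

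The missing step is one line. Put $\theta_i=\lambda_i+\frac13\in(0,1)$. The trace condition gives $\sum_i\theta_i=1$, hence $\frac13-\frac{\lambda_i}{2}=\frac{1-\theta_i}{2}=\frac{\theta_j+\theta_k}{2}$ for $\{i,j,k\}=\{1,2,3\}$. The identity then becomes
\[
(\theta_1+\theta_2)(\theta_2+\theta_3)(\theta_3+\theta_1)=2\,\theta_1\theta_2\theta_3 .
\]
By AM--GM the left side is at least $8\theta_1\theta_2\theta_3>2\theta_1\theta_2\theta_3$, a contradiction. Hence $Q$ has a repeated eigenvalue and is uniaxial (or zero). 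With this line added your proof is complete, and the inflection-point discussion can be dropped. As submitted, the decisive step is missing.
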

Based on the uniaxial expression, the stationary points have been completely classified \cite{XU2022133308}, for which two critical values of $c_{02}$ describe the first-order isotropic--nematic phase transition.
When the elastic energy (terms with spatial derivatives) is considered, the defect morphology has also been validated in an $L^2$ gradient flow without including the higher-order tensors \cite{Wang_Xu_2023}. 
Since the free energy is a core ingredient of the dynamic model, it is natural to discuss the treatment of higher-order tensors using quasi-entropy in a consistent way with the free energy. 

\subsection{Closure approximation by the quasi-entropy}
It is clear from the discussions in the previous section that the closure approximation is also determined through entropy.
However, it is done by solving the maximum entropy state,
making a substitution like the free energy not quite evident.
In what follows, we reformulate the Bingham closure as a minimization problem w.r.t. an entropy function of $Q$ and $\fM$.

By far, we have not gone into a basic problem: the independent variables of the tensor $\fM$, which is unimportant with a density function in hand. 
Actually, the independent variables of a tensor can be given by symmetric traceless tensors \cite{Xu_2020}.
For $\fM $, defined in~\eqref{eq:the structure of dissipative operator M}, we express it by $Q$ and a fourth-order symmetric traceless tensor $Q_4$, 
\begin{equation}
    \label{eq:the dissipative operator M}
    \fM  = -4Q_4 - \frac{4}{7}\mathcal{A}(Q) - \frac{2}{15}\Fourcov ,
\end{equation}
where
\begin{equation}
    \label{eq:symtrls-M}
    \begin{aligned}
         & Q_4 = \langle {(\m^4)}_0 \rangle = \langle \m^4 - \frac{6}{7} \m^2 \bbi + \frac{3}{35} \bbi^2 \rangle,                                                               \\
         & \Fourcov _{ijkl} = 2\delta_{ij} \delta_{kl} - 3\delta_{ik} \delta_{jl} - 3\delta_{il} \delta_{jk},                                                                   \\
         & \mathcal{A}{(Q)}_{ijkl} = Q_{ij} \delta_{kl} + Q_{kl} \delta_{ij} - \frac{3}{4} (Q_{ik} \delta_{jl} + Q_{il} \delta_{jk} + Q_{jk} \delta_{il} + Q_{jl} \delta_{ik}).
    \end{aligned}
\end{equation}
We are now ready to reformulate the Bingham closure. Define
\begin{equation}
    \label{eq:the equvialent form of Bingham closure Step 1}
    \begin{aligned}
         & \widetilde{\zeta}(Q,Q_4)=\min \int_{{\mathbb{S}}^2} f \log f\,\md\m \quad  \text{s. t. }  \quad \int_{{\mathbb{S}}^2} f \,\md\m=1, \quad \langle {\left(\m^2\right)}_0 \rangle = Q, \quad \langle{\bigl(\m^4\bigr)}_0 \rangle = Q_4.
    \end{aligned}
\end{equation}

\begin{proposition}
    The function $\widetilde{\zeta}(Q,Q_4)$ is well defined for $(Q,Q_4)$ such that they are averages of $\big((\m^2)_0,(\m^4)_0\big)$ under certain $0\le f(\m)<+\infty$.
    The Bingham closure $\fM ^{\mathrm{Bin}}$ is given by \eqref{eq:the dissipative operator M} where $Q_4$ solves the minimization problem, i.e.
    \begin{equation}
        \label{eq:the minimization problem of the Bingham closure for Q_4}
        \langle(\m^4)_0\rangle_{f_Q} = \underset{Q_4}{\arg\min} \,\widetilde{\zeta}(Q,Q_4).
    \end{equation}
\end{proposition}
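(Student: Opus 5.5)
The plan has two parts: first show that $\widetilde{\zeta}$ is well defined, then identify the minimizer in $Q_4$ with the fourth moment of the Bingham distribution.

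\textbf{Well-definedness of $\widetilde{\zeta}$.} Fix an admissible pair $(Q,Q_4)$, i.e. one realised as the moments of some $0\le f<+\infty$. The constraint set
\[
\Bigl\{f\ge 0:\ \int_{\mathbb{S}^2} f\,\md\m=1,\ \langle(\m^2)_0\rangle=Q,\ \langle(\m^4)_0\rangle=Q_4\Bigr\}
\]
is then nonempty and convex. The functional $\int f\log f$ is strictly convex and bounded below; by Jensen its minimum value is $0$, attained at $f\equiv1$. I will take a minimizing sequence with bounded entropy. By de la Vallée-Poussin it is uniformly integrable, hence weakly compact in $L^1(\mathbb{S}^2)$. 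The moment constraints are linear and continuous under weak $L^1$ convergence, since the test functions are bounded polynomials, so they pass to the limit. The entropy is weakly lower semicontinuous, so the infimum is attained. Strict convexity gives uniqueness. Hence $\widetilde{\zeta}(Q,Q_4)$ is a well-defined finite value.

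If the infimum is $+\infty$, meaning the pair is realised only by singular measures, the value is $+\infty$. I would treat this only briefly, since the statement restricts to pairs realised by bounded densities, for which the entropy is finite.

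\textbf{Identification of the minimizer.} The key identity is
\[
\min_{Q_4}\widetilde{\zeta}(Q,Q_4)=\zeta(Q).
\]
Minimizing first over $f$ with both moments fixed and then over $Q_4$ is the same as minimizing over all $f$ with only $Q$ fixed. This holds because every admissible $f$ for \eqref{eq:the Bingham closure} has some fourth moment $Q_4$ and is admissible for \eqref{eq:the equvialent form of Bingham closure Step 1} with that $Q_4$, and conversely.

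By Proposition~\ref{pro:the existence and uniqueness of the Bingham closure}, \eqref{eq:the Bingham closure} has the unique minimizer $f_Q$. Set $Q_4^*=\langle(\m^4)_0\rangle_{f_Q}$. Then:
\begin{itemize}
\item $\widetilde{\zeta}(Q,Q_4^*)\le \int f_Q\log f_Q=\zeta(Q)$, because $f_Q$ is admissible for this pair.
\item $\widetilde{\zeta}(Q,Q_4)\ge\zeta(Q)$ for every $Q_4$, by the identity above.
\end{itemize}
So $Q_4^*$ is a minimizer. For uniqueness, suppose $\widetilde{\zeta}(Q,Q_4')=\zeta(Q)$. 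Its minimizing density is then admissible for \eqref{eq:the Bingham closure} and attains the minimum, so it equals $f_Q$ by uniqueness there. Therefore $Q_4'=Q_4^*$, which proves \eqref{eq:the minimization problem of the Bingham closure for Q_4}.

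\textbf{Recovering $\fM^{\mathrm{Bin}}$.} It remains to verify the decomposition \eqref{eq:the dissipative operator M}, namely that $\fM$ in \eqref{eq:the structure of dissipative operator M} equals $-4Q_4-\frac47\mathcal{A}(Q)-\frac2{15}\Fourcov$. The steps are:
\begin{enumerate}
\item Substitute $\langle\m^4\rangle=Q_4+\frac67\langle\m^2\bbi\rangle_{\mathrm{sym}}-\frac3{35}\bbi^2$.
\item Write $\langle\m^2\rangle=Q+\bbi/3$.
\item Expand the symmetrized products in components and collect terms against $\mathcal{H}(Q\otimes\bbi)$ and $\mathcal{H}(\bbi\otimes\bbi)$.
\end{enumerate}
This is linear bookkeeping. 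Applying it to $f_Q$ yields $\fM^{\mathrm{Bin}}$.

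The main obstacle is the existence step when $(Q,Q_4)$ is near the boundary of the moment cone. I will not claim anything beyond attainment for pairs with finite-entropy representatives, which is what the hypothesis provides.
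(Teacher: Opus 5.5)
Your proposal is correct and follows the paper's route. The paper likewise gets well-definedness of $\widetilde{\zeta}$ from existence and uniqueness of the constrained maximum-entropy density, and identifies the minimizer through the same ``fewer constraints'' comparison with the unique Bingham density $f_Q$; it just cites \cite[Theorem 5.1]{Xu_2020} for existence where you run the direct method (de la Vallée-Poussin plus weak lower semicontinuity), and your explicit check of $\fM=-4Q_4-\frac47\A(Q)-\frac2{15}\Fourcov$ (which does hold, though the $\frac67$ term also contributes $Q_{ij}\delta_{kl}+Q_{kl}\delta_{ij}$ and $\delta_{ij}\delta_{kl}$ pieces beyond the $\mathcal{H}$-type ones) is bookkeeping the paper takes for granted.
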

\begin{proof}
    For the minimization problem given in \eqref{eq:the equvialent form of Bingham closure Step 1}, there exists a unique solution $f(\m)$ \cite[Theorem 5.1]{Xu_2020}, so that $\widetilde{\zeta}(Q,Q_4)$ is well-defined.
    Then, \eqref{eq:the minimization problem of the Bingham closure for Q_4} follows immediately by noticing the existence and uniqueness result stated in Proposition \ref{pro:the existence and uniqueness of the Bingham closure}, because $f_Q$ is minimizes the entropy term with fewer constraints in \eqref{eq:the Bingham closure} than in \eqref{eq:the equvialent form of Bingham closure Step 1}.
\end{proof}

The convenience of the formulation
\eqref{eq:the equvialent form of Bingham closure Step 1}
is that there is an entropy function $\widetilde{\zeta}(Q,Q_4)$.
In a similar way of dealing with the free energy, we now substitute $\widetilde{\zeta}(Q,Q_4)$ with the quasi-entropy.
Since the entropy function depends on the fourth-order tensor $Q_4$, we use the fourth-order quasi-entropy, denoted by $\Xi_4$, in the substitution. 
The function $\Xi_4$ can be found in previous works for cases with more tensors for other symmetries \cite{XU2022133308,li2023}. 
A more straightforward version is given by \cite[(5.14)-(5.18)]{XU2022133308}, for which we set $M^4_1=0$ by symmetry reduction and arrive at the expressions below. 

Let $U^{[j]}$ be a $j$th-order tensor and $(\n_1,\n_2,\n_3)\in SO(3)$, and define
vectors and matrices as follows,
\begin{align*}
    &\Phi_{2}{\left(U^{[2]}\right)}_{i}=U^{[2]}\cdot\s_i,  \quad
    \Psi_{2}{\left(U^{[2]}\right)}_{ij}=U^{[2]}\cdot\n_i\otimes\n_j, \\
    &\Psi_{3}{\left(U^{[3]}\right)}_{ij}=U^{[3]}\cdot\n_i\otimes\s_j, \quad
    \Psi_{4}{\left(U^{[4]}\right)}_{ij}=U^{[4]}\cdot\s_i\otimes\s_j,
\end{align*}
where \(\s_{j},j=1,\dots,5\), are defined in~\eqref{eq:the basis of symmetric traceless tensors}. Define
\begin{equation}
    \label{eq:the three matrices of Psi_4}
    \begin{aligned}
         & C_1=Q_4-\frac{4}{21}\A(Q)-\frac{1}{45}\Fourcov,\  C_2=\frac{1}{8}Q_4+\frac{1}{7}\A(Q)-\frac{1}{60}\Fourcov, \
         & C_3=-\frac{1}{2}Q_4-\frac{1}{14}\A(Q)-\frac{1}{60}\Fourcov,
    \end{aligned}
\end{equation}
and the fourth-order quasi-entropy $\Xi_4$ is then written as
\begin{equation}
    \label{Simplified Xi_4}
    \begin{aligned}
        \Xi_4\left(Q,\,Q_4\right)=\,
         & -\log\det\Psi_2\left(Q+\bbi/3\right)-2\log\det \Psi_4(C_2)
        -\log\det(\Psi_4(C_1)-\Phi_2(Q)^t\Phi_2(Q))
        \\
        & 
        -2\log\det\begin{pmatrix}
                      \Psi_2\left(\bbi/3-Q/2\right)                    & \frac{1}{4}\Psi_3\left(\B\left(Q\right)\right) \\
                      \frac{1}{4}\Psi_3\left(\B\left(Q\right)\right)^t & \Psi_4(C_3)
                  \end{pmatrix},
    \end{aligned}
\end{equation}
where $\mathcal{B}{\left(Q\right)}_{ijk}=\epsilon_{ijs}Q_{ks}+\epsilon_{iks}Q_{js}$
and recall that $\epsilon$ is the Levi-Civita symbol.
The domain of $\Xi_4$ consists of $(Q,Q_4)$ making each the matrix after a log-determinant in \eqref{Simplified Xi_4} are positive definite, enforced by the barrier function property given by the log-determinant. 
The domain contains all the $(Q,Q_4)$ calculated from the average under some density
function, since the density function ensures that the covariance matrix is
positive definite.
Let us restate the properties of $\Xi_4$, which were discussed for general cases in \cite{XU2022133308}, Proposition 4.1 and Theorem 4.6. 
\begin{proposition}\label{pro:proposition of Xi_4}
    \begin{enumerate}
        \item The domain of $\Xi_4$ is convex, and $\Xi_4$ is strictly convex w.r.t. $Q,Q_4$.
        \item For any rotation $\mathfrak{t}\in SO(3)$, 
              $\Xi_4\left(Q,Q_4\right)=\Xi_4\left(\mathfrak{t}\circ Q,\mathfrak{t}\circ Q_4\right)$.
              In particular, \eqref{Simplified Xi_4} does not depend on the choice of $(\n_1,\n_2,\n_3)\in SO(3)$. 
    \end{enumerate}
\end{proposition}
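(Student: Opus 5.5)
Both properties come from writing $\Xi_4$ as a log-determinant of a covariance-type matrix that depends linearly on $(Q,Q_4)$. For convexity and the convex domain, I will go back to the general construction in \cite{XU2022133308}. There $\Xi_4$ arises as $-\log\det$ of the covariance matrix of the tensors $(\m^2)_0,(\m^4)_0$, reduced to a block-diagonal form. In that form, each block after a log-determinant in \eqref{Simplified Xi_4} is an affine function of $(Q,Q_4)$, up to a Schur complement.

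Concretely, the first, second and fourth blocks, $\Psi_2(Q+\bbi/3)$, $\Psi_4(C_2)$ and the $2\times 2$ block matrix, have entries that are linear in $(Q,Q_4)$ plus constants, since $C_1,C_2,C_3,\A(Q),\B(Q)$ are all linear. The third block $\Psi_4(C_1)-\Phi_2(Q)^t\Phi_2(Q)$ is not affine. I will write it as the Schur complement of
\begin{equation*}
G(Q,Q_4)=\begin{pmatrix} 1 & \Phi_2(Q)\\ \Phi_2(Q)^t & \Psi_4(C_1)\end{pmatrix},
\end{equation*}
so that $\det(\Psi_4(C_1)-\Phi_2^t\Phi_2)=\det G$, and $G$ is affine.

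With this rewriting, the domain is the set where finitely many affine matrix-valued maps are positive definite. It is therefore an intersection of preimages of the open convex cone of positive definite matrices under affine maps, hence convex.

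For strict convexity, note that $-\log\det$ is strictly convex on positive definite matrices, and composing with an affine map preserves convexity. Strictness requires the combined affine map $(Q,Q_4)\mapsto(\text{all blocks})$ to be injective on its linear part. I will check this directly. The first block alone recovers all entries $Q\cdot\n_i\otimes\n_j$, hence $Q$. The blocks $\Psi_4(C_1)$ and $\Psi_4(C_2)$ give $\Psi_4(Q_4)$ up to known functions of $Q$, and $\Psi_4(Q_4)=0$ forces $Q_4=0$ on symmetric traceless tensors. Then, for any nonzero direction $(\dot Q,\dot Q_4)$, the second derivative $\tr\bigl((A^{-1}\dot A)^2\bigr)$ is strictly positive in at least one block. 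This gives strict convexity.

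For rotational invariance, the key observation is that every ingredient is built from dot products with tensors generated by the frame $(\n_1,\n_2,\n_3)$. Replacing $(Q,Q_4)$ by $(\mathfrak t\circ Q,\mathfrak t\circ Q_4)$ and the frame by $(\mathfrak t\circ\n_i)$ leaves every matrix entry unchanged, by invariance of the dot product and since $\A,\B,\Fourcov$ commute with rotations (they are built from $\delta$ and $\epsilon$, and $\epsilon$ is $SO(3)$-invariant). It therefore suffices to show independence of the frame. For that, I will show that changing the frame acts on each matrix by a congruence $A\mapsto P A P^t$, where $P$ is the representation matrix of the rotation: on vectors $\n_i$ for $\Psi_2$, on the basis $\{\s_j\}$ for $\Psi_4$ and $\Phi_2$, and block-diagonally for the mixed matrix. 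Since $P$ is orthogonal, determinants are preserved.

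The main obstacle is this last point for the mixed block containing $\frac14\Psi_3(\B(Q))$. One must check that $\B$ is equivariant under $SO(3)$ (true because $\epsilon$ is invariant under proper rotations, not reflections, which is why the frame is taken in $SO(3)$). One must also check that both index types transform consistently so that the congruence is by $\diag(P_1,P_2)$.
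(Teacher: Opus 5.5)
Your argument is correct, but it does not follow the paper's route. The paper gives no proof of its own here. It cites Proposition 4.1 and Theorem 4.6 of \cite{XU2022133308}, where convexity and rotational invariance are proved once for the general quasi-entropy. There the quasi-entropy is the $-\log\det$ of a covariance matrix whose entries depend linearly on the moment tensors and which transforms by a congruence under rotations; \eqref{Simplified Xi_4} then inherits both properties through symmetry reduction.

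You instead verify everything on the reduced formula itself:
\begin{enumerate}
\item All blocks are affine in $(Q,Q_4)$ except $\Psi_4(C_1)-\Phi_2(Q)^t\Phi_2(Q)$. You correctly realize that block as a Schur complement of the affine matrix $G$; when $(Q,Q_4)$ comes from a density, it is exactly the covariance matrix of $\Phi_2((\m^2)_0)$.
\item Strictness follows from injectivity of the joint linear part, which you check correctly: $\Psi_2$ recovers $\dot Q$, and $\Psi_4$ is injective on fourth-order symmetric traceless tensors because these lie in the span of the orthogonal family $\s_i\otimes\s_j$.
\item Frame independence is a congruence by $\diag(R,P)$.
\end{enumerate}
Your route is self-contained and proves frame independence for the reduced formula this paper actually uses as its definition. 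It does not rely on identifying that formula with the general quasi-entropy at $M^4_1=0$. The cited route instead gives generality (arbitrary molecular symmetry and order) and explains where the block structure comes from.

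There are two small corrections.
\begin{enumerate}
\item The matrix $P$ representing the frame change on $\{\s_j\}$ is not orthogonal. The basis is orthogonal but not orthonormal: $|\s_1|^2=2/3$, $|\s_2|^2=2$, and $|\s_3|^2=|\s_4|^2=|\s_5|^2=1/2$. One has $P=D^{-1}OD$ with $D=\diag(|\s_1|,\dots,|\s_5|)$ and $O$ orthogonal. Hence $\det P=\det O=\pm1$, which is all the determinant argument needs.
\item The $SO(3)$-equivariance of $\B$ is used only in the step where $(Q,Q_4)$ and the frame are rotated together. In the frame change itself, $\B(Q)$ is a fixed tensor, and with $\n_i'=R_{ki}\n_k$ one simply has $\Psi_3\mapsto R^t\Psi_3P$. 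So the mixed block poses no extra difficulty, and the ``main obstacle'' you flag is not really one.
\end{enumerate}
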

A convenient choice of $(\n_1,\n_2,\n_3)$ would be the eigenframe of $Q$. The closure approximation by the quasi-entropy is given by substituting $\widetilde{\zeta}(Q,Q_4)$ with $\Xi_4(Q,Q_4)$ in the minimization problem
in \eqref{eq:the minimization problem of the Bingham closure for Q_4},
which reads
\begin{equation}
    \label{eq:the quasi-entropy closure}
    Q_4={\arg\min}\,\Xi_4\left(Q,Q_4\right),\quad Q \in \Qset\ \text{fixed}.
\end{equation}
The properties of the $Q_4$ obtained by the above quasi-entropy closure approximation result from the properties of the $\Xi_4$. 

\begin{theorem}\label{the:existence and uniqueness of quasi-entropy}
    For any $Q\in\Qset$, \eqref{eq:the quasi-entropy closure} yields a uniquely determined fourth-order tensor $Q_4$. Furthermore, $Q_4(Q)$ is continuous w.r.t. $Q$. 
\end{theorem}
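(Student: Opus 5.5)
For fixed $Q\in\Qset$ I would study the section $D_Q=\{Q_4:(Q,Q_4)\in\mathrm{dom}\,\Xi_4\}$ and the restriction $g_Q(Q_4)=\Xi_4(Q,Q_4)$. The argument has three parts: uniqueness, existence, and continuity.

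\emph{Uniqueness.} By Proposition~\ref{pro:proposition of Xi_4}, $D_Q$ is convex, being a slice of a convex set. The function $g_Q$ is strictly convex on $D_Q$. Hence $g_Q$ has at most one minimizer.

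\emph{Existence.} I would proceed as follows.
\begin{enumerate}
\item $D_Q$ is nonempty. For the Bingham density $f_Q$ of Proposition~\ref{pro:the existence and uniqueness of the Bingham closure}, the pair $(Q,\langle(\m^4)_0\rangle_{f_Q})$ lies in the domain, since the domain contains all averages under positive densities.
\item $D_Q$ is bounded. Each matrix inside a log-determinant is affine in $Q_4$ and must be positive definite. In particular $\Psi_4(C_1)\succ 0$, $\Psi_4(C_2)\succ0$ and $\Psi_4(C_3)\succ0$. Since $C_1=Q_4+\cdots$, $C_2=\tfrac18Q_4+\cdots$ and $C_3=-\tfrac12Q_4+\cdots$, the quadratic form $W\cdot Q_4W$ is bounded below by the $C_1$ condition and above by the $C_3$ condition, with bounds depending only on $Q$. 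This holds on the 5-dimensional space of second-order symmetric traceless tensors $W$. A fourth-order symmetric traceless tensor is determined by its values $W\cdot Q_4 W$, because $Q_4\cdot \mathbf{u}^4$ determines it. So $D_Q$ is bounded.
\item $g_Q$ is coercive toward $\partial D_Q$. Each term $-\log\det(\cdot)$ is bounded below on the bounded set $D_Q$, since the matrices have entries bounded there and hence bounded determinants. As $Q_4\to\partial D_Q$, at least one determinant tends to $0$, so $g_Q\to+\infty$.
\end{enumerate}
Consequently the sublevel set $\{g_Q\le g_Q(Q_4^0)\}$ is compact and contained in $D_Q$, and a minimizer exists by continuity of $g_Q$.

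\emph{Continuity.} This is the step I expect to be the main obstacle. The plan is a standard argmin-continuity argument using the joint continuity of $\Xi_4$ on its open domain. Take $Q^{(n)}\to Q$ in $\Qset$ and set $Q_4^{(n)}=Q_4(Q^{(n)})$.
\begin{enumerate}
\item Uniform bounds. The bounds in step 2 of the existence argument depend continuously on $Q$, so $Q_4^{(n)}$ is bounded and has convergent subsequences $Q_4^{(n)}\to\bar Q_4$.
\item The limit stays in the domain. This is the delicate point, and I would use a comparison argument. Since $(Q,Q_4(Q))$ is an interior point, $(Q^{(n)},Q_4(Q))$ lies in the domain for large $n$, by openness. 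By minimality, $\Xi_4(Q^{(n)},Q_4^{(n)})\le \Xi_4(Q^{(n)},Q_4(Q))\to\Xi_4(Q,Q_4(Q))$, so these values are uniformly bounded. If $\bar Q_4$ were on the boundary, some determinant would tend to $0$ while the others stay bounded, and $\Xi_4$ would blow up. This contradicts the uniform bound, so $(Q,\bar Q_4)$ is in the domain.
\item Identifying the limit. Passing to the limit in $\Xi_4(Q^{(n)},Q_4^{(n)})\le\Xi_4(Q^{(n)},Q_4')$, for any fixed admissible $Q_4'$, gives $\Xi_4(Q,\bar Q_4)\le \Xi_4(Q,Q_4')$.
\end{enumerate}
Thus $\bar Q_4$ is the minimizer, and by uniqueness $\bar Q_4=Q_4(Q)$. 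Since every subsequence has a further subsequence converging to $Q_4(Q)$, the whole sequence converges, which gives continuity.
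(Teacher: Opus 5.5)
Your proof is correct and follows essentially the paper's route: strict convexity for uniqueness, boundedness of $Q_4$ from the positive definiteness of the $\Psi_4(C_i)$ together with the log-determinant barrier for existence, and a subsequence/minimality comparison plus uniqueness for continuity. The differences are minor. The paper bounds $Q_4$ independently of $Q$ through the identities $\Psi_4(9C_1+24C_2+24C_3)=\diag(9,3,12,12,12)$ and $Q_4=\frac{18}{35}C_1+\frac{8}{35}C_2-\frac{32}{35}C_3$, whereas you use the two-sided bound on $W\cdot Q_4W$. Your appeal to openness of the domain, which keeps $(Q^{(n)},Q_4(Q))$ admissible and $\bar Q_4$ interior, makes explicit steps that the paper's continuity argument leaves implicit.
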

\begin{proof}
  The uniqueness follows directly from the strict convexity of $\Xi_4$.

  For the existence, from \eqref{Simplified Xi_4} we deduce that
  $\Psi_4(C_i),\, (i=1,2,3)$ are positive definite.
  Meanwhile, 
  \begin{align*}
    -\Psi_4\left(\Fourcov \right)=\Psi_4(9C_1+24C_2+24C_3)=\diag\left(9,3,12,12,12\right).
  \end{align*}
  Those imply that the eigenvalues of $\Psi_4(C_i),\,(i=1,2,3)$ are bounded, so are the components of $C_i$.
  A direct consequence is that $\Xi_4$ is bounded from below.
  Moreover, $Q_4$ is bounded since we have the following identity, 
  \[Q_4 = \frac{18}{35}C_1+\frac{8}{35}C_2-\frac{32}{35}C_3.\]

  When $Q\in\Qset$, there exists a $Q_4$ such that $(Q,Q_4)$ lies within the
  domain, since we can choose a density function to obtain such a $Q_4$. So,
  there exists a sequence $\{Q_{4}^{k}\}$ such that
  $\lim_{k\to\infty}\Xi_4(Q,Q_4^{k})=\inf_{Q_4}\Xi_4(Q,Q_4)$. The boundedness of
  $Q_4^{k}$ implies a limit point $ \overline{Q_{4}} $. It is clear that $\Xi_4$
  is continuous w.r.t.  $Q_4$, so that $ \overline{Q_{4}} $ must lie
  within the domain of $\Xi_4$ (otherwise at least one of the determinants in
  $\Xi_4$ is zero, contradicting the equality above). As a result,
  $\Xi_4(Q,\overline{Q_4})$ attains the infimum, and $\overline{Q_4}$ gives the
  solution.

  For the continuity, let $ \{Q^k\} $ be a sequence that converges to $Q^0\in
  \Qset$, and define $Q_4^k = Q_4(Q^k)$. By the boundedness of $Q_4$, we can
  choose a subsequence $Q^{k_l}$ such that $\lim_{l\to
    \infty}Q_4(Q^{k_l})=\overline{Q_4}$. The facts that $\Xi_4$ is continuous
  w.r.t. $(Q,Q_4)$ and that $Q_4(Q^{k_l})$ is the minimizer with $Q=Q^{k_l}$
  fixed lead to
  \begin{align*}
    \Xi_4(Q,\overline{Q_4})=\lim_{l\to \infty}\Xi_4(Q^{k_l},Q_4(Q^{k_l}))
    \le \lim_{l\to \infty}\Xi_4(Q^{k_l},Q_4(Q))
    =\Xi_4(Q,Q_4(Q))<+\infty.
  \end{align*}
  It follows that $\overline{Q_4}=Q_4(Q)$, which concludes the proof.
\end{proof}

In addition, the rotational invariance of $\Xi_4$ leads to the fact that $Q_4$
rotates together with $Q$.
\begin{theorem}\label{th:q4 about n axis}
    For any $\mathfrak{t}\in SO(3)$, it holds $Q_4(\mathfrak{t}\circ Q)=\mathfrak{t}\circ Q_4(Q)$. When $ Q $ is written as \eqref{eq:the spectral decomposition of Q}, $ Q_4 $, determined by the quasi-entropy closure, takes the form
    \begin{equation}
        \label{eq:the a123 of Q4}
        Q_4(Q) = a_1 {\left( \n_1^4 \right)}_0 + a_2 {\left( \n_2^4 \right)}_0 + a_3 {\left( \n_1^2 \n_2^2 \right)}_0,
    \end{equation}
    where $ a_1, a_2, a_3 $ are scalar coefficients depending on $ s $ and $ b $. In the uniaxial case $b = 0$, we have $Q_4(Q) = a_1(s){\left( \n_1^4 \right)}_0$.
\end{theorem}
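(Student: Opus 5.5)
The proof rests on the uniqueness in Theorem \ref{the:existence and uniqueness of quasi-entropy} together with the invariance of $\Xi_4$ under orthogonal transformations: any transformation that fixes $Q$ and preserves $\Xi_4$ must also fix the minimizer $Q_4(Q)$. We first prove rotational covariance, then use reflections to restrict the form of $Q_4$.

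\textbf{Rotational covariance.} By Proposition \ref{pro:proposition of Xi_4}, for any admissible $Q_4'$ we have $\Xi_4(\mathfrak{t}\circ Q,\mathfrak{t}\circ Q_4')=\Xi_4(Q,Q_4')$. Since $U\mapsto \mathfrak{t}\circ U$ is a bijection on fourth-order symmetric traceless tensors, minimizing over $Q_4'$ shows that $\mathfrak{t}\circ Q_4(Q)$ minimizes $\Xi_4(\mathfrak{t}\circ Q,\cdot)$. Uniqueness then gives $Q_4(\mathfrak{t}\circ Q)=\mathfrak{t}\circ Q_4(Q)$.

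\textbf{Biaxial form.} Write $Q=s\s_1+b\s_2$ in its eigenframe. The rotations by $\pi$ about $\n_1,\n_2,\n_3$ lie in $SO(3)$ and fix $Q$, so they fix $Q_4(Q)$. Expand $Q_4$ in the monomial basis $\n_1^{k_1}\n_2^{k_2}\n_3^{k_3}$ with $k_1+k_2+k_3=4$. The rotation about $\n_3$ flips the signs of $\n_1$ and $\n_2$, and so on for the others. Invariance under all three therefore forces $k_1,k_2,k_3$ to have equal parity, hence all even. This leaves only $\n_1^4,\n_2^4,\n_3^4,\n_1^2\n_2^2,\n_1^2\n_3^2,\n_2^2\n_3^2$.

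Next impose tracelessness. Substitute $\n_3^2=\bbi-\n_1^2-\n_2^2$, or equivalently project with $(\cdot)_0$. The space of fourth-order symmetric traceless tensors invariant under this $D_2$ group has dimension $3$: the $9$-dimensional space restricted to even $m$-modes in the spherical harmonic decomposition gives $m=0,\pm2,\pm4$, and these reduce to $0,2,4$ cosine modes. The tensors $(\n_1^4)_0,(\n_2^4)_0,(\n_1^2\n_2^2)_0$ lie in this space. It remains to check they are linearly independent, for example by contracting with $\n_1^4,\n_2^4,\n_1^2\n_2^2$ and checking that the resulting $3\times3$ matrix is nonsingular. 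This dimension count and independence check is the main technical step. A fallback is to note that $\Psi_4$ representation in the eigenframe only couples $\s_1,\s_2$ and the diagonal, mirroring \eqref{eq:eigen-M}.

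\textbf{Uniaxial case.} When $b=0$, $Q$ is invariant under every rotation about $\n_1$, so $Q_4$ is as well. The rotation by $\pi/2$ about $\n_1$ sends $\n_2\mapsto\n_3$ and $\n_3\mapsto-\n_2$. Applying it shows that $Q_4$ is symmetric under swapping $\n_2$ and $\n_3$, and invariance under all such rotations implies that $Q_4$ is a polynomial in $\n_1$ and $\n_2^2+\n_3^2=\bbi-\n_1^2$. Hence $Q_4=(\alpha\n_1^4+\beta\n_1^2\bbi+\gamma\bbi^2)_0=\alpha(\n_1^4)_0$, since $(\cdot)_0$ annihilates terms containing $\bbi$. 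The coefficient depends only on $s$, because by covariance $a_1$ is a rotation-invariant function of $Q$.
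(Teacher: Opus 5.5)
Your proposal is correct and follows essentially the same route as the paper: invariance of $\Xi_4$ together with uniqueness of the minimizer forces $Q_4(Q)$ to be fixed by the $\pi$-rotations $\mathfrak{j}_l$ about the eigenvectors of $Q$, and, when $b=0$, by the rotations about $\n_1$; this removes all but three basis terms (one in the uniaxial case). The differences are cosmetic: you get the fixed-point property directly from covariance and uniqueness instead of through the paper's convexity-averaging step, you expand in the monomial basis and then impose tracelessness (your deferred $3\times 3$ independence check is indeed nonsingular), and you use full axial invariance where the paper uses the single $\pi/4$ rotation $\mathfrak{b}_1$.
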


\begin{proof}

    The fourth-order symmetric traceless tensor $Q_4$ has nine degrees of freedom and can be represented, in the eigenframe of $Q$, as a linear
    combination of the basis:
    \begin{equation*}
        \begin{aligned}
            Q_4  =\, & a_1{\left(\n_1^4\right)}_0+a_2{\left(\n_2^4\right)}_0+a_3{\left(\n_1^2\n_2^2\right)}_0
            +a_4{\left(\n_1^3\n_2\right)}_0+a_5{\left(\n_1^3\n_3\right)}_0+a_6{\left(\n_1^2\n_2\n_3\right)}_0             \\
                     & +a_7{\left(\n_1\n_2^3\right)}_0+a_8{\left(\n_1\n_2^2\n_3\right)}_0+a_9{\left(\n_2^3\n_3\right)}_0.
        \end{aligned}
    \end{equation*}
    Here, the symmetric traceless tensors have the following form:
    \begin{equation}
        \begin{aligned}
             & {(\n_1^2\n_2^2)}_0 = \n_1^2\n_2^2 - \frac{1}{7}(\n_1^2+\n_2^2)\bbi+\frac{1}{35}\bbi^2,\
            {(\n_1^3\n_2)}_0 = \n_1^3\n_2 - \frac{3}{7}\n_1\n_2\bbi,
            \
            {(\n_1^2\n_2\n_3)}_0 = \n_1^2\n_2\n_3-\frac{1}{7}\n_2\n_3\bbi.
        \end{aligned}
    \end{equation}
    Define $\mathfrak{j}_1:(\n_1,\n_2,\n_3)\mapsto (\n_1,-\n_2,-\n_3)$ and similarly for $\mathfrak{j}_2$, $\mathfrak{j}_3$, which satisfy $\mathfrak{j}_l\circ Q=Q$.
    By convexity,
    \begin{equation*}
        \begin{aligned}
                 & \Xi_4\left(Q,\frac{1}{4}\left(Q_4+\mathfrak{j}_1\circ Q_4+\mathfrak{j}_2\circ Q_4+\mathfrak{j}_3\circ Q_4\right)\right)                                                             \\
            \leq & \frac{1}{4}\left(\Xi_4\left(Q,Q_4\right)+\Xi_4\left(Q,\mathfrak{j}_1\circ Q_4\right)+\Xi_4\left(Q,\mathfrak{j}_2\circ Q_4\right)+\Xi_4\left(Q,\mathfrak{j}_3\circ Q_4\right)\right)
            =\Xi_4\left(Q,Q_4\right).
        \end{aligned}
    \end{equation*}
    The uniqueness of $Q_4$ implies $Q_4=\frac{1}{4}\left(Q_4+\mathfrak{j}_1\circ Q_4+\mathfrak{j}_2\circ Q_4+\mathfrak{j}_3\circ Q_4\right)=a_1{\left(\n_1^4\right)}_0+a_2{\left(\n_2^4\right)}_0+a_3{\left(\n_1^2\n_2^2\right)}_0$.

    In the uniaxial case, we consider $\mathfrak{b}_1:\big(\n_1,\n_2,\n_3)\mapsto
    (\n_1,(\n_2+\n_3)/\sqrt{2},(\n_3-\n_2)/\sqrt{2}\big)$.
    Using an argument similar to the above, we deduce that
    $Q_4=\frac{1}{4}\left(Q_4+\mathfrak{b}_1 \circ Q_4+\mathfrak{b}_1^2 \circ Q_4+\mathfrak{b}_1^3\circ Q_4\right)$,
    from which, by substituting \eqref{eq:the a123 of Q4} and using the equalities $\bbi=\n_1^2+ \n_2^2 + \n_3^2$ and
    ${\left(\n_1^2\bbi\right)}_0=\mathbf{0}$~\cite{Xu_2020}, we arrive at
    $Q_4=a_1(\n_1^4)_0$.
\end{proof}

The fourth-order tensor $\fM $ is defined by
\eqref{eq:the dissipative operator M}
with $Q_4$ obtained from \eqref{eq:the quasi-entropy closure},
denoted as $\fM ^{\mathrm{qent}}$.
Its existence, uniqueness and continuity follow from those of $Q_4$. 
Moreover, $\fM ^{\mathrm{qent}}$ has the same form as $\fM^{\mathrm{Bin}}$ when expressed by the eigenframe of $Q$. 

\begin{corollary}\label{pro:properties of M_QC}
    $\fM ^{\mathrm{qent}}$ rotates along with $Q$ as $\fM ^{\mathrm{qent}}\!\left(\mathfrak{t}\circ Q\right) = \mathfrak{t}\circ \fM ^{\mathrm{qent}}(Q)$ for any \(\mathfrak{t}\in SO(3)\).
    Moreover, when $Q$ is written as \eqref{eq:the spectral decomposition of Q},
    $\fM ^{\mathrm{qent}}$ has the form \eqref{eq:eigen-M}.
    In the uniaxial case (i.e., $b=0$), the fourth-order tensor $\fM
        ^{\mathrm{qent}}$ takes the form \eqref{eq:uniaxial-MBin-n1}.
\end{corollary}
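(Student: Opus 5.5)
The plan is to transfer everything from Theorem~\ref{th:q4 about n axis} to $\fM^{\mathrm{qent}}$ through the linear formula \eqref{eq:the dissipative operator M}, $\fM = -4Q_4 - \frac{4}{7}\A(Q) - \frac{2}{15}\Fourcov$. Each of the three terms will be checked against rotations and against the eigenframe structure separately.

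\emph{Rotational covariance.} The identity $\Fourcov$ is built only from Kronecker deltas, so $\mathfrak{t}\circ\Fourcov=\Fourcov$. The map $\A$ is an isotropic linear map, being assembled from $Q$ and $\delta$ by index permutations, so $\A(\mathfrak{t}\circ Q)=\mathfrak{t}\circ\A(Q)$. Theorem~\ref{th:q4 about n axis} gives $Q_4(\mathfrak{t}\circ Q)=\mathfrak{t}\circ Q_4(Q)$. Since the tensor rotation is linear, summing these yields $\fM^{\mathrm{qent}}(\mathfrak{t}\circ Q)=\mathfrak{t}\circ\fM^{\mathrm{qent}}(Q)$.

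\emph{The form \eqref{eq:eigen-M}.} The index symmetries of $\fM$ listed in the proof of Proposition~\ref{pro:the property of the fourth-order tensor Q4} are tracelessness in each pair and symmetry within and between pairs. I will first check that these hold for $\fM^{\mathrm{qent}}$ directly from \eqref{eq:the dissipative operator M}. $Q_4$ is fully symmetric and traceless, and $\A(Q)$ and $\Fourcov$ can be verified by inspection, e.g.\ $\Fourcov_{iikl}=6\delta_{kl}-3\delta_{kl}-3\delta_{kl}=0$. Hence $\fM^{\mathrm{qent}}=\widetilde M_{ij}\s_i\otimes\s_j$ with $\widetilde M$ symmetric. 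Next, each reflection/rotation $\mathfrak{j}_l$ fixes $Q$, and by Theorem~\ref{th:q4 about n axis} the tensor $Q_4$ is a combination of $(\n_1^4)_0,(\n_2^4)_0,(\n_1^2\n_2^2)_0$, which are invariant under $\mathfrak{j}_l$. Hence $\mathfrak{j}_l\circ\fM^{\mathrm{qent}}=\fM^{\mathrm{qent}}$. Under the $\mathfrak{j}_l$, the basis tensors $\s_1,\s_2$ are fixed, while each of $\s_3,\s_4,\s_5$ flips sign under two of the three maps, and each has a distinct sign pattern. Invariance therefore forces $\widetilde M_{ij}=0$ unless $\s_i,\s_j$ have identical sign patterns. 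This leaves only the diagonal entries and $\widetilde M_{12}$.

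\emph{The uniaxial case.} When $b=0$ we have $Q_4=a_1(\n_1^4)_0$. I will expand $(\n_1^4)_0=\n_1^4-\frac{6}{7}\n_1^2\bbi+\frac{3}{35}\bbi^2$. I will also write $\A(Q)$ with $Q=s(\n_1^2-\bbi/3)$, and $\Fourcov$, in terms of $\n_1^4$, $\mathcal{H}(\n_1^2\otimes\bbi)$, $\n_1^2\bbi$, $\mathcal{H}(\bbi\otimes\bbi)$ and $\bbi^2$. For instance, $Q_{ij}\delta_{kl}+Q_{kl}\delta_{ij}$ and the $\frac34$-group are combinations of $\n_1^2\bbi$ (fully symmetrized) and $\mathcal{H}(\n_1^2\otimes\bbi)$. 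Similarly, $\Fourcov=2\bbi\otimes\bbi-6\,\mathcal H(\bbi\otimes\bbi)$-type terms rewrite via $\bbi^2$ and $\mathcal{H}(\bbi\otimes\bbi)$.

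The main obstacle is this last bookkeeping step. The non-fully-symmetric pieces, such as $\delta_{ij}\delta_{kl}$ and $\n_{1i}\n_{1j}\delta_{kl}$, must be matched to the span of the five listed tensors. My intended approach is to express $\bbi\otimes\bbi$ as a combination of $\bbi^2$ and $\mathcal H(\bbi\otimes\bbi)$. Since $\bbi^2=\frac13(\bbi\otimes\bbi+2\mathcal H(\bbi\otimes\bbi))$, this is immediate. The analogous identity holds for $\n_1^2\bbi$ in terms of $\n_1^2\otimes\bbi+\bbi\otimes\n_1^2$ and $\mathcal H(\n_1^2\otimes\bbi)$, and I will verify it componentwise.
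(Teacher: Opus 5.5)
Your proposal is correct. The rotational covariance and the uniaxial step follow the paper's proof. In both places you spell out what the paper leaves as ``direct calculation'': the isotropy of $\A$ and $\Fourcov$, and the identities $\bbi^2=\frac13\bigl(\bbi\otimes\bbi+2\mathcal{H}(\bbi\otimes\bbi)\bigr)$ and $\n_1^2\otimes\bbi+\bbi\otimes\n_1^2=6\,\n_1^2\bbi-4\mathcal{H}(\n_1^2\otimes\bbi)$, both of which hold.

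Where you differ is the block form \eqref{eq:eigen-M}. The paper substitutes \eqref{eq:the a123 of Q4} into \eqref{eq:the dissipative operator M}. It then expands each of the six tensors $\Fourcov$, $\A((\n_1^2)_0)$, $\A(\n_2^2-\n_3^2)$, $(\n_1^4)_0$, $(\n_2^4)_0$, $(\n_1^2\n_2^2)_0$ in the basis $\s_i\otimes\s_j$, which gives the explicit matrices $X_1,\dots,X_6$ in \eqref{eq:the coordinate matrices of 6tensors}. You instead use the index symmetries to place $\fM^{\mathrm{qent}}$ in the span of the $\s_i\otimes\s_j$ with $\widetilde M$ symmetric. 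You then eliminate the off-block entries by invariance under the three $\pi$-rotations $\mathfrak{j}_l$, using that $\s_3,\s_4,\s_5$ have distinct sign patterns while $\s_1,\s_2$ are fixed.

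Your route needs no computation and does not depend on the particular coefficients in $\A$ or in \eqref{eq:the dissipative operator M}. It is the same parity argument the paper uses for $\fM^{\mathrm{Bin}}$ in Proposition~\ref{pro:the property of the fourth-order tensor Q4}. You can shorten it further: you do not need the explicit three-term form of $Q_4$ there. Since $\mathfrak{j}_l\in SO(3)$ fixes $Q$, the covariance you just proved gives $\mathfrak{j}_l\circ\fM^{\mathrm{qent}}=\fM^{\mathrm{qent}}$ directly.

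The paper's computational route pays off elsewhere. The matrices $X_1,\dots,X_6$ are exactly what is needed to write the reduced three-variable function \eqref{Xi4 ss} used in the closure algorithm and in Theorem~\ref{the:the initial value of Q4}. Your symmetry argument proves the structural claim but does not produce those coefficients.
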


\begin{proof}
    $\fM ^{\mathrm{qent}}\!\left(\mathfrak{t}\circ Q\right) = \mathfrak{t}\circ \fM ^{\mathrm{qent}}(Q)$ follows immediately from the similar argument for $Q_4$.
    Substituting~\eqref{eq:the a123 of Q4} into \eqref{eq:the dissipative operator M}--\eqref{eq:symtrls-M},
    we arrive at
    \begin{equation}\label{M-sbasis}
        \fM ^{\mathrm{qent}} =-\frac{2}{15}\Fourcov  - \frac{4}{7}s\mathcal{A}({(\n_1^2)}_0) - \frac{4}{7}b\mathcal{A}(\n_2^2-\n_3^2) -4(a_1{(\n_1^4)}_0+a_2{(\n_2^4)}_0+a_3{(\n_1^2\n_2^2)}_0).
    \end{equation}
    When expressing the six tensors on the right-hand side under the basis $\s_i\otimes\s_j$, they all take the form \eqref{eq:eigen-M}. We give the explicit expressions shortly afterwards.
    In the uniaxial case, Theorem~\ref{th:q4 about n axis} gives 
    $b=a_2=a_3=0$. Direct calculations using the definition of $(\n_1^2)_0$ and
    $(\n_1^4)_0$ yield the form \eqref{eq:uniaxial-MBin-n1}.
\end{proof}
In the end, the domain of the quasi-entropy $\Xi_4$ guarantees the positive definiteness of $\fM ^{\mathrm{qent}}$.
\begin{theorem}\label{the:positive of M in quasi-entropy}
    $\fM ^{\mathrm{qent}}$ is positive definite.
\end{theorem}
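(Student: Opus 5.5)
The plan is to write $\fM^{\mathrm{qent}}$ as a positive multiple of one of the matrices $C_1,C_2,C_3$ in \eqref{eq:the three matrices of Psi_4}. Those matrices appear inside log-determinants in \eqref{Simplified Xi_4}, so the domain of $\Xi_4$ forces the corresponding $\Psi_4(C_i)$ to be positive definite. By Theorem \ref{the:existence and uniqueness of quasi-entropy}, the closure $Q_4(Q)$ is a minimizer of $\Xi_4(Q,\cdot)$, so $(Q,Q_4(Q))$ lies in this open domain.

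\textbf{Step 1: find the linear combination.} Since \eqref{eq:the dissipative operator M} and \eqref{eq:the three matrices of Psi_4} are all linear combinations of $Q_4$, $\A(Q)$ and $\Fourcov$, I seek constants with $\fM=\alpha C_1+\beta C_2+\gamma C_3$. Matching coefficients gives a $3\times 3$ linear system:
\begin{equation*}
\alpha+\tfrac{\beta}{8}-\tfrac{\gamma}{2}=-4,\qquad -\tfrac{4\alpha}{21}+\tfrac{\beta}{7}-\tfrac{\gamma}{14}=-\tfrac47,\qquad -\tfrac{\alpha}{45}-\tfrac{\beta}{60}-\tfrac{\gamma}{60}=-\tfrac{2}{15}.
\end{equation*}
Clearing denominators turns the last two equations into $4\alpha+3\beta+3\gamma=24$ and $-8\alpha+6\beta-3\gamma=-24$. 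Adding these gives $\beta=4\alpha/9$. The first equation then forces $\alpha=0$. Hence $\beta=0$, $\gamma=8$, and
\begin{equation*}
\fM^{\mathrm{qent}}=8C_3=-4Q_4-\tfrac47\A(Q)-\tfrac{2}{15}\Fourcov .
\end{equation*}
I would double-check this identity directly, since everything else rests on it.

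\textbf{Step 2: positivity of $\Psi_4(C_3)$.} The last log-determinant in \eqref{Simplified Xi_4} requires the block matrix built from $\Psi_2(\bbi/3-Q/2)$, $\tfrac14\Psi_3(\B(Q))$ and $\Psi_4(C_3)$ to be positive definite at $(Q,Q_4(Q))$. The diagonal block $\Psi_4(C_3)$ is a principal submatrix of that block matrix, so it is positive definite as well.

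\textbf{Step 3: transfer to $\fM^{\mathrm{qent}}$.} The tensors $\s_1,\dots,\s_5$ in \eqref{eq:the basis of symmetric traceless tensors} are orthogonal (not orthonormal) and form a basis of the second-order symmetric traceless tensors. So any nonzero such $W$ can be written as $W=w_i\s_i$ with a nonzero coefficient vector $w$. Then
\begin{equation*}
W\cdot\fM^{\mathrm{qent}}W=w_iw_j\,\s_i\cdot\fM^{\mathrm{qent}}\s_j=w^t\Psi_4(\fM^{\mathrm{qent}})w=8\,w^t\Psi_4(C_3)w>0 .
\end{equation*}
This uses the bilinearity of $\Psi_4$, since $\Psi_4(U)_{ij}=U\cdot\s_i\otimes\s_j$. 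By Proposition \ref{pro:proposition of Xi_4}, $\Xi_4$ is independent of the choice of frame $(\n_1,\n_2,\n_3)$, so the argument holds for any frame.

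The only real obstacle is Step 1: finding and verifying that exact coefficient match. Once $\fM^{\mathrm{qent}}=8C_3$ holds, positive definiteness follows immediately from the domain of $\Xi_4$.
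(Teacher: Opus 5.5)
Your proposal is correct and follows the same route as the paper. The paper's proof also observes that $\fM^{\mathrm{qent}}=8C_3$ and that $\Psi_4(C_3)$ is a diagonal block of the matrix in the last log-determinant of $\Xi_4$. You go further by deriving the coefficients explicitly, noting that the minimizer lies in the open domain, and spelling out the passage from $\Psi_4(C_3)\succ 0$ to $W\cdot\fM^{\mathrm{qent}}W>0$.
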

\begin{proof}
    Note that $\fM ^{\mathrm{qent}}=8C_3$ where $C_3$ is defined in \eqref{eq:the three matrices of Psi_4}.
    In $\Xi_4$, the matrix $\Psi_4(C_3)$ appears as a diagonal block in a log-determinant, so that $\fM ^{\mathrm{qent}}$ is positive definite.
\end{proof}

When $(\n_1,\n_2,\n_3)\in SO(3)$ takes an eigenframe of $Q$, let us write down the quasi-entropy $\Xi_4$. 
We calculate $\Psi_4(E)$, $\Psi_4\big(\A (\n_1^2)_0\big)$, $\Psi_4\big(\A (\n_2^2-\n_3^2)\big)$, $\Psi_4\big((\n_1^4)_0\big)$, $\Psi_4\big((\n_2^4)_0\big)$, $\Psi_4\big((\n_1^2\n_2^2)_0\big)$, yielding  
\begin{equation}
    \label{eq:the coordinate matrices of 6tensors}
    \begin{aligned}
         & X_1=\diag\left(\begin{pmatrix}
                                  -9 & 0  \\
                                  0  & -3
                              \end{pmatrix},-12,-12,-12\right),\,
        X_2=\diag\left(\begin{pmatrix}
                               -\frac{3}{2} & 0           \\
                               0            & \frac{1}{2}
                           \end{pmatrix},-1,-1,2\right)                                          \\
         & X_3=\diag\left(\begin{pmatrix}
                                  0           & \frac{3}{2} \\
                                  \frac{3}{2} & 0
                              \end{pmatrix},-3,3,0\right),\,
        X_4=\diag\left(\begin{pmatrix}
                           \frac{18}{35} & 0            \\
                           0             & \frac{1}{35}
                       \end{pmatrix},-\frac{16}{35},-\frac{16}{35},\frac{4}{35}\right)       \\
         & X_5=\diag\left(\begin{pmatrix}
                              \frac{27}{140} & -\frac{3}{28}  \\
                              -\frac{3}{28}  & \frac{19}{140}
                          \end{pmatrix},-\frac{16}{35},\frac{4}{35},-\frac{16}{35}\right),\,
        X_6=\diag\left(\begin{pmatrix}
                           -\frac{9}{35} & \frac{3}{28}  \\
                           \frac{3}{28}  & -\frac{1}{70}
                       \end{pmatrix},\frac{18}{35},-\frac{2}{35},-\frac{2}{35}\right).
    \end{aligned}
\end{equation}
Define
\begin{align*}
     & \omega={\left(s,b,0,0,0\right)}^T,\quad W_1\equaldef\Psi_2\left(Q+\frac{1}{3}\bbi\right)=\diag\left(\frac{1}{3}\left(2s+1\right),\frac{1}{3}\left(1-s\right)+b,\frac{1}{3}\left(1-s\right)-b\right), \\
     & W_2=W_3\equaldef\Psi_2\left(\frac{1}{3}\bbi-\frac{1}{2}Q\right)=\diag\left(\frac{1}{3}\left(1-s\right),\frac{1}{6}\left(2+s\right)-\frac{1}{2}b,\frac{1}{6}\left(2+s\right)+\frac{1}{2}b\right),     \\
     & T_2=\begin{pmatrix}
               0 & 0 & 0                            & 0                           & -b \\
               0 & 0 & 0                            & \frac{1}{2}\left(s+b\right) & 0  \\
               0 & 0 & \frac{1}{2}\left(-s+b\right) & 0                           & 0
           \end{pmatrix}.
\end{align*}
Then, the quasi-entropy $\Xi_4$ is written as 
\begin{equation}
    \label{Xi4 ss}
    \begin{aligned}
        \Xi_4\left(s,b,a_1,a_2,a_3\right) =
         & -\log\det\left(-\frac{1}{45}X_1-\frac{4}{21}\left(sX_2+bX_3\right)+a_1X_4+a_2X_5+a_3X_6 -\omega\omega^T\right)          \\
         & -2\log\det \left(-\frac{1}{15}X_1+\frac{4}{7}\left(sX_2+bX_3\right)+\frac{1}{2}\left(a_1X_4+a_2X_5+a_3X_6\right)\right) \\
         & -2\log\det\begin{pmatrix}
                         W_2
                               & T_2                                                                                              \\
                         T_2^T & -\frac{1}{60}X_1-\frac{1}{14}\left(sX_2+bX_3\right)-\frac{1}{2}\left(a_1X_4+a_2X_5+a_3X_6\right)
                     \end{pmatrix}      \\
         & -\log\det W_1 + c_0,
    \end{aligned}
\end{equation}
for some constant $c_0$. 
In the first two lines, each matrix is $5\times 5$ consisting of diagonal blocks of $2\times 2$ and three $1\times 1$.
In the third line, the matrix is $8\times 8$ and can be rearranged so that it consists of four $2\times 2$ diagonal blocks.

By Theorem~\ref{th:q4 about n axis} the closure approximation can be done according to the following three steps.
\begin{enumerate}
\item Given $Q$, find the $s$, $b$, and the eigenframe $(\n_1,\n_2,\n_3)\in SO(3)$. 
\item Solve $(a_1,a_2,a_3)$ by minimizing $\Xi_4(s,b,a_1,a_2,a_3)$ given in \eqref{Xi4 ss} with $(s,b)$ fixed. 
\item Assemble $Q_4$ and $\fM ^{\mathrm{qent}}$ using \eqref{eq:the a123 of Q4} and \eqref{eq:the dissipative operator M}. 
\end{enumerate}
In the second step, we only need to minimize an elementary function, which is easy to implement. 
In particular, the block diagonal structure of \eqref{Xi4 ss} makes it convenient to evaluate the function value and its derivatives and to determine whether certain $(s,b,a_1,a_2,a_3)$ lies within the domain of $\Xi_4$.

\subsection{Gradient flow with the quasi-entropy}
In \eqref{eq:gradient flow of Bingham closure}, we use $\Xi_2(Q)$ to substitute $\zeta(Q)$ in the free energy and let the tensor $\fM$ be given by $\fM ^{\mathrm{qent}}$.
In this way, we obtain the following gradient flow, 
\begin{equation}
    \label{eq:DTM with quasi-entropy}
    \begin{aligned}
        \frac{\partial Q}{\partial t} & = -\fM ^{\mathrm{qent}} \mu_Q,\quad \mu_Q=\frac{\delta F}{\delta Q}.
    \end{aligned}
\end{equation}
The free energy is given by
\begin{equation}
    \label{eq:free energy of the DTM with quasi-entropy}
    F[Q]=\int \Xi_2\left(Q\right)\md\bx+F_{\mathrm{r}}[Q],
\end{equation}
where the interaction energy $F_r$ contains the bulk and the elastic energy (involving spatial derivatives), 
\begin{equation}\label{eq:interaction energy}
    F_{\mathrm{r}}=\frac{1}{2}\int -c_{02}\left|Q\right|^2+c_{21}\left|\nabla Q\right|^2+c_{22}\partial_{i}Q_{ik}\partial_{j}Q_{jk}+c_{24}Q_{ij}\partial_{i}Q_{kl}\partial_{j}Q_{kl}\,\md\bx,
\end{equation}
with the short notations $|Q|^2=Q\cdot Q$, $|\nabla Q|^2=\partial_iQ_{jk}\partial_iQ_{jk}$.
We assume that, after integration by parts, the boundary terms either vanish or contribute only a constant.
This assumption can be satisfied under the periodic or Dirichlet boundary conditions, for which we omit the detailed discussions.
It is worth noting that the second-order quasi-entropy $\Xi_2$ gives a singular term in the free energy that is able to constrain the eigenvalues of $Q$.
This enables us to include in the free energy a cubic elastic term while the free energy is bounded from below.
Without this constraint, it has been shown that this cubic term is unbounded from below~\cite{Ball_Majumdar_2010}.
\begin{theorem}\label{thm:lower bound of the energy}
    Suppose that $c_{21}>0, c_{21}+\min\left \{ c_{22},0\right \} \geq\max\left \{ \frac{1}{3}c_{24},-\frac{2}{3}c_{24}\right \} $ and $Q\in\Qset$, the free energy~\eqref{eq:free energy of the DTM with quasi-entropy} is bounded from below.
\end{theorem}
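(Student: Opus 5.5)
The free energy splits into three parts: the bulk part $\int \Xi_2(Q)-\frac12 c_{02}|Q|^2\,\md\bx$, the quadratic elastic part, and the cubic term. The plan is to bound the bulk part from below pointwise and to show that the elastic integrand is nonnegative pointwise for $Q\in\Qset$.

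\textbf{Bulk part.} For $Q\in\Qset$, every eigenvalue of $Q$ lies in $(-1/3,2/3)$, so $|Q|^2\le 3\cdot(2/3)^2$ is bounded. By Proposition~\ref{propsition:quasi-entropy xi2}, $\Xi_2$ is convex with minimizer $Q=0$, hence $\Xi_2(Q)\ge \Xi_2(0)$. So the bulk integrand is bounded below by a constant, and its integral is bounded below (on a bounded domain, or after subtracting the value at a reference state).

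\textbf{Elastic part: reduction to a pointwise inequality.} We need to show that
\[
c_{21}|\nabla Q|^2+c_{22}\,\partial_iQ_{ik}\partial_jQ_{jk}+c_{24}\,Q_{ij}\partial_iQ_{kl}\partial_jQ_{kl}\ge 0
\]
at every point where $Q\in\Qset$.

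\emph{The $c_{22}$ term.} If $c_{22}\ge 0$ we drop it. If $c_{22}<0$ we use
\[
\sum_k\Bigl(\sum_i\partial_iQ_{ik}\Bigr)^2\le 3\sum_{i,k}(\partial_iQ_{ik})^2\le 3|\nabla Q|^2 .
\]
This Cauchy--Schwarz constant is too crude for the stated hypothesis, which requires the sharp bound $\partial_iQ_{ik}\partial_jQ_{jk}\le|\nabla Q|^2$ (not true in general). So instead I will keep the $c_{22}$ term in integrated form. Under periodic or Dirichlet boundary conditions, integration by parts gives
\[
\int \partial_iQ_{ik}\partial_jQ_{jk}=\int \partial_jQ_{ik}\partial_iQ_{jk},
\]
and
\[
\int |\nabla Q|^2-\partial_jQ_{ik}\partial_iQ_{jk}=\frac12\int\bigl|\partial_jQ_{ik}-\partial_iQ_{jk}\bigr|^2\ge 0.
\]
Hence $\int c_{22}\,\partial_iQ_{ik}\partial_jQ_{jk}\ge \min\{c_{22},0\}\int|\nabla Q|^2$, up to the constant boundary contribution allowed by the standing assumption.

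\emph{The cubic term.} Write $G^{(kl)}=\nabla Q_{kl}$. Then
\[
Q_{ij}\partial_iQ_{kl}\partial_jQ_{kl}=\sum_{k,l}\bigl(G^{(kl)}\bigr)^{T}QG^{(kl)}.
\]
Since $\lambda(Q)\in(-1/3,2/3)$, this quantity lies between $-\frac13|\nabla Q|^2$ and $\frac23|\nabla Q|^2$. Therefore
\[
c_{24}\,Q_{ij}\partial_iQ_{kl}\partial_jQ_{kl}\ge -\max\Bigl\{\tfrac13 c_{24},-\tfrac23 c_{24}\Bigr\}|\nabla Q|^2 ,
\]
treating $c_{24}\ge0$ and $c_{24}<0$ separately.

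\textbf{Conclusion.} Combining the three estimates, the elastic integral is at least
\[
\Bigl(c_{21}+\min\{c_{22},0\}-\max\bigl\{\tfrac13c_{24},-\tfrac23c_{24}\bigr\}\Bigr)\int|\nabla Q|^2\ge 0 ,
\]
which is nonnegative by hypothesis. Together with the bulk bound, $F$ is bounded from below.

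The main obstacle is the $c_{22}$ term. It cannot be controlled pointwise with the right constant, so the integration-by-parts identity is essential. One must check that the boundary terms it produces are genuinely constant or vanishing under the assumed boundary conditions.
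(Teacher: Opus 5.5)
Your proof is correct and follows essentially the same route as the paper. You control the $c_{22}$ term through the integration-by-parts identity that bounds $\int \partial_iQ_{ik}\partial_jQ_{jk}$ by $\int|\nabla Q|^2$, and you control the cubic term through the eigenvalue bounds $-\frac13|\nabla Q|^2\le Q_{ij}\partial_iQ_{kl}\partial_jQ_{kl}\le\frac23|\nabla Q|^2$ on $\Qset$. You also make explicit the lower bound on the bulk part, via $\Xi_2(Q)\ge\Xi_2(0)$ and the boundedness of $|Q|^2$ on $\Qset$ over a bounded domain, which the paper leaves implicit.
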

\begin{proof}
    Using integration by parts, we rewrite $|\nabla Q|^2$ as the sum of two nonnegative terms,
    \begin{equation}\label{eq:grad_div}
        \int \left|\nabla Q\right|^2\,\md\bx=\int \frac{1}{2}\left(\partial_i Q_{jk}-\partial_j Q_{ik}\right)\left(\partial_i Q_{jk}-\partial_j Q_{ik}\right)+\partial_{i}Q_{ik}\partial_j Q_{jk}\,\md\bx
        \ge
        \int \partial_{i}Q_{ik}\partial_j Q_{jk}\,\md\bx.
    \end{equation}
    For the cubic term, we have
    \begin{equation}\label{eq:boundness of cubic term}
        \begin{aligned}
             & \frac{1}{3}|\nabla Q|^2+Q_{ij}\partial_iQ_{kl}\partial_jQ_{kl}=(Q_{ij}+\frac{1}{3}\delta_{ij})\partial_iQ_{kl}\partial_jQ_{kl}, \\
             & \frac{2}{3}|\nabla Q|^2-Q_{ij}\partial_iQ_{kl}\partial_jQ_{kl}=(\frac{2}{3}\delta_{ij}-Q_{ij})\partial_iQ_{kl}\partial_jQ_{kl},
        \end{aligned}
    \end{equation}
    which are nonnegative when $Q\in\Qset$. The proof is concluded following the above equations.
\end{proof}

The energy dissipation law follows from the fact that $\fM ^{\mathrm{qent}}$ is
positive definite.
\begin{theorem}
    For the gradient flow~\eqref{eq:DTM with quasi-entropy}, it holds the following energy dissipation law, 
    \begin{equation}
        \begin{aligned}
            \frac{\md F}{\md t}=-\int \mu_Q\cdot\fM ^{\mathrm{qent}}\mu_Q\,\md\bx\leq 0.
        \end{aligned}
    \end{equation}
\end{theorem}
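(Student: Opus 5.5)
The plan is the standard chain-rule computation for a gradient flow, with the positive definiteness of Theorem~\ref{the:positive of M in quasi-entropy} supplying the sign. Assume $Q(\bx,t)$ is a sufficiently smooth solution of \eqref{eq:DTM with quasi-entropy} with $Q(\bx,t)\in\Qset$ for all $\bx$ and $t$. This assumption makes $\Xi_2(Q)$ finite, and by Theorem~\ref{the:existence and uniqueness of quasi-entropy} it makes $Q_4(Q)$, and hence $\fM^{\mathrm{qent}}$, well defined and continuous.

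\textbf{Step 1: differentiate the energy in time.} Since $\mu_Q$ is the variational derivative of $F$ in \eqref{eq:free energy of the DTM with quasi-entropy}, we have
\begin{equation*}
\frac{\md F}{\md t}=\int \mu_Q\cdot\frac{\partial Q}{\partial t}\,\md\bx .
\end{equation*}
The bulk part contributes $\partial\Xi_2/\partial Q - c_{02}Q$ pointwise. The elastic terms in \eqref{eq:interaction energy} require integration by parts in $\bx$. By the standing assumption on boundary conditions (periodic, or Dirichlet so that $\partial_t Q=0$ on the boundary), the boundary terms do not contribute to the time derivative. The cubic term $c_{24}Q_{ij}\partial_iQ_{kl}\partial_jQ_{kl}$ needs a brief check: its first variation has both a pointwise part $\frac12 c_{24}\partial_iQ_{kl}\partial_jQ_{kl}$ (symmetrized and projected onto traceless tensors) and a divergence part. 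Both are absorbed in $\mu_Q$, so the identity above holds.

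\textbf{Step 2: projection onto symmetric traceless tensors.} Because $\partial_tQ$ is symmetric traceless, only the symmetric traceless projection of $\mu_Q$ matters. The tensor $\fM^{\mathrm{qent}}$ annihilates trace parts and maps into symmetric traceless tensors: it is expanded in $\s_i\otimes\s_j$ as in \eqref{eq:basis expansion of M}, which is confirmed by \eqref{M-sbasis}. So we may take $\mu_Q$ to be its symmetric traceless part without changing the equation.

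\textbf{Step 3: substitute and conclude.} Substituting $\partial_tQ=-\fM^{\mathrm{qent}}\mu_Q$ gives
\begin{equation*}
\frac{\md F}{\md t}=-\int \mu_Q\cdot\fM^{\mathrm{qent}}\mu_Q\,\md\bx .
\end{equation*}
By Theorem~\ref{the:positive of M in quasi-entropy}, $\fM^{\mathrm{qent}}=8C_3$ and $\Psi_4(C_3)$ is positive definite on the domain of $\Xi_4$. Hence $W\cdot\fM^{\mathrm{qent}}W\ge0$ for every symmetric traceless $W$, the integrand is pointwise nonnegative, and the inequality follows.

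\textbf{Main obstacle.} The only delicate point is justifying that the solution stays in $\Qset$, so that $\fM^{\mathrm{qent}}$ exists along the whole trajectory. I would treat this as part of the notion of solution, which is supported by the barrier property of $\Xi_2$ in Proposition~\ref{propsition:quasi-entropy xi2}, rather than prove it here. I would also state explicitly that the boundary terms vanish under the assumed boundary conditions.
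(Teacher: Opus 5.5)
Your proposal is correct and is essentially the paper's argument: the paper simply notes that $\frac{\md F}{\md t}=\int\mu_Q\cdot\partial_t Q\,\md\bx=-\int\mu_Q\cdot\fM^{\mathrm{qent}}\mu_Q\,\md\bx$, with the sign coming from the positive definiteness of $\fM^{\mathrm{qent}}=8C_3$ (Theorem~\ref{the:positive of M in quasi-entropy}). Your remarks on boundary terms, the symmetric traceless projection, and the standing assumption $Q\in\Qset$ spell out what the paper leaves implicit.
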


\section{Numerical simulations}\label{sec:numerical-method}
As we have discussed above, in tensor dynamics (no matter with the Bingham closure or the quasi-entropy) the eigenvalue constraints and the dissipation structure play key roles. 
Maintaining these properties thus becomes the major target computationally. 
When adopting the quasi-entropy, we are able to construct numerical schemes in a natural way to take care of these properties. 
To illustrate this convenience, we construct a temporal first-order scheme for \eqref{eq:DTM with quasi-entropy} satisfying these properties. 
In computations, we shall pay attention to these properties and the effects of the fourth-order tensor $\fM$ and the $c_{24}$ term.

\subsection{Numerical methods}

The most significant point in a numerical scheme for \eqref{eq:DTM with quasi-entropy} is to keep $Q\in \Qset$, because it is necessary for the free energy and the closure approximation to be defined. 
In the gradient flow, it is achieved by the barrier given by the quasi-entropy $\Xi_2$.
Therefore, we choose to discretize $\partial \Xi_2/\partial Q$ implicitly. 
Because the fourth-order tensor $\fM ^{\mathrm{qent}}$ requires solving a minimization problem,
it is preferable to deal with it explicitly in the time discretization. 
In this way, the closure approximation is decoupled from the solving procedure of the scheme, so that its easy implementation described in the previous section is inherited in the scheme. 

Moreover, since the fourth-order tensor $\fM ^{\mathrm{qent}}$ from the explicit discretization is positive definite, it suffices to find an appropriate way to discretize $\mu_Q=\delta F/\delta Q$ to maintain the dissipation structure. 
Here, we choose the convex splitting method since it ensures the existence and uniqueness of the scheme. 
Consider the splitting of the free energy $F[Q] = F_{+}[Q] - F_{-}[Q]$ in the following form, 
\begin{equation}\label{eq:the convex splitting}
    \begin{aligned}
        F_{+}[Q]
        = & \int 
        \Xi_2(Q)
        +\frac{1}{2}\Bigl(
        \gamma_1\left|\nabla Q\right|^2
        +c_{22}\partial_i Q_{ik}\partial_j Q_{jk}
        \Bigr)                       \\
          & \quad
        +\frac{c_{24}}{2}
        Q_{ij}\partial_i Q_{kl}\partial_j Q_{kl}
        +\frac{\gamma_2}{2}(\left|Q\right|^2
        +\frac{1}{2}\left|\nabla Q\right|^4)
        +\frac{\gamma_3}{2}\left|\nabla Q\right|^2
        \,\md\bx                     \\
        F_{-}[Q]
        = & \frac{1}{2}\int 
        (c_{02}+\gamma_2)\left|Q\right|^2
        +\frac{\gamma_2}{2}\left|\nabla Q\right|^4
        +(\gamma_1+\gamma_3-c_{21})\left|\nabla Q\right|^2
        \,\md\bx,
    \end{aligned}
\end{equation}
where $\gamma_i,\,(i=1,2,3)$ are to be determined to make $F_{\pm}[Q]$ convex w.r.t. $Q$.
Denote $\mu_{\pm}=\delta F_{\pm}/\delta Q$. 
The scheme is given by 
\begin{equation}\label{eq:first order scheme derivation}
    \frac{Q^{n+1} - Q^{n}}{\delta t}
    = - \fM ^{\mathrm{qent},n}\mu^{n+1},\quad \mu^{n+1}=\mu_{+}(Q^{n+1}) - \mu_{-}(Q^{n}).
\end{equation}

It is clear that each term in $F_{-}$ is convex given that the coefficient of each term is positive. 
For $F_{+}$, the convexity is established under certain conditions on the coefficients. 
\begin{theorem}\label{the:convex of F}
    If $\gamma_1+\min\{c_{22},0\}\ge 0$, $\gamma_2\ge |c_{24}|$, $\gamma_3\ge \max\{c_{24}/3,-2c_{24}/3\}$, then $F_+$ is a convex functional on $Q\in\Qset$.
\end{theorem}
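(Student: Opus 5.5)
The plan is to prove convexity of $F_+$ by showing that its second variation is nonnegative. Fix $Q$ with values in $\Qset$ and a symmetric traceless perturbation $W$. Since $\Qset$ is convex, it suffices to check that $\frac{\md^2}{\md\varepsilon^2}F_+[Q+\varepsilon W]\big|_{\varepsilon=0}\ge 0$ whenever $Q+\varepsilon W$ stays in $\Qset$. Boundary terms from integration by parts are treated as vanishing, as assumed in the paper.

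I will group the terms of $F_+$ into three pieces and handle each one separately.

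\emph{The entropy term.} The term $\int\Xi_2(Q)\,\md\bx$ is convex by the first item of Proposition~\ref{propsition:quasi-entropy xi2}.

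\emph{The quadratic elastic part.} For $\frac12\int \gamma_1|\nabla Q|^2+c_{22}\partial_iQ_{ik}\partial_jQ_{jk}\,\md\bx$, the second variation is twice the same quadratic form evaluated at $W$. If $c_{22}\ge0$, it is obviously nonnegative. If $c_{22}<0$, I apply the integral inequality \eqref{eq:grad_div} to $W$, which gives
\[
\int \gamma_1|\nabla W|^2+c_{22}\partial_iW_{ik}\partial_jW_{jk}\,\md\bx\ge(\gamma_1+c_{22})\int|\partial_iW_{ik}|^2\,\md\bx\ge 0 .
\]

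\emph{The remaining terms.} These are
\[
G[Q]=\int \frac{c_{24}}{2}Q_{ij}\partial_iQ_{kl}\partial_jQ_{kl}+\frac{\gamma_2}{2}\Bigl(|Q|^2+\frac12|\nabla Q|^4\Bigr)+\frac{\gamma_3}{2}|\nabla Q|^2\,\md\bx .
\]
Differentiating twice in $\varepsilon$ gives a pointwise integrand of the form
\[
c_{24}Q_{ij}\partial_iW_{kl}\partial_jW_{kl}+2c_{24}W_{ij}\partial_iW_{kl}\partial_jQ_{kl}+\gamma_3|\nabla W|^2+\gamma_2|W|^2+\gamma_2\bigl(|\nabla Q|^2|\nabla W|^2+2(\nabla Q\cdot\nabla W)^2\bigr).
\]
I will show this integrand is pointwise nonnegative.

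First, pair the first term with $\gamma_3|\nabla W|^2$. Since $\lambda(Q)\in(-1/3,2/3)$, we have $-\frac13\bbi\le Q\le\frac23\bbi$ as matrices. Hence $c_{24}Q\ge -\frac{c_{24}}{3}\bbi$ when $c_{24}\ge0$, and $c_{24}Q\ge\frac{2c_{24}}{3}\bbi$ when $c_{24}<0$. This is the same argument as \eqref{eq:boundness of cubic term}. Together with $\gamma_3\ge\max\{c_{24}/3,-2c_{24}/3\}$, it yields
\[
\gamma_3|\nabla W|^2+c_{24}Q_{ij}\partial_iW_{kl}\partial_jW_{kl}\ge0 .
\]

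Second, control the cross term $2c_{24}W_{ij}\partial_iW_{kl}\partial_jQ_{kl}$, which is the step I expect to be the main obstacle. By Cauchy--Schwarz,
\[
|W_{ij}\partial_iW_{kl}\partial_jQ_{kl}|\le|W|\,|\nabla W|\,|\nabla Q| .
\]
Dropping the nonnegative term $2\gamma_2(\nabla Q\cdot\nabla W)^2$ and using AM--GM,
\[
\gamma_2\bigl(|W|^2+|\nabla Q|^2|\nabla W|^2\bigr)\ge2\gamma_2|W|\,|\nabla Q|\,|\nabla W|\ge 2|c_{24}|\,|W|\,|\nabla Q|\,|\nabla W| ,
\]
where the last step uses $\gamma_2\ge|c_{24}|$. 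This absorbs the cross term completely.

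Summing the three pieces shows that the second variation of $F_+$ is nonnegative, so $F_+$ is convex on $\Qset$. In the write-up, the one place needing care is the exact bookkeeping of the coefficients in the second derivative of the cubic term and of $|\nabla Q|^4$.
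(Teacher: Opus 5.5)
Your proof is correct and is essentially the paper's argument. The terms are treated one by one in the same way: the $Q$-weighted part of the $c_{24}$ second variation is absorbed by $\gamma_3|\nabla W|^2$ using the eigenvalue bounds $\lambda(Q)\in(-1/3,2/3)$, and the cross term is bounded by $|W|\,|\nabla W|\,|\nabla Q|$ and then absorbed via AM--GM into $\gamma_2\bigl(|W|^2+|\nabla Q|^2|\nabla W|^2\bigr)$, coming from the $|Q|^2$ and $|\nabla Q|^4$ terms. The only difference is cosmetic: you differentiate twice along lines, while the paper uses the symmetric second difference $h(Q+\tilde{Q})+h(Q-\tilde{Q})-2h(Q)$, and your coefficient bookkeeping is if anything more careful than the paper's displayed identity, which is off by an overall factor of $2$.
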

\begin{proof}
  The quasi-entropy \(\Xi_2\) is convex by Proposition~\ref{propsition:quasi-entropy xi2}. 
  The convexity of the parentheses including the $c_{22}$ follows from \eqref{eq:grad_div}.
  It remains to examine the $c_{24}$ term. 
  Denote $h(Q)=Q_{ij}\partial_iQ_{kl}\partial_jQ_{kl}$.
  For \(Q,Q\pm\tilde{Q}\in\Qset\), we calculate 
  \begin{equation}
    h(Q+\tilde{Q})+h(Q-\tilde{Q})-2h(Q)=Q_{ij}\partial_i\tilde{Q}_{kl}\partial_j\tilde{Q}_{kl}+2\tilde{Q}_{ij}\partial_i\tilde{Q}_{kl}\partial_jQ_{kl}. 
    \end{equation}
  The first term is controlled by $\frac{\gamma_2}{2}|\nabla Q|^2$
  according to inequalities similar to \eqref{eq:boundness of cubic term}.
  The second term can be controlled by the inequality
    \begin{equation}
        2|\tilde{Q}_{ij}\partial_{i}\tilde{Q}_{kl}\partial_{j}Q_{kl}| \leq \tilde{Q}_{ij}\tilde{Q}_{ij} + \partial_{i}\tilde{Q}_{kl}\partial_{j}{Q}_{kl}\partial_{i}\tilde{Q}_{k^{'}l^{'}}\partial_{j}{Q}_{k^{'}l^{'}}\leq |\tilde{Q}|^2 + |\nabla Q|^2|\nabla\tilde{Q}|^2,
    \end{equation}
    where the second inequality is deduced from 
    \begin{equation*}
        (\partial_{i} \tilde{Q}_{kl}\partial_{j}Q_{k^{'}l^{'}} - \partial_{i}\tilde{Q}_{k^{'}l^{'}}\partial_{j}Q_{kl} )(\partial_{i} \tilde{Q}_{kl}\partial_{j}Q_{k^{'}l^{'}} - \partial_{i}\tilde{Q}_{k^{'}l^{'}}\partial_{j}Q_{kl} ) \geq 0.
    \end{equation*}
    The convexity then follows from
    \begin{align*}
      &2|\tilde{Q}|^2=|Q+\tilde{Q}|^2+|Q-\tilde{Q}|^2-2|Q|^2, \\
      &4|\nabla Q|^2|\nabla\tilde{Q}|^2
      \le |\nabla(Q+\tilde{Q})|^4+|\nabla(Q-\tilde{Q})|^4-2|\nabla Q|^4. 
    \end{align*}
\end{proof}

Under appropriate assumptions on $\fM ^{\mathrm{qent},n}$, the resulting scheme admits a unique solution at each time step. 
\begin{theorem}\label{thm:exist-unique}
    Let $Q^{n}\in \Qset$.
    Assume that
    $\operatorname{ess}\inf \lambda\big(\Psi_4(\fM ^{\mathrm{qent},n})\big)>0$.
    Then the scheme~\eqref{eq:first order scheme derivation} admits a unique solution
    $Q^{n+1}\in \Qset$.
    If $Q^{n}$ is a stationary point of the free energy, i.e.,
    $\delta F/\delta Q(Q^{n})=0$, then the scheme~\eqref{eq:first order scheme derivation}
    leaves it unchanged: $Q^{n+1}=Q^{n}$.
\end{theorem}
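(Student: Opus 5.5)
The plan is to recast one step of \eqref{eq:first order scheme derivation} as the Euler--Lagrange equation of a strictly convex functional, and then use the barrier property of $\Xi_2$ to keep the minimizer in $\Qset$. Write $M^n=\fM^{\mathrm{qent},n}$. By Theorem~\ref{the:positive of M in quasi-entropy} and the hypothesis $\operatorname{ess}\inf\lambda(\Psi_4(M^n))>0$, $M^n(\bx)$ is a uniformly positive definite symmetric operator on symmetric traceless tensors. Its eigenvalues are bounded above, since $Q^n\in\Qset$ makes the $C_i$ bounded, as in the proof of Theorem~\ref{the:existence and uniqueness of quasi-entropy}. Hence $(M^n)^{-1}$ exists pointwise and is uniformly bounded and positive definite. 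The scheme is then equivalent to
\[
\frac{1}{\delta t}(M^n)^{-1}(Q^{n+1}-Q^n)+\mu_+(Q^{n+1})-\mu_-(Q^n)=0,
\]
which is the Euler--Lagrange equation of
\[
G[Q]=\frac{1}{2\delta t}\int (Q-Q^n)\cdot (M^n)^{-1}(Q-Q^n)\,\md\bx+F_+[Q]-\int \mu_-(Q^n)\cdot Q\,\md\bx .
\]

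The first term of $G$ is strictly convex because $(M^n)^{-1}$ is positive definite. The second term is convex by Theorem~\ref{the:convex of F}, with the $\gamma_i$ chosen as there, and the third term is linear. So $G$ is strictly convex on the convex admissible set of $Q$ with values in $\Qset$, with the appropriate boundary conditions and an $H^1$-type space (including $W^{1,4}$ because of the $|\nabla Q|^4$ term). Strict convexity gives uniqueness of a minimizer, and hence of the solution, once I show that every solution of the scheme in $\Qset$ is a critical point and therefore the minimizer of $G$.

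Existence follows the direct method. $G$ is bounded below and coercive: $\Xi_2$ is bounded below by its minimum at $Q=0$, the quadratic term dominates the linear one since $Q$ is bounded in $\Qset$, and the gradient terms, including the $c_{24}$ term, are nonnegative-controlled by \eqref{eq:grad_div}--\eqref{eq:boundness of cubic term} with $\gamma_3$. It is also weakly lower semicontinuous, being convex and continuous, with Fatou applied to the $\Xi_2$ part. Hence a minimizer exists. \emph{The main obstacle} is showing that the minimizer satisfies the Euler--Lagrange equation, i.e.\ that it stays strictly inside $\Qset$ so that $\partial\Xi_2/\partial Q$ is finite and variations are admissible. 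I would first try the standard singular-potential argument used for logarithmic Cahn--Hilliard and in \cite{Wang_Xu_2023}. The barrier in Proposition~\ref{propsition:quasi-entropy xi2} forces $\Xi_2(Q)<\infty$ a.e., so eigenvalues lie in $(-1/3,2/3)$ a.e. Then one perturbs toward the interior point $Q=0$, taking $Q_\epsilon=(1-\epsilon)Q+\epsilon\cdot 0$. Because the derivative of $-\log\det$ blows up at the boundary, $\frac{\md}{\md\epsilon}G[Q_\epsilon]|_{\epsilon=0^+}=-\infty$ unless the set where eigenvalues touch the boundary has measure zero. This shows the minimizer lies in $\Qset$ a.e., and the variational inequality becomes an equality in the interior. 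For full rigor in the discrete (spatially discretized) setting this is cleaner: the functional is finite-dimensional, and minimizers of a convex function blowing up on the boundary of an open convex domain are interior critical points.

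Finally, if $\delta F/\delta Q(Q^n)=0$, then $\mu_+(Q^n)-\mu_-(Q^n)=0$, so $Q^{n+1}=Q^n$ satisfies the scheme with zero left-hand side. Uniqueness then gives $Q^{n+1}=Q^n$.
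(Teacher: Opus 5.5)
Your proposal is correct and follows essentially the same route as the paper. The paper minimizes the same functional $H[Q]=\int\bigl[\frac{1}{2\delta t}(\fM^{\mathrm{qent},n})^{-1}(Q-Q^n)\cdot(Q-Q^n)-Q\cdot\mu_-(Q^n)\bigr]\md\bx+F_+[Q]$, defers existence and the interior-minimizer argument to \cite{Wang_Xu_2023}, and treats the stationary case by uniqueness exactly as you do. One caution: your continuum step of \emph{perturbing toward $Q=0$} is only heuristic, since finiteness of $\int\Xi_2$ already gives the a.e.\ eigenvalue constraint but not the Euler--Lagrange equation. So rely on the finite-dimensional barrier argument you mention, which matches the spatially discrete setting the paper implicitly uses.
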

\begin{proof}
    For existence and uniqueness, consider the functional
    \begin{equation}
        H\!\left[Q^{n+1}\right]
        = \int 
        \left[
            \frac{1}{2\delta t}{\left(\fM^{\mathrm{qent},n}\right)}^{-1}(Q^{n+1} - Q^{n})\cdot(Q^{n+1} - Q^{n})
            - Q^{n+1}\cdot \mu_{-}\!\left(Q^{n}\right)
            \right]\,\md\bx
        + F_{+}\!\left(Q^{n+1}\right),
    \end{equation}
    where the inverse is defined by \((\fM^{\mathrm{qent,n}})^{-1}_{ijkl}\fM^{\mathrm{qent},n}_{kli^{'}j^{'}} = \delta_{ii^{'}}\delta_{jj^{'}}\).
    The proof follows essentially the same procedure as in~\cite{Wang_Xu_2023}.
    The only additional ingredient here is the presence of the
    ${\left(\fM ^{\mathrm{qent},n}\right)}^{-1}$.
    The assumption of a positive lower bound for $\lambda\big(\Psi_4(\fM ^{\mathrm{qent},n})\big)$ ensures that the functional $H[Q]$ is bounded from below, and hence the derivations in~\cite{Wang_Xu_2023} still hold. 

    If $\delta F/\delta Q(Q^{n})=0$, then $\mu_{+}(Q^{n})-\mu_{-}(Q^{n})=0$.
    Substituting $Q^{n+1}=Q^{n}$ into~\eqref{eq:first order scheme derivation}
    yields an identity.
    Together with the uniqueness, $Q^{n+1}=Q^{n}$ is the solution. 
\end{proof}

\begin{theorem}\label{thm:discrete-dissipation}
    The scheme~\eqref{eq:first order scheme derivation} satisfies the dissipation law
    \[
        F[Q^{n+1}] \le F[Q^{n}] -\int \mu^{n+1}\cdot\fM ^{\mathrm{qent,n}}\mu^{n+1}\md\bx.
    \]
\end{theorem}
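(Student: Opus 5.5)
The plan is the standard convex-splitting argument. Write $F=F_+-F_-$ as in \eqref{eq:the convex splitting}, with $F_+$ convex on $\Qset$ by Theorem~\ref{the:convex of F} and $F_-$ convex because each of its terms has a nonnegative coefficient. For a convex functional $G$ and $Q^{n},Q^{n+1}\in\Qset$ (the latter guaranteed by Theorem~\ref{thm:exist-unique}), the first-order convexity inequality gives
\begin{equation*}
G[Q^{n}]\ge G[Q^{n+1}]+\int \frac{\delta G}{\delta Q}(Q^{n+1})\cdot(Q^{n}-Q^{n+1})\,\md\bx .
\end{equation*}
I will apply this to $F_+$ at the point $Q^{n+1}$ and to $F_-$ at the point $Q^{n}$. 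This yields
\begin{equation*}
F_+[Q^{n+1}]-F_+[Q^{n}]\le \int \mu_+(Q^{n+1})\cdot(Q^{n+1}-Q^{n})\,\md\bx
\end{equation*}
and
\begin{equation*}
F_-[Q^{n+1}]-F_-[Q^{n}]\ge \int \mu_-(Q^{n})\cdot(Q^{n+1}-Q^{n})\,\md\bx .
\end{equation*}
Subtracting the second from the first gives
\begin{equation*}
F[Q^{n+1}]-F[Q^{n}]\le \int \mu^{n+1}\cdot (Q^{n+1}-Q^{n})\,\md\bx .
\end{equation*}
The boundary terms arising when the variational derivatives are identified through integration by parts are handled by the standing assumption that they vanish or contribute constants.

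Next I substitute the scheme \eqref{eq:first order scheme derivation}, namely $Q^{n+1}-Q^{n}=-\delta t\,\fM^{\mathrm{qent},n}\mu^{n+1}$. The right-hand side then becomes $-\delta t\int \mu^{n+1}\cdot \fM^{\mathrm{qent},n}\mu^{n+1}\,\md\bx$. By Theorem~\ref{the:positive of M in quasi-entropy} this quantity is nonpositive, since $\fM^{\mathrm{qent},n}=8C_3$ is positive definite on symmetric traceless tensors, and $\mu^{n+1}$ is symmetric traceless once the derivative is taken on that space. The stated inequality carries no $\delta t$, which I interpret as absorbed by rescaling (or as a typo). In either case the argument is identical.

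The main obstacle is the rigor of the convexity step. The functional $F_+$ contains the singular term $\Xi_2$ and the quartic gradient term, so the subgradient inequality needs the Gateaux derivative to exist along the segment from $Q^{n+1}$ to $Q^{n}$. Both endpoints lie in $\Qset$, and $\Qset$ is convex, so the whole segment stays in the domain. Convexity of $t\mapsto F_+[Q^{n+1}+t(Q^{n}-Q^{n+1})]$ on $[0,1]$ then gives the inequality through the one-sided derivative at $t=0$. I would justify differentiating under the integral by the smoothness of $\Xi_2$ inside $\Qset$, relying on the solution class from Theorem~\ref{thm:exist-unique}.
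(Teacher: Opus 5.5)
Your proof is correct and is essentially the paper's own argument: the supporting-hyperplane inequalities for $F_+$ at $Q^{n+1}$ and for $F_-$ at $Q^{n}$ give $F[Q^{n+1}]-F[Q^{n}]\le\int\mu^{n+1}\cdot(Q^{n+1}-Q^{n})\,\md\bx$, and substituting the scheme finishes it. You are also right about the time step: the argument actually yields $F[Q^{n+1}]\le F[Q^{n}]-\delta t\int\mu^{n+1}\cdot\fM^{\mathrm{qent},n}\mu^{n+1}\,\md\bx$, so the missing $\delta t$ is a typo in the statement rather than a gap in your proof.
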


\begin{proof}
  By the convexity of $F_+,\,F_-$, we have
  \begin{align*}
    \int \mu_{+}(Q^{n+1})\cdot(Q^{n+1} - Q^{n})\,\md \bx \geq F_{+}[Q^{n+1}] - F_{+}[Q^n], \\
    \int -\mu_{-}(Q^{n})\cdot(Q^{n+1} - Q^{n})\,\md \bx \geq F_{-}[Q^{n}] - F_{-}[Q^{n+1}].
  \end{align*}
  Multiplying \eqref{eq:first order scheme derivation} by $\mu^{n+1}$ and integrating, we obtain the dissipation law from the positive definiteness of $\fM ^{\mathrm{qent},n}$.
\end{proof}

To compute the quasi-entropy closure approximation, it is necessary to provide an initial guess of $(a_1,a_2,a_3)$ for given $(s,b)$ such that they lie within the domain of $\Xi_4$, which is provided below. 
\begin{theorem}\label{the:the initial value of Q4}
    Let $\theta_i\in (0,1),\,(i=1,2,3)$ be the eigenvalues of $Q+\bbi/3$ (determined by $s$, $b$). For
    \begin{equation}
        \label{eq:the initial value of quasi-entropy}
        a_1=\theta_1+\theta_3-\theta, \quad
        a_2=\theta_2+\theta_3-\theta, \quad
        a_3=2\theta_3-\theta,\qquad
        \theta=\min\left \{ \theta_1,\theta_2,\theta_3\right \}, 
    \end{equation}
    $(s,b,a_1,a_2,a_3)$ lies within the domain of $\Xi_4$, i.e. makes all the matrices in \eqref{Xi4 ss} positive definite.
\end{theorem}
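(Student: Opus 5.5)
The plan is to show that the proposed $(a_1,a_2,a_3)$ are, up to a harmless correction, the moments of an explicit orientation distribution. The domain of $\Xi_4$ contains every $(Q,Q_4)$ produced by a density, because the covariance matrices whose log-determinants appear in \eqref{Simplified Xi_4} are then Gram matrices of linearly independent functions. The natural candidate is a discrete measure concentrated on the eigen-axes of $Q$. Put mass $\theta_i/2$ at each of $\pm\n_i$. Then $\langle\m^2\rangle=\sum_i\theta_i\n_i^2=Q+\bbi/3$ and $\langle\m^4\rangle=\sum_i\theta_i\n_i^4$, so $Q_4=\sum_i\theta_i(\n_i^4)_0$.

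First I will express $(\n_3^4)_0$ in the basis of \eqref{eq:the a123 of Q4}, using $\n_3^2=\bbi-\n_1^2-\n_2^2$ and the fact that $(\cdot)_0$ annihilates terms containing $\bbi$ \cite{Xu_2020}. This should give $(\n_3^4)_0=(\n_1^4)_0+(\n_2^4)_0+2(\n_1^2\n_2^2)_0$, and hence the coefficients $(\theta_1+\theta_3,\theta_2+\theta_3,2\theta_3)$. These are exactly \eqref{eq:the initial value of quasi-entropy} without the $-\theta$ shifts.

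Second, I will interpret the shift $-\theta\,(1,1,1)$ as a mixture. The axis measure has weights $\theta_i-\theta\ge 0$, which leaves total mass $3\theta$. That mass is to be redistributed in a cubically symmetric way, so that it contributes $\theta\bbi$ to the second moment and the required amount to $Q_4$. Candidates are the uniform density, the body diagonals $(\pm\n_1\pm\n_2\pm\n_3)/\sqrt3$, or a combination of the two. Matching the single scalar coefficient of the cubic tensor $\sum_i(\n_i^4)_0$ should fix the combination. Including a strictly positive multiple of the uniform density is also what I need to remove degeneracy. Each of the matrices in \eqref{Xi4 ss} is a covariance of $(1,(\m^2)_0,(\m^4)_0)$, or of their rotational derivatives, which are polynomials of bounded degree. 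Such a covariance is strictly positive definite as soon as the measure has a positive absolutely continuous part. The $-\omega\omega^T$ term in the first line is exactly the Schur complement that subtracts the mean, so it is handled by the same argument.

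The main obstacle is the second step: checking that the $-\theta$ shift is really realised by a nonnegative combination with a nonzero uniform part. If the coefficient matching fails, I would fall back on direct verification. In the eigenframe every matrix in \eqref{Xi4 ss} splits into $1\times1$ and $2\times2$ diagonal blocks, with the $8\times8$ matrix rearranged into four $2\times2$ blocks. One then substitutes \eqref{eq:the initial value of quasi-entropy} together with $s,b$ written in terms of $\theta_i$ (for example $\theta_1=(2s+1)/3$), and checks positivity of the diagonal entries and $2\times2$ determinants. The choice $\theta=\min\theta_i$ should make each such quantity a sum of nonnegative products of the $\theta_i-\theta$, plus a strictly positive multiple of $\theta$ coming from the constant tensor $\Fourcov$.
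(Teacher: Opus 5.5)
\textbf{There is a genuine gap in your second step: the candidates you name cannot realise the shift.} The fallback is only announced, not carried out.

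By your own identity $(\n_3^4)_0=(\n_1^4)_0+(\n_2^4)_0+2(\n_1^2\n_2^2)_0$, the coefficient vector $(a_1,a_2,a_3)=(\theta,\theta,\theta)$ is the tensor $\frac{\theta}{2}\sum_i(\n_i^4)_0$. So after placing weights $\theta_i-\theta$ on the axes, the leftover mass $3\theta$ must contribute $+\frac{\theta}{2}\sum_i(\n_i^4)_0$ to $Q_4$. A cubically symmetric measure of mass $m$ contributes $c\sum_i(\n_i^4)_0$, where:
\begin{itemize}
\item $c=m/3$ for the axes;
\item $c=0$ for the uniform density;
\item $c=-2m/9$ for the body diagonals, since there $\langle\m^4\rangle=\frac{m}{9}\bigl(3\bbi^2-2\sum_i\n_i^4\bigr)$.
\end{itemize}
Hence every nonnegative combination of your two candidates has $c\le 0$, and the coefficient matching forces a body-diagonal weight $-9\theta/4<0$. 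The leftover must be more concentrated on the axes than the uniform density; you removed too much mass from the axes. The block-by-block check you fall back on is exactly the substantive part of the paper's proof, and you only predict its outcome.

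\textbf{The repair is simple.} Put $\theta_i-\theta/2\ (>0)$ on $\pm\n_i$ and spread the remaining $3\theta/2$ uniformly. This gives $\langle\m^2\rangle=Q+\bbi/3$ and $Q_4=\sum_i(\theta_i-\theta/2)(\n_i^4)_0$, which is \eqref{eq:the initial value of quasi-entropy}.

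The paper instead uses nine atoms: $\theta_i-\frac25\theta$ at $\n_i$ and $\theta/5$ at each $\frac{\sqrt2}{2}(\n_i\pm\n_j)$. The six edge atoms together contribute $-\frac{\theta}{10}\sum_i(\n_i^4)_0$. Because that measure is purely atomic, the paper cannot appeal to a density and checks every $2\times2$ and $1\times1$ block by hand.

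With your uniform component, positivity needs no block computation. The target point is $(1-\lambda)P+\lambda\,(0,0)$ with $\lambda=3\theta/2$, where $P$ is the moment point of the normalized axis measure.
\begin{itemize}
\item The point $(0,0)$ lies in the domain: the matrices there are $-X_1/45$, $-X_1/15$, $\diag(\frac13,\frac13,\frac13)\oplus(-X_1/60)$, and $\frac13$ times the identity.
\item $P$ lies in the closure of the domain, as a limit of density-generated points.
\item The domain is open and convex (Proposition \ref{pro:proposition of Xi_4}), so the strict convex combination lies inside.
\end{itemize}

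This is shorter than the paper's computation. However, it rests on the claim, taken from \cite{XU2022133308}, that density-generated $(Q,Q_4)$ belong to the domain. That claim comes from covariances of the density lifted to $SO(3)$. It does not come from covariances on $\mathbb{S}^2$ of $(\m^2)_0$, $(\m^4)_0$ and their $\R$-derivatives, as you describe. For instance, the coupling block $T_2$, which is linear in $Q$, would vanish for every even density on $\mathbb{S}^2$, because $\R_k m_i\,\R_k(m_am_b)$ is odd in $\m$.
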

\begin{proof}
  The $a_1,a_2,a_3$ in \eqref{eq:the initial value of quasi-entropy} are obtained by constructing a discrete density function \(\tilde{f}(\m)\) on $\mathbb{S}^2$ such that $\langle(\m^2)_0\rangle_{\tilde{f}}=Q$. 
  Let \(
  \tilde{f}(\m)=\sum_{i=1}^{9}\alpha_i\delta(\m-\mathbf{a}_i),\,\alpha_i>0,\, \sum_{i=1}^{9}\alpha_i=1\),
  where \( \mathbf{a}_i=\n_i,\, i=1,2,3\), \(
  \mathbf{a}_{4,7}=\frac{\sqrt{2}}{2}(\n_1\pm\n_2),
  \mathbf{a}_{5,8}=\frac{\sqrt{2}}{2}(\n_1\pm\n_3),
  \mathbf{a}_{6,9}=\frac{\sqrt{2}}{2}(\n_2\pm\n_3) \).
  Set \(\alpha_4 = \cdots = \alpha_9=\alpha\).
  The equalities $\alpha_i+2\alpha=\theta_i$ need to hold.
  To this end, we set \(\theta=\min\{\theta_1,\theta_2,\theta_3\},\,\alpha=\theta/5\), so that \(\alpha_i=\theta_i-\frac{2}{5}\theta, i=1,2,3\).
  It is easy to verify that $\alpha_i>0$ for $1\le i\le 9$. 
  Calculating $\langle(\m^4)_0\rangle_{\tilde{f}}$ yields \eqref{eq:the initial value of quasi-entropy}. 

  Next, we show that \eqref{eq:the initial value of quasi-entropy} makes all the matrices in \eqref{Xi4 ss} positive definite, for which it suffices to verify for the $2\times 2$ and $1\times 1$ blocks. Notice that $\theta_1+\theta_2+\theta_3=1$. 
  Let us handle the $2\times 2$ blocks first. 
  The matrix in the first line of \eqref{Xi4 ss} has one \(2\times 2\) block, 
  \[
  \begin{pmatrix}
    \frac{9}{4}\theta_1-\frac{9}{4}\theta_1^2-\frac{9}{20}\theta & \frac{3}{4}\theta_1(\theta_{3}-\theta_{2})                                      \\
    \frac{3}{4}\theta_1(\theta_{3}-\theta_{2})                   & \frac{1}{4}(1-\theta_1) - \frac{1}{4}(\theta_2 - \theta_3)^2-\frac{3}{20}\theta
  \end{pmatrix}.
  \]
  Since $\theta_1\in [\theta,1-2\theta]$ and $\theta\le 1/3$, it holds for the upper-left element and the determinant that 
  \begin{equation}
    \begin{aligned}
      & \frac{9}{4}\theta_1 - \frac{9}{4}\theta_1^2 - \frac{9}{20}\theta\ge \frac{9}{4}\theta(1-\theta)-\frac{9}{20}\theta=\frac{9}{4}\theta(\frac{4}{5}-\theta)>0,\,                                                                                                                                                          \\
      & \frac{9}{16}\left(\frac{3}{25}\theta^2 + \frac{4}{5}\bigl(\theta_1\theta_3(\theta_2 - \theta)+\theta_2\theta_3(\theta_1-\theta)+3\theta_1\theta_2(\theta_3 - \frac{2}{3}\theta)+\theta(1-\theta_1)(1-\theta_2)\bigr)\right)>0.
    \end{aligned}
  \end{equation}
  The matrix in the second line of \eqref{Xi4 ss} has one \(2\times 2\) block, 
  \begin{equation}
    \begin{pmatrix}
      \frac{9}{8}(1-\theta_1-\frac{1}{5}\theta) & \frac{3}{8}(\theta_2-\theta_3)                      \\
      \frac{3}{8}(\theta_2-\theta_3)            & \frac{3}{8}(\theta_1+\frac{1}{3}-\frac{1}{5}\theta)
    \end{pmatrix}.
  \end{equation}
  The upper-left element and determinant satisfy 
  \begin{equation*}
    \frac{9}{8}(\theta_2 + \theta_3 - \frac{1}{5}\theta) > 0, \,\frac{3}{25}\theta^2 + \frac{16}{5}\theta + 4\theta_{1}(\theta_2 - \theta) + 4\theta_2(\theta_3 - \theta)+4\theta_3(\theta_1 - \theta) > 0.
  \end{equation*}
  The $8\times 8$ matrix in \eqref{Xi4 ss} can be rearranged to consist of four \(2\times 2\) blocks.
  Under \eqref{eq:the initial value of quasi-entropy} one $2\times 2$ block becomes a diagonal matrix, while the other three read
  \begin{equation*}
    \begin{pmatrix}
      \frac{1}{2}(1-\theta_1)          & \frac{1}{2}(\theta_3 - \theta_2)            \\
      \frac{1}{2}(\theta_3 - \theta_2) & \frac{1}{2}(1-\theta_1 - \frac{2}{5}\theta)
    \end{pmatrix},\begin{pmatrix}
      \frac{1}{2}(1-\theta_2)          & \frac{1}{2}(\theta_1 - \theta_3)            \\
      \frac{1}{2}(\theta_1 - \theta_3) & \frac{1}{2}(1-\theta_2 - \frac{2}{5}\theta)
    \end{pmatrix},\begin{pmatrix}
      \frac{1}{2}(1-\theta_3)          & \frac{1}{2}(\theta_2 - \theta_1)            \\
      \frac{1}{2}(\theta_2 - \theta_1) & \frac{1}{2}(1-\theta_3 - \frac{2}{5}\theta)
    \end{pmatrix}.
  \end{equation*}
  For the first matrix (similarly for the other two), we verify that its upper-left element and determinant that
  \begin{equation*}
    \begin{aligned}
      \frac{1}{2}(1-\theta_1) > 0, \, \frac{1}{10}(\theta_2 - \theta)(\theta_3 - \theta)+\frac{9}{10}\theta_2\theta_3 - \frac{1}{10}\theta^2>0.
    \end{aligned}
  \end{equation*}

  All the remaining \(1\times 1\) blocks are
  \begin{equation*}
    \frac{9}{40}\theta,\ \frac{3}{40}\theta,\ \frac{1}{5}\theta + 2\theta_3,\ \frac{1}{5}\theta + 2\theta_2,\ \frac{1}{5}\theta + 2\theta_1 ,\ \frac{2}{5}\theta ,\ \theta_1,\ \theta_2,\ \theta_3,
  \end{equation*}
  which are obviously positive. 
\end{proof}

\subsection{Numerical results}
\begin{figure}[htbp]
    \centering
    \includegraphics[width=0.4\textwidth]{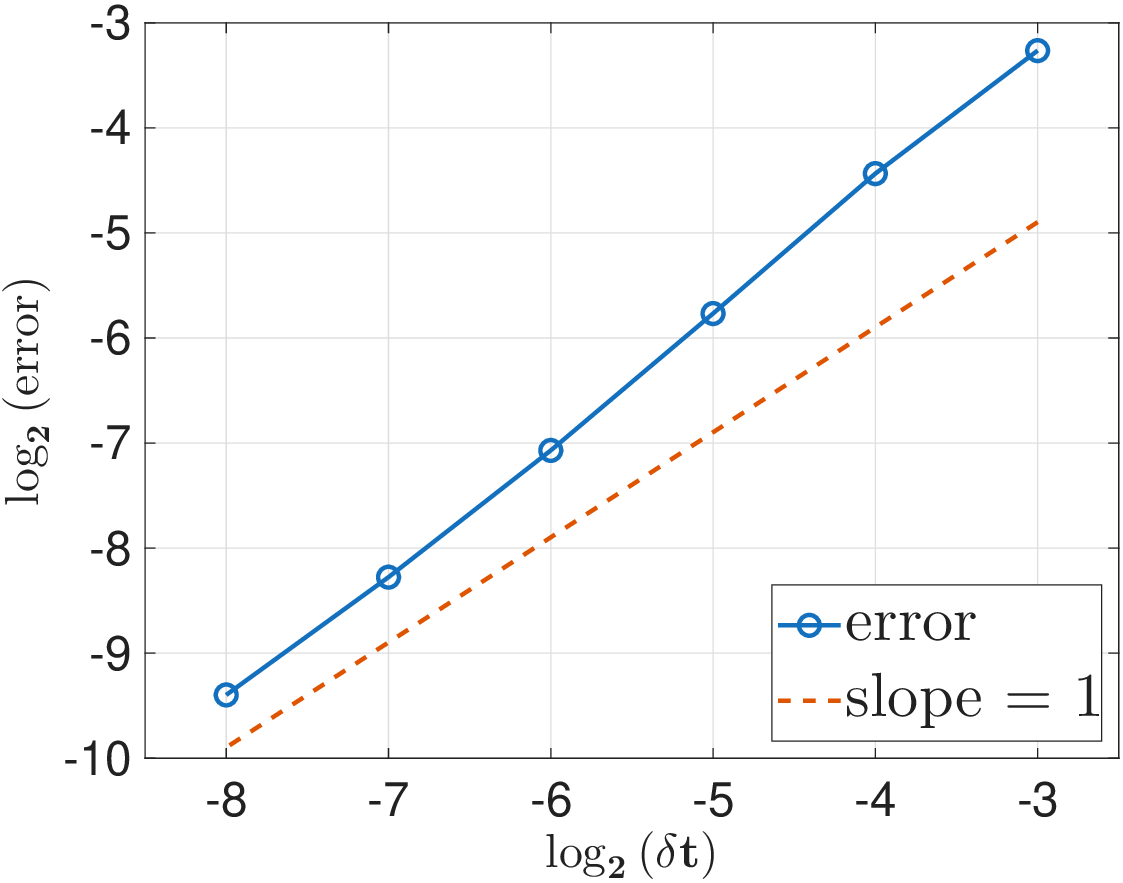}
    \caption{The error for scheme \eqref{eq:first order scheme derivation} with different $\delta t$.}\label{fig:First order time rate}
\end{figure}

We examine the case where the system is homogeneous in the $z$-direction, 
so that $Q(\bx)$ depends only on $(x,y)$.
The computational domain $\Omega = [0,2\pi]\times[0,2\pi]$ is equipped with periodic boundary conditions. 
For $\gamma_1$, $\gamma_2$, $\gamma_3$ in the scheme \eqref{eq:the convex splitting}--\eqref{eq:first order scheme derivation}, they are chosen so that inequalities hold in Theorem \ref{the:convex of F}. 
The discretization in space is carried out using the Fourier expansion with $N=32$ Fourier modes.
The nonlinear equations resulting from the scheme are solved by Newton's iteration with the tolerance $10^{-6}$.

For the influences of the fourth-order tensor $\fM$, 
we will compare the results with the $L^2$ gradient flow without incorporating the fourth-order tensor $\fM$, written as 
\begin{equation}\label{eq:L2gf}
  \frac{\partial Q}{\partial t}=-\mathcal{P}\mu_Q. 
\end{equation}
Here, $\mu_Q$ is still the variational derivative of $F$ given in \eqref{eq:free energy of the DTM with quasi-entropy}--\eqref{eq:interaction energy},
and $\mathcal{P}$ is the projection onto the symmetric traceless tensors, i.e. $(\mathcal{P}U)_{ij}=(U_{ij}+U_{ji})/2-U_{kk}\delta_{ij}/3$.
The first-order scheme proposed in \cite{Wang_Xu_2023} is adopted to solve \eqref{eq:L2gf}. 
Meanwhile, we are also interested in the influence of the cubic elastic $c_{24}$ term.
Accordingly, we define labels for different cases:
$\fM^{\mathrm{qent}}$ or $L^2$ indicates whether the equation \eqref{eq:DTM with quasi-entropy} or \eqref{eq:L2gf} we are solving;
Cub or Quad indicates $c_{24}\ne 0$ or $c_{24}=0$ (if $c_{24}$ is specified as a nonzero value, using the label Quad means that $c_{24}$ is set to zero). 
For example, $\fM ^{\mathrm{qent}}$--Cub means the results for \eqref{eq:DTM with quasi-entropy} with some nonzero $c_{24}$. 

\textbf{Accuracy.} We first carry out an accuracy test as a validation. Choose the coefficients
$c_{02}=15$, $c_{21}=0.16$, $c_{22} = 0.02$, $c_{24} = 0.015$,
and the initial condition,
\begin{equation}
    Q(x,y)=\big(s_0+0.05\sin(x)\sin(y)\big)\left(\e_1^2-\bbi/3\right),
\end{equation}
where $s_0$ is the minimizer of the bulk energy \eqref{eq:bulk energy}. 
The solution at $t=1$ is computed using various time steps. 
The reference solution is obtained by $\delta t=2^{-10}$. 
The error plotted in Figure~\ref{fig:First order time rate}
indicates first-order accuracy. 

\begin{figure}[htbp]
    \centering
    \includegraphics[width=1\textwidth]{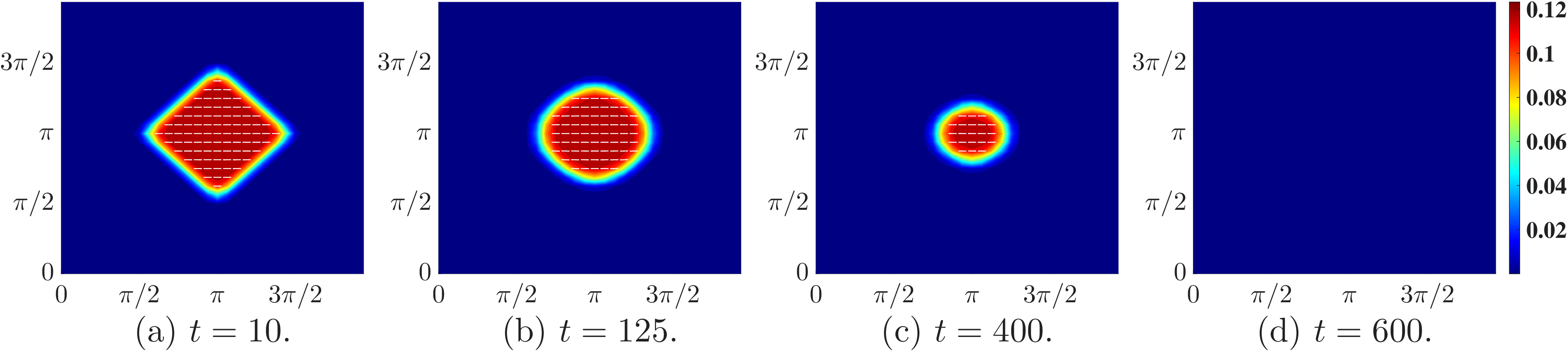}
    \caption{Evolution of the interface associated with the maximum eigenvalue of $Q$ in $\fM ^{\mathrm{qent}}$--Cub.}\label{fig:The evolution of the interface in Case 1}
\end{figure}
\begin{figure}[htbp]
    \centering
    \subfigure[The area enclosed by the $0.1$-contour of maximum eigenvalue of $Q$.]{\includegraphics[width=0.31\textwidth]{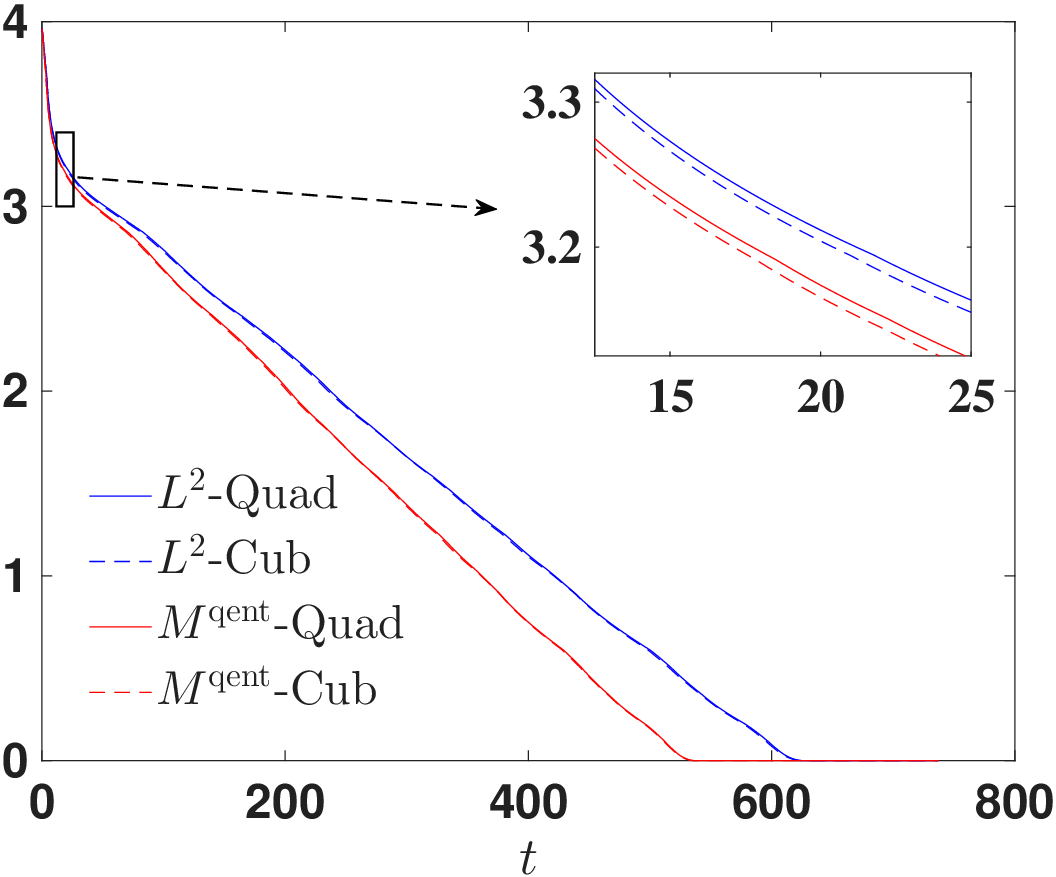}\label{fig:interface 0.12 evolution}}
    \subfigure[Free energy.]{\includegraphics[width=0.315\textwidth]{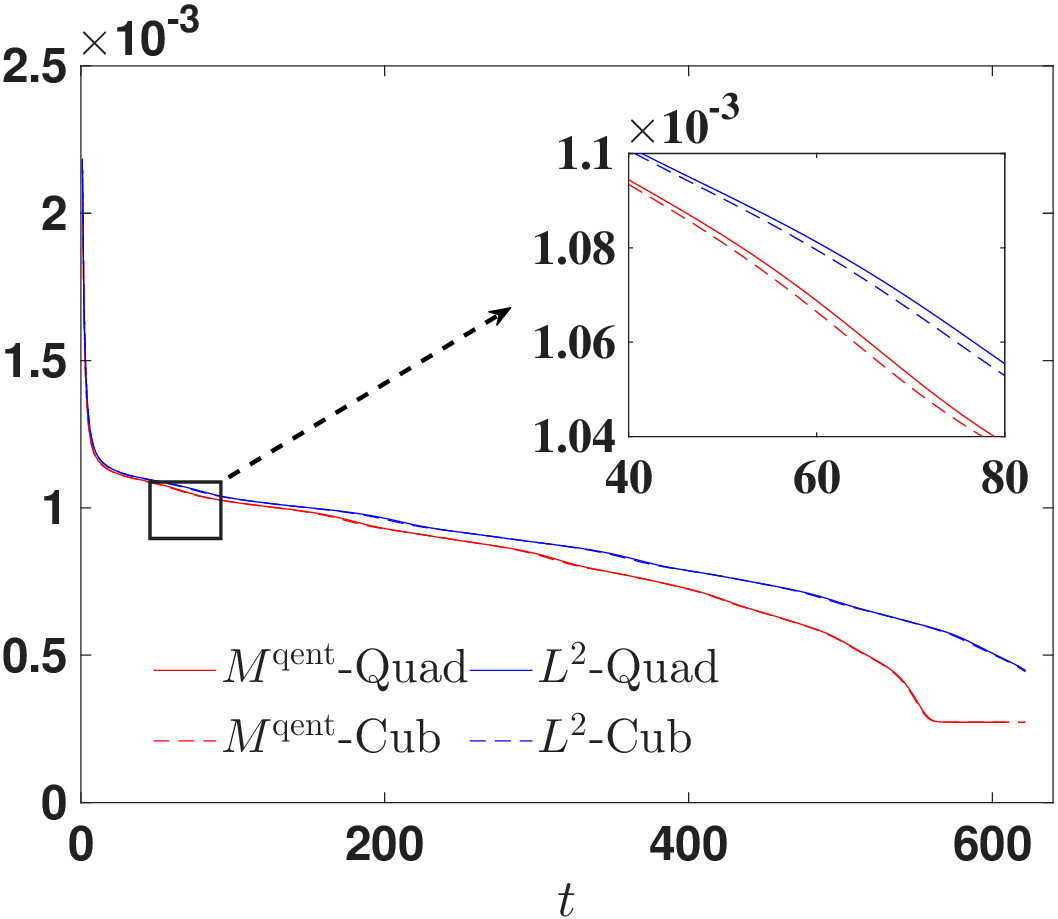}\label{fig:Total energy of interface problem}}
    \subfigure[The maximum and minimum eigenvalues of $\fM^{\mathrm{qent}} $.]{\includegraphics[width=0.31\textwidth]{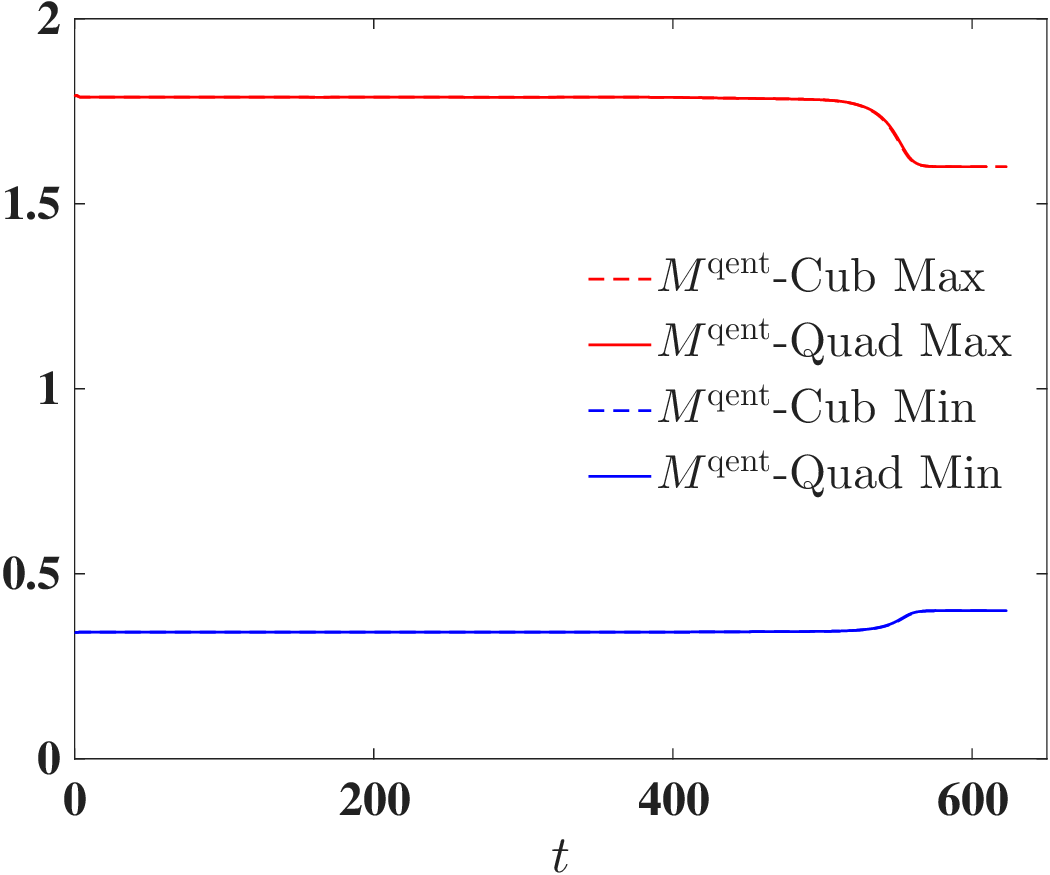}\label{fig:Eigenvalue of interface problem}}
    \caption{The comprasion of four cases for the interface evolution: area occupied by the nematic phase, total energy, and the eigenvalues of $\fM ^{\mathrm{qent}}$.}\label{fig:subfigures}
\end{figure}

\textbf{Isotropic--uniaxial nematic interfaces.}
We next examine the evolution of the interface between the isotropic and the uniaxial nematic phases.
To this end, we need to carefully choose the bulk energy coefficient $c_{02}$ such that two phases coexist. The value is $c_{02}\approx 13.1117$, and the corresponding uniaxial nematic solution gives $s_0\approx 0.1836$.
The elastic coefficients are chosen as 
$c_{21}=1.6\times 10^{-3}$,
$c_{22}=2.0\times 10^{-3}$,
$c_{24}=1.5\times 10^{-3}$.
The initial condition represents a sharp square-shaped isotropic--nematic interface (with a rotation of $\pi/4$), given by 
\begin{equation}\label{eq:interface-Q}
    Q(\bx,0)=s(\bx)
    \left(\e_1^2-\frac{\bbi}{3}\right),     s(\bx)=\left \{
    \begin{array}{ll}
        s_0, & |x-\pi|+|y-\pi|\le 1.5, \\
        0,   & \text{otherwise}.
    \end{array}\right.
\end{equation}
The preferred direction of the uniaxial nematic phase is the $x$-direction $\e_1$. 

We compute the four cases (with and without $\fM$, zero and nonzero $c_{24}$) with $\delta t=5\times 10^{-2}$.
We observe that in all four cases, the interface gradually shrinks and eventually arrive at a globally isotropic state.
The stages of evolution are similar for the four cases, which we illustrate in
Figure~\ref{fig:The evolution of the interface in Case 1}
using a few snapshots of $\fM ^{\mathrm{qent}}$--Cub.
The colorbar in Figure~\ref{fig:The evolution of the interface in Case 1}
represents the largest eigenvalue of $Q$. 
After the sharp corners of the region become rounded,
the square-shaped region gradually transforms into an ellipse.
The long axis of the ellipse coincides with the direction of the uniaxial nematic phase,
also drawn in Figure~\ref{fig:The evolution of the interface in Case 1}. 
The ellipse region then shrinks and finally vanishes.

On the other hand, the rate of interface evolution is different for the four cases. 
More specifically, it turns out that the presence of $\fM^{\mathrm{qent}}$ greatly accelerates the evolution, while the difference generated by the cubic $c_{24}$ term is not that evident. 
This can be acquired by plotting the area occupied by the nematic phase, defined as the $0.1$-contour of the maximum eigenvalue of $Q$ (Figure~\ref{fig:interface 0.12 evolution}).
The evolution of the free energy (Figure~\ref{fig:Total energy of interface problem}) further reflects evolution rate after the maximum eigenvalue of $Q$ grows less than $0.1$. 
We can see that the energy dissipation is met for the four cases.
After the energy of the two $\fM^{\mathrm{qent}}$ cases has reached that of the uniform isotropic state, for the two $L^2$ cases it still takes some time. 
Meanwhile, the rate of evolution is slightly faster for the two Cub cases. 
To further comprehend the effects caused by $\fM^{\mathrm{qent}}$, we plot the eigenvalues of $\Psi_4(\fM^{\mathrm{qent}})$ in Figure~\ref{fig:Eigenvalue of interface problem}.
The acceleration shall be closely related to the fact that the maximum eigenvalue is roughly five times of the minimum eigenvalue. 

\begin{figure}[htbp]
    \centering
    \subfigure[$t=1\times10^{-3}$.]{\includegraphics[width=0.25\textwidth]{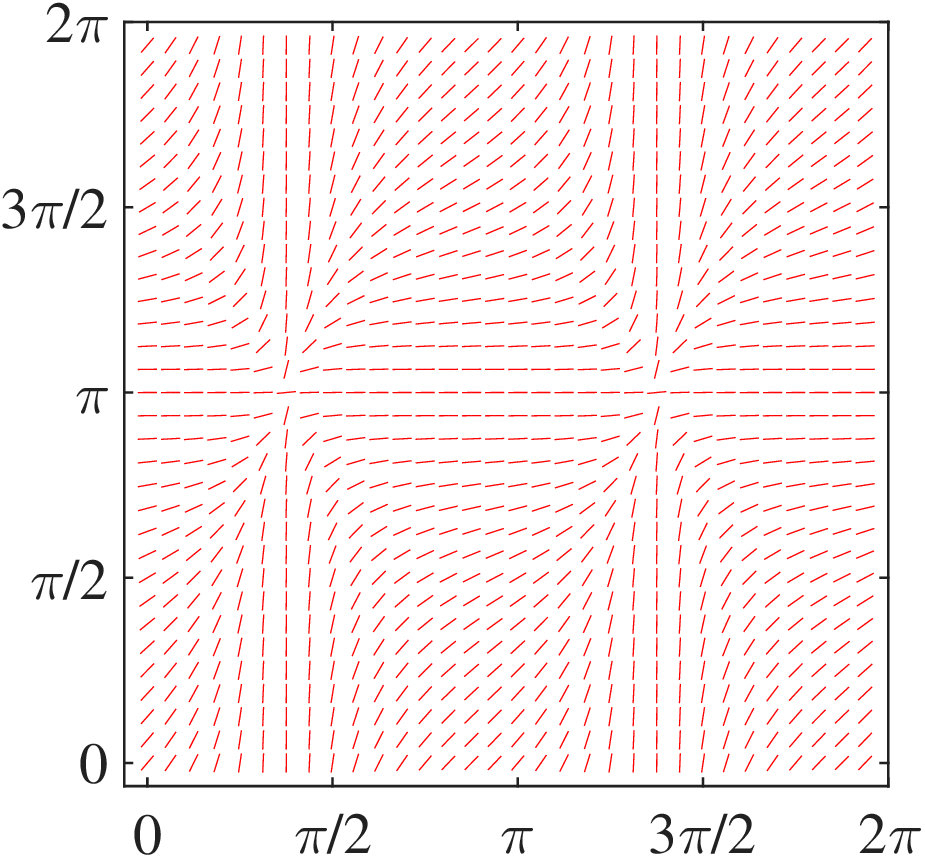}}\hfill
    \subfigure[$t=1\times10^{-2}$.]{\includegraphics[width=0.25\textwidth]{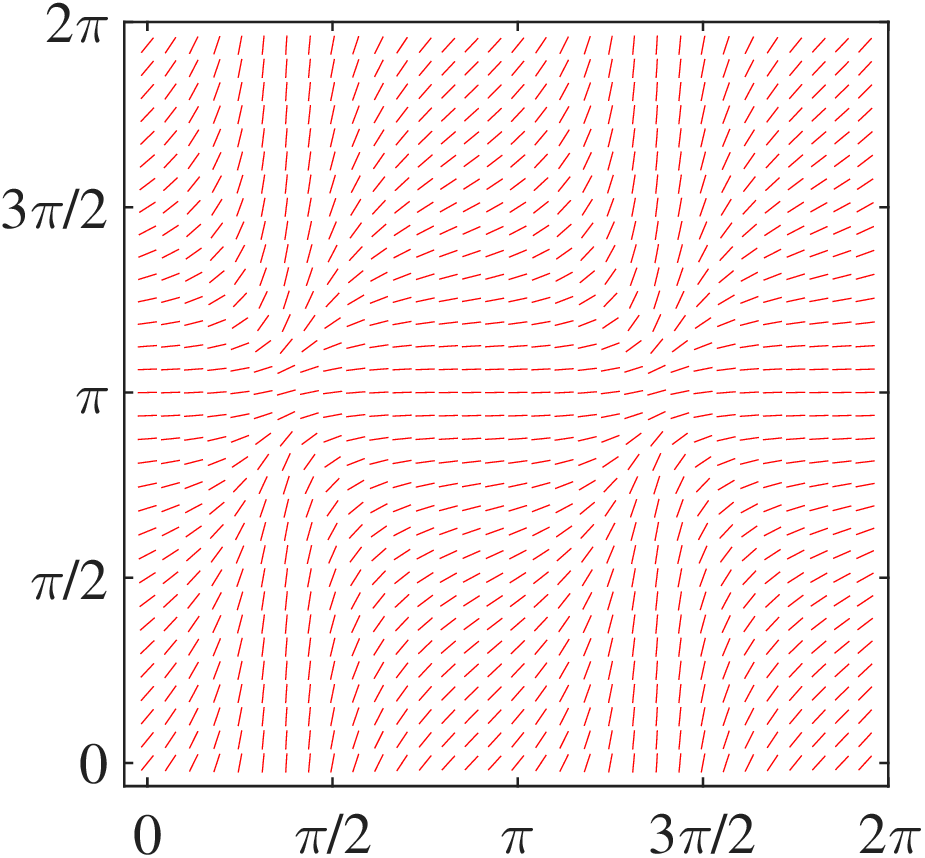}}\hfill
    \subfigure[$t=2\times10^{-2}$.]{\includegraphics[width=0.25\textwidth]{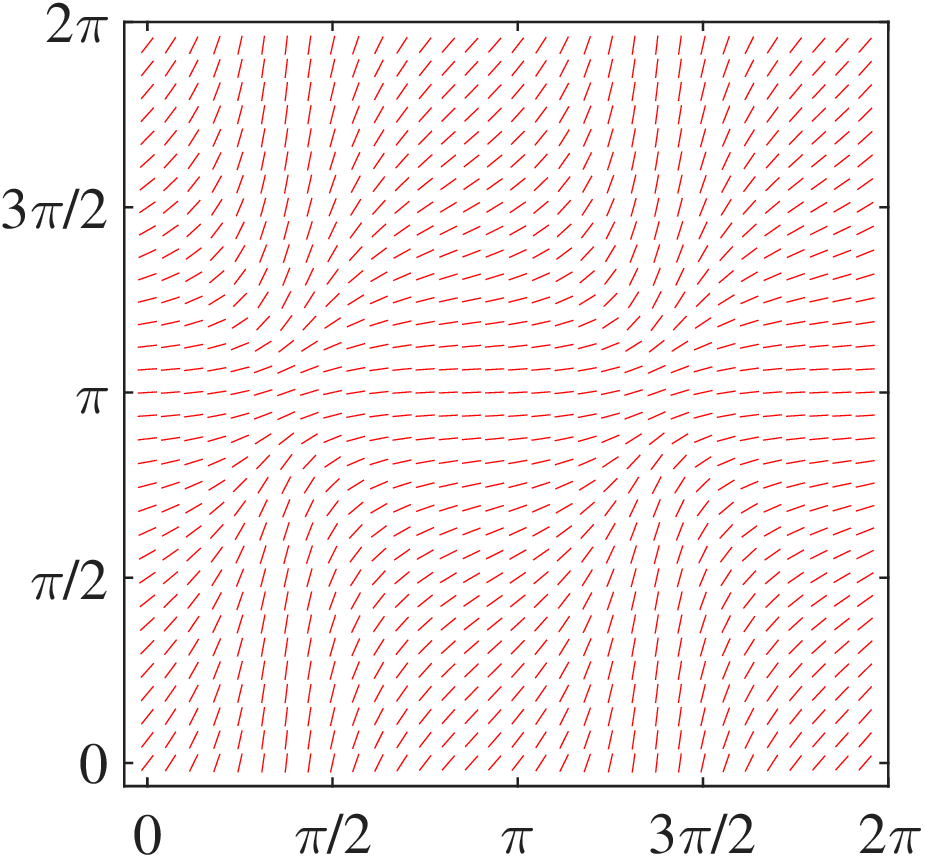}}\hfill
    \subfigure[$t=5\times10^{-2}$.]{\includegraphics[width=0.25\textwidth]{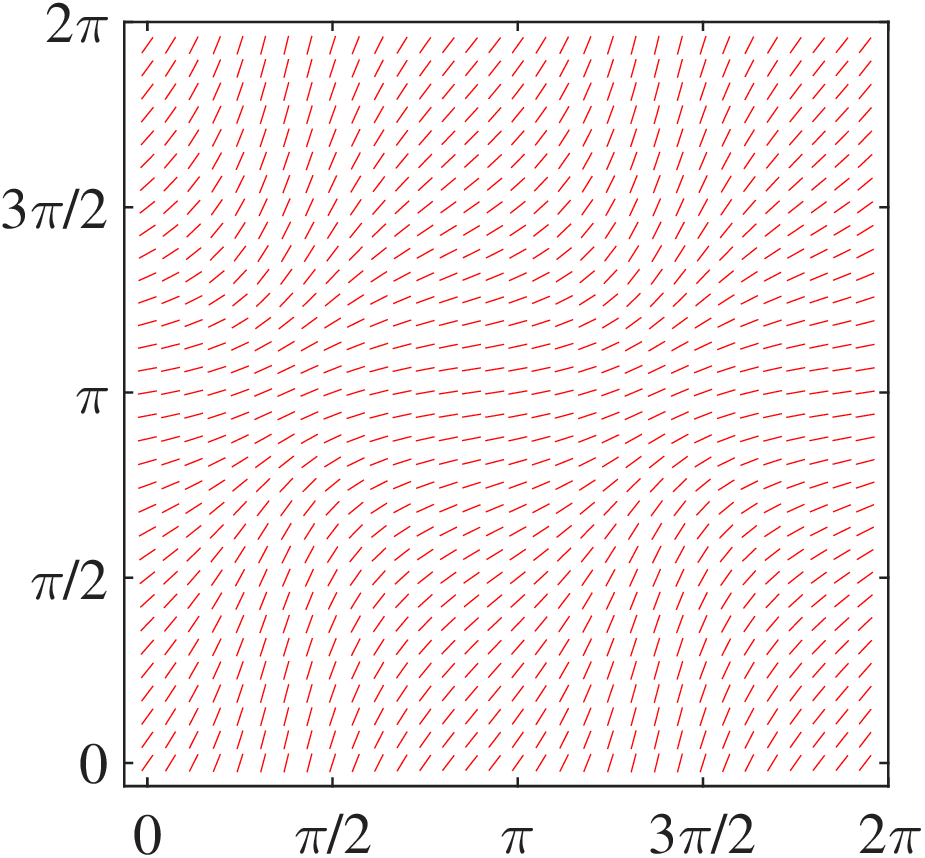}}\hfill
    \caption{The principal eigenvector of $Q$ in $\fM ^{\mathrm{qent}}$--Cub at different time.}\label{fig:The principal eigenvector for Case 1 at different time}
\end{figure}
\begin{figure}[htbp]
    \centering
    \includegraphics[width=1\textwidth]{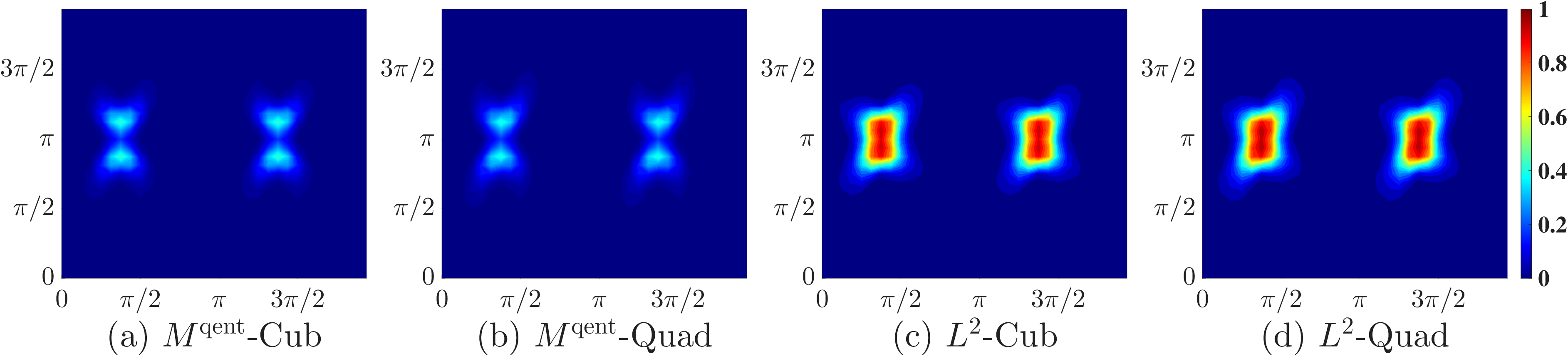}
    \caption{The biaxiality for four cases at $t = 0.02$.}\label{fig:biaxiality dd time}
\end{figure}
\begin{figure}[htbp]
    \centering
    \subfigure[The area enclosed by the $0.1$ contour of the biaxiality.]{\includegraphics[width=0.32\textwidth]{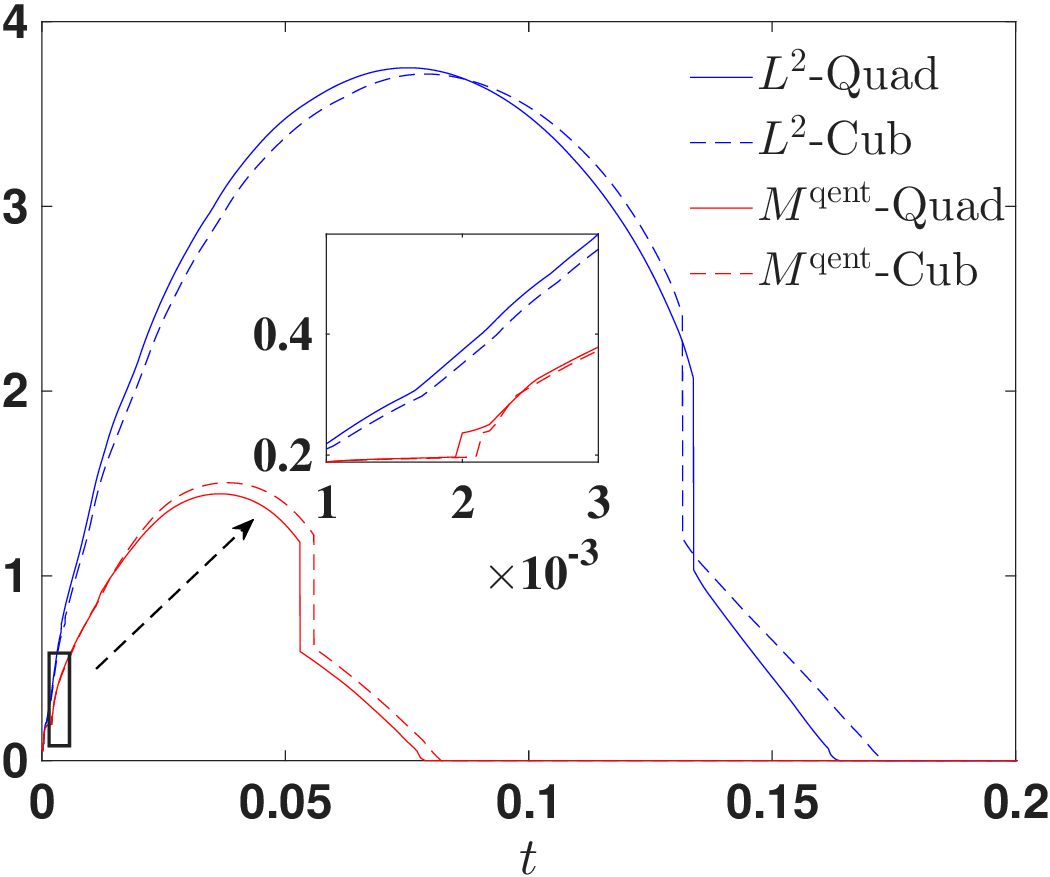}\label{bi_area}}\hfill
    \subfigure[Free energy.]{\includegraphics[width=0.33\textwidth]{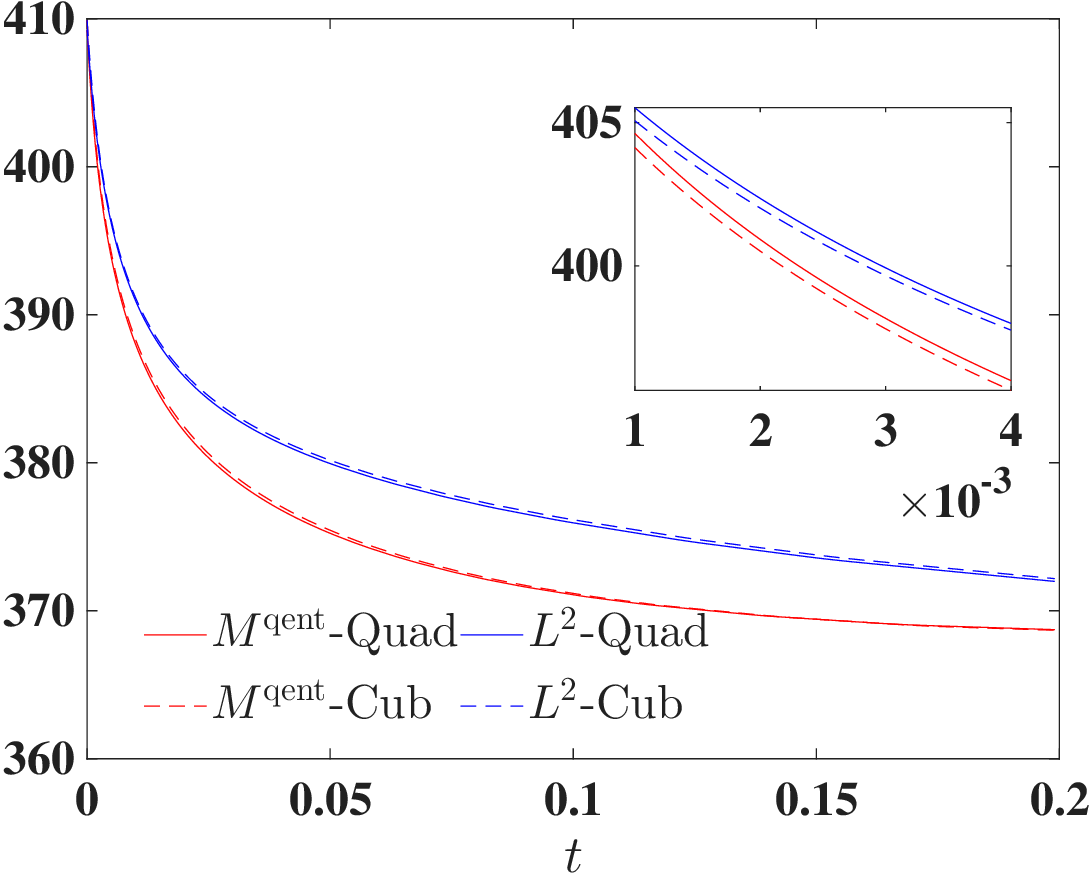}\label{defect_energy}}\hfill
    \subfigure[Numbers of Netwon iteration.]{\includegraphics[width=0.33\textwidth]{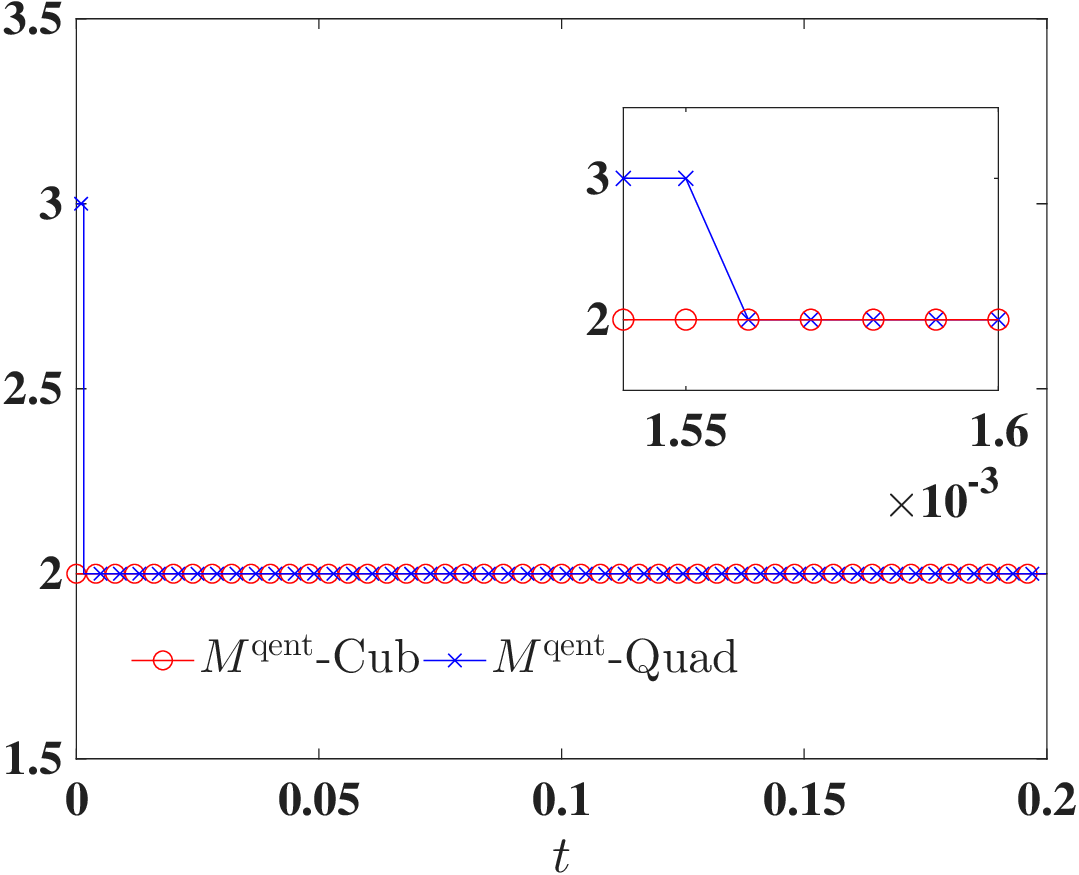}\label{defect_itr}}
    \caption{The comprasion of different cases for the defect evolution: the biaxial area, the free energy and the numbers of Netwon iteration.}\label{fig:the evolution of theta}
\end{figure}

\textbf{Defect evolutions.}
Set 
\[ c_{02} = 20,\,c_{21} = 1.6,\,c_{22} = 2.0,\,c_{24} = 1.5,\,s_0=0.63.\]
Let
\begin{equation*}
  \bu(x,y)=\Big(\cos\big(2(x-\frac{3}{16}\pi)\big)-1,\cos(y-\pi)-1,0\Big)^t,\quad
  \n(x,y)=\frac{\bu}{|\bu|}.
\end{equation*}
It generates discontinuities of \(\n(\bx)\) at two points $({3\pi}/{16},\pi)$ and $({19\pi}/{16},\pi)$.
We choose an initial value $Q_0$ such that it possesses the principal eigenvector $\n(x,y)$ and takes zero at two points where $\n$ is discontinuous, 
\begin{equation*}
    Q_0(x,y)=s_0 \big(1-\exp(-10|\bu|)\big)(\n^2-\frac{\bbi}{3}).
\end{equation*}
The computation is carried out with $\delta t=10^{-5}$.

We first examine the principal eigenvector of $Q$. 
In the four cases (with and without $\fM$, zero and nonzero $c_{24}$), we find that the evolution of the principal eigenvector largely goes through the same stages.
As an example, the $\fM ^{\mathrm{qent}}$--Cub case is drawn in Figure~\ref{fig:The principal eigenvector for Case 1 at different time}, illustrating the disengagement of two discontinuities. 

Despite the similarities for the four cases given by the evolution stages of the principal eigenvector, there are significant differences between the configurations.
To reveal the differences, we investigate the biaxiality quantified by $1-6{{\left(\tr(Q^3)\right)}^2}/{{\left(\tr(Q^2)\right)}^3}\in [0,1]$ for nonzero $Q$.
According to the chosen initial value, the biaxiality is zero at $t=0$ and emerges with $t$ increasing. 
At $t=0.02$, the biaxiality for the four cases is presented in Figure~\ref{fig:biaxiality dd time}, where it is easy to notice that the two $L^2$ cases have greater biaxiality than the two $\fM^{\mathrm{qent}}$ cases. 
This difference actually does not come from the different rates of defect evolution. 
Indeed, we plot the area enclosed by the $0.1$-contour as a function of time $t$ (Figure~\ref{bi_area}), and find that the area is significantly smaller for the two $\fM^{\mathrm{qent}}$ cases.
In other words, one major effect of $\fM^{\mathrm{qent}}$ is that it leads to smaller biaxial regions during the defect evolution.

The free energy evolution is plotted in Figure~\ref{defect_energy}, which still decreases with time. 
Combined with Figure~\ref{bi_area}, it is clear that the disengagements of defects are much faster for the two $\fM^{\mathrm{qent}}$ cases.
In contrast, whether to include a cubic $c_{24}$ term in the free energy appears not to make a big difference. 
In addition, since the numerical scheme is nonlinear, we also look into the number of nonlinear iterations for Newton's method (Figure~\ref{defect_itr}).
For most time steps, only two iterations are needed, thus nonlinearity does not seem to affect the efficiency. 

\section{Concluding Remarks}\label{concl}
For the tensor dynamics derived from the molecular models, we propose the quasi-entropy closure approximation, which is an elementary function proposed for the entropy term in the free energy. 
It is done by reformulating the Bingham closure as a minimization problem of the original entropy function, followed by substituting the original entropy with the quasi-entropy. 
The quasi-entropy closure approximation has the same symmetry properties as the Bingham closure. 
Together with the quasi-entropy in the free energy, in the low velocity approximation we construct a tensor gradient flow with the quasi-entropy. 
Such a tensor model maintains the gradient flow structure of the molecular model and the tensor model with the Bingham closure.
In particular, the $Q$ is constrained within $\Qset$, the dissipation operator given by the higher-order tensor is guaranteed to be positive definite, and the elastic energy may possess a cubic term that is bounded  from below as a result of $Q\in\Qset$. 

The quasi-entropy closure approximation can be done by minimizing an elementary function w.r.t. three variables $a_1$, $a_2$, $a_3$. 
As a result, we write down a first-order-in-time scheme preserving the eigenvalue constraints and energy dissipation, which can be implemented easily from the fact that the discretization of the closure approximation is explicit and thus decoupled from the scheme. 
We examine the evolution of interface and defects, finding that the fourth-order tensor would significantly affect the dynamical behaviors.

When the velocity is not discarded, the tensor dynamics forms a coupled system with the Navier--Stokes equations, where higher-order tensors play different roles.
The quasi-entropy closure approximation can also be incorporated in this system, for which it requires to study whether the essential structures are maintained, and how to construct efficient and accurate numerical methods. 
After that, it would be available to systematically carry out numerical simulations and make meaningful comparisons with previous models. 
The quasi-entropy closure approximation can also be extended to other rigid molecules without axisymmetry.
In this case, more order parameter tensors are included \cite{Xu_2020,xu_2022_kernel,xu_2020_general,Xu_ye_zhang_2018,Xu_Zhang_2017}, leading to many more higher-order tensors to be handled by the closure approximation. 
We expect to investigate these problems in the near future. 

\section*{Acknowledgements}
This work is partially supported by Beijing Natural Science Foundation (No. JQ25002), National Natural Science Foundation of China (Nos. 12288201, 12371414, 12571412, 12171041), National Key R\&D Program of China (No. 2023YFA1008802), the Strategic Priority Research Program of the Chinese Academy of Sciences (No. XDB0510201). 

\bibliographystyle{siamplain}
\bibliography{bibfile}

\end{document}